\documentclass{ijuc}
\usepackage[pdftex]{graphics}
\usepackage{url,xcolor}
\usepackage{mathtools}
\usepackage{bookmark}
\usepackage{theoremref}
\usepackage{enumitem}
\usepackage[most,many,breakable]{tcolorbox}
\usepackage{varwidth}
\usepackage{etoolbox}
\usepackage{capt-of}
\usepackage{nameref}
\usepackage{tikz-cd}
\usepackage{cancel}
\usepackage{pgfplots}
\pgfplotsset{compat=newest}
\usepgfplotslibrary{patchplots}
\usepackage{anyfontsize}
\usepackage{sectsty}
\tcbuselibrary{theorems,skins,hooks}
\usepackage{hyperref,cleveref}
\usepackage{subcaption}
\usepackage[disable]{todonotes}
\usepackage{anyfontsize}
\usepackage{amssymb}
\usepackage{amsmath}
\usepackage{amsfonts}
\usepackage{amsthm}
\usepackage{latexsym}
\usepackage{float}
\usepackage{comment}
\usepackage{txfonts}
\usepackage{skak}
\usepackage{scalefnt}
\usepackage{forloop}
\usepackage{fp}

\usetikzlibrary{patterns}
\tikzset{every picture/.style={line width=0.75pt}} 

\newcommand{\emptycell}[4]{
    \tikzstyle{roundnode}=[circle,draw = #3,fill=white, minimum size = 56];
    \node[roundnode] (#4) at (#1,#2) {};
    \draw[color = #3] (#1,#2 -1) -- (#1,#2 +1);
}

\newcommand{\leftmoover}[4]{
    \tikzstyle{roundnode}=[circle,draw = #3,fill=white, minimum size = 56];
    \node[roundnode] (#4) at (#1,#2) {};
    \draw[color = #3] (#1,#2 -1) -- (#1,#2 +1);
    \begin{scope}
        \clip (#1,#2-1) rectangle (#1-1,#2+1);
        \draw[color = #3][fill = #3] (#1,#2) circle(1);
    \end{scope}
}

\newcommand{\rightmoover}[4]{
    \tikzstyle{roundnode}=[circle,draw = #3,fill=white, minimum size = 56];
    \node[roundnode] (#4) at (#1,#2) {};
    \draw[color = #3] (#1,#2 -1) -- (#1,#2 +1);
    \begin{scope}
        \clip (#1,#2-1) rectangle (#1+1,#2+1);
        \draw[color = #3][fill = #3] (#1,#2) circle(1);
    \end{scope}
}

\newcommand{\rightleft}[4]{
    \tikzstyle{roundnode}=[circle,draw = #3,fill=#3, minimum size = 56];
    \node[roundnode] (#4) at (#1,#2) {};
}

\newcommand{\redcell}[3]{
    \tikzstyle{roundnode}=[circle,draw = myblack,fill=redstate, minimum size = 56];
    \node[roundnode] (#3) at (#1,#2) {};
}

\newcommand{\redcellinternal}[3]{
    \tikzstyle{roundnode}=[circle,draw = internal,fill=redstate, minimum size = 56];
    \node[roundnode] (#3) at (#1,#2) {};
}

\newcommand{\greencellgray}[3]{
    \tikzstyle{roundnode}=[circle,draw = border,fill=greenstategray, minimum size = 56];
    \node[roundnode] (#3) at (#1,#2) {};
}

\newcommand{\greencellinternal}[3]{
    \tikzstyle{roundnode}=[circle,draw = internal,fill=greenstate, minimum size = 56];
    \node[roundnode] (#3) at (#1,#2) {};
}

\newcommand{\edge}[3]{
    \draw[-stealth,color=#3,line width=1.75mm] (#1) to (#2);
}

\newcommand{\vertex}[6]{
    \ifnum #4=0
    {\emptycell{#1}{#2}{#3}{#6}}
    \fi
    \ifnum #4=1
    {\rightmoover{#1}{#2}{#3}{#6}}
    \fi
    \ifnum #4=2
    {\leftmoover{#1}{#2}{#3}{#6}}
    \fi
    \ifnum #4=3
    {\rightleft{#1}{#2}{#3}{#6}}
    \fi
    \draw (#1+1,#2) node[right]{{\Huge \color{#3} #5}};
}

\definecolor{myblack}{RGB}{0,0,0}
\definecolor{border}{RGB}{206,206,206}
\definecolor{port}{RGB}{155,155,155}
\definecolor{darkport}{RGB}{195,195,195}
\definecolor{setBorder}{RGB}{80,227,194}
\definecolor{internal}{RGB}{38,105,185}
\definecolor{greenstate}{RGB}{80,183,100}
\definecolor{greenstategray}{RGB}{158,220,170}
\definecolor{redstate}{RGB}{201,109,76}
\definecolor{very-light-gray}{gray}{0.99}
\definecolor{myb}{RGB}{45, 111, 177}
\DeclareRobustCommand{\sizeFtwo}[0]{\Large} 
\DeclareRobustCommand{\sizeFtwop}[0]{\normalsize} 
\DeclareRobustCommand{\sizeFthree}[0]{\LARGE} 
\DeclareRobustCommand{\sizeFthreep}[0]{\normalsize} 
\definecolor{internalStates}{RGB}{240,240,240}
\DeclareRobustCommand{\sizeFfour}[0]{\LARGE \scalefont{0.9}} 
\DeclareRobustCommand{\sizeFfourp}[0]{\normalsize} 
\DeclareRobustCommand{\sizeFfours}[0]{\LARGE} 
\DeclareRobustCommand{\sizeFsixp}[0]{\LARGE} 
\DeclareRobustCommand{\sizeFsixs}[0]{\Huge} 
\DeclareRobustCommand{\sizeFsix}[0]{\Huge \scalefont{2}}

\DeclareRobustCommand{\sizehalfCA}[0]{\LARGE \scalefont{1.1}} 
\DeclareRobustCommand{\sizehalfCAb}[0]{\Huge \scalefont{1.5}} 
\DeclareRobustCommand{\sizebighalfCA}[0]{\LARGE} 
\DeclareRobustCommand{\sizebighalfCAb}[0]{\LARGE} 
\DeclareRobustCommand{\sizesimu}[0]{\large} 
\DeclareRobustCommand{\sizesimub}[0]{\LARGE} 
\DeclareRobustCommand{\sizesimup}[0]{\large} 
\DeclareRobustCommand{\sizebigsimu}[0]{\Huge \scalefont{0.95}} 
\DeclareRobustCommand{\sizebigsimub}[0]{\Huge \scalefont{1.2}} 

\definecolor{myg}{RGB}{56, 140, 70}
\definecolor{myb}{RGB}{45, 111, 177}
\definecolor{myr}{RGB}{199, 68, 64}
\definecolor{mytheorembg}{HTML}{F2F2F9}
\definecolor{mypropbg}{HTML}{F6F6FA}
\definecolor{mylemmabg}{HTML}{FDFDFE}
\definecolor{mytheoremfr}{HTML}{00007B}
\definecolor{mypropfr}{HTML}{00297B}
\definecolor{mylemmafr}{HTML}{00487B}
\definecolor{myexamplebg}{HTML}{E2EBE1}
\definecolor{myexamplefr}{HTML}{164202}
\definecolor{myexampleti}{HTML}{2A7F7F}
\definecolor{mydefinitbg}{HTML}{E5E5FF}
\definecolor{mydefinitfr}{HTML}{3F3FA3}
\definecolor{notesgreen}{RGB}{0,162,0}
\definecolor{myp}{RGB}{197, 92, 212}
\definecolor{mygr}{HTML}{2C3338}
\definecolor{myred}{RGB}{127,0,0}
\definecolor{myyellow}{RGB}{169,121,69}
\definecolor{OrangeRed}{HTML}{ED135A}
\definecolor{Dandelion}{HTML}{FDBC42}
\definecolor{light-gray}{gray}{0.95}
\definecolor{very-light-gray}{gray}{0.99}
\definecolor{Emerald}{HTML}{00A99D}
\definecolor{RoyalBlue}{HTML}{0071BC}

\newtcbtheorem[number within=section]{Theorem}{Theorem}
{%
	enhanced,
	breakable,
	colback = mytheorembg,
	frame hidden,
	boxrule = 0sp,
	borderline west = {2.5pt}{0pt}{mytheoremfr},
	sharp corners,
	detach title,
	before upper = \tcbtitle\par\smallskip,
	coltitle = mytheoremfr,
	fonttitle = \bfseries\sffamily,
	description font = \mdseries,
	separator sign none,
	segmentation style={solid, mytheoremfr},
}
{th}

\newtcolorbox{Theoremcon}
{%
	enhanced
	,breakable
	,colback = mytheorembg
	,frame hidden
	,boxrule = 0sp
	,borderline west = {2.5pt}{0pt}{mytheoremfr}
	,sharp corners
	,description font = \mdseries
	,separator sign none
}

\newtcbtheorem[number within=section]{corollary}{Corollary}
{%
	enhanced,
	breakable,
	colback = mytheorembg,
	frame hidden,
	boxrule = 0sp,
	borderline west = {2.5pt}{0pt}{mytheoremfr},
	sharp corners,
	detach title,
	before upper = \tcbtitle\par\smallskip,
	coltitle = mytheoremfr,
	fonttitle = \bfseries\sffamily,
	description font = \mdseries,
	separator sign none,
	segmentation style={solid, mytheoremfr},
}
{cor}

\newcommand{\descri}[1]{\mdseries\textbf{(#1)}}
\newtcbtheorem[number within=section]{lemma}{Lemma}
{%
	enhanced,
	breakable,
	colback = mypropbg,
	frame hidden,
	boxrule = 0sp,
	borderline west = {1.5pt}{0pt}{mylemmafr},
	sharp corners,
	attach title to upper,
	before upper = \tcbtitle\par\smallskip,
	coltitle = mylemmafr,
	fonttitle = \bfseries\sffamily,
    description formatter=\descri,
	separator sign none,
    terminator sign=., 
	segmentation style={solid, mylemmafr},
}
{lem}

\newtcbtheorem[number within=section]{proposition}{Proposition}
{%
	enhanced,
	breakable,
	colback = mypropbg,
	frame hidden,
	boxrule = 0sp,
	borderline west = {2pt}{0pt}{mypropfr},
	sharp corners,
	attach title to upper,
	before upper = \tcbtitle\par\smallskip,
	coltitle = mypropfr,
	fonttitle = \bfseries\sffamily,
    description formatter=\descri,
	separator sign none,
    terminator sign=., 
	segmentation style={solid, mypropfr},
}
{prop}

\newtcbtheorem[number within=section]{remark}{Remark}
{%
	enhanced,
	breakable,
	colback = very-light-gray,
	frame hidden,
	boxrule = 0sp,
	borderline west = {1.5pt}{0pt}{myb},
	sharp corners,
	attach title to upper,
	before upper = \tcbtitle\par\smallskip,
	coltitle = black,
	fonttitle = \bfseries\sffamily,
    description formatter=\descri,
	separator sign none,
    terminator sign=., 
	segmentation style={solid, mytheoremfr},
}
{rem}

\newtcbtheorem[number within=section,use counter=mdexample]{example}{Example}
{%
    colback=green!5,
    colframe=green!35!black,
    fonttitle=\bfseries
}
{ex}

\newtheorem{definition}{Definition}

\makeatletter
\let\orgdescriptionlabel\descriptionlabel
\renewcommand*{\descriptionlabel}[1]{%
  \let\orglabel\label
  \let\label\@gobble
  \phantomsection
  \edef\@currentlabel{#1}%
  \let\label\orglabel
  \orgdescriptionlabel{#1}%
}
\makeatother

\newcommand{\RX}[2]{}
\newcommand{\MC}[1]{{\color{lowgreen}#1}}
\renewcommand{\MC}[1]{{#1}}

\newcommand{\RD}[1]{}

\newcommand{\reduced}[2]{#2}

\newcommand{\Z}{\mathbb{Z}}
\newcommand{\Past}{\mathrm{Past}}

\newcommand{\closeddomain}{{stability}}
\newcommand{\I}{\mathrm{I}}
\newcommand{\B}{\mathrm{B}}
\newcommand{\V}{\mathrm{V}}

\newcommand{\E}{\mathrm{E}}

\newcommand{\pof}{\pi.}
\renewcommand{\P}{\pof\E}

\newcommand{\restr}[2]{\mathcal{N}_{#1}(#2)}
\newcommand{\restri}[1]{{\mathcal{N}_{#1}}}
\newcommand{\crestr}[2]{{{\overline{\mathcal{N}}}_{#1}(#2)}}
\newcommand{\crestri}[1]{{{\overline{\mathcal{N}}}_{#1}}}

\renewcommand{\sp}[1]{\lfloor #1\rfloor}

\newcommand{\namesubseteq}{\,\hat{\subseteq}\,}

\newcommand{\restriA}[1]{{\mathcal{N}_{#1}}}

\newcommand{\valid}[2]{\Omega_{#1}}
\newcommand{\dangling}[1]{{\E_{#1}\setminus (\I_{#1}\!:\!\pi)^2}}
\newcommand{\graphvalidexplicit}[2]{\Gamma_{{#2}}}
\newcommand{\graphvalid}[2]{\graphvalidexplicit{#1}{#2}}

\newcommand{\upast}[1]{{\cal P}_{#1}}

\renewcommand{\sp}[1]{\lfloor #1\rfloor}

\newcommand{\sshift}{\mathcal{S}}

\newcommand{\VL}[1]{}
\newcommand{\withproofs}[1]{}

\newcommand{\calG}{{\ensuremath{\mathcal{G}}}}

\newcommand{\calV}{{\ensuremath{\mathcal{V}}}}

\newcommand{\calX}{{\ensuremath{\mathcal{X}}}}

\newcommand{\X}{\calX}

\newcommand{\complete}[0]{\calG_c}
\newcommand{\Scomplete}[0]{\sshift_c}
\newcommand{\maxi}[0]{^{\infty}}
\newcommand{\maxgraph}[0]{maximal }

\newcommand{\xCone}[0]{\mathcal{C}_x}
\newcommand{\xcone}[1]{\mathcal{C}_{#1}}
\newcommand{\diskA}[1]{\sshift_{\restriA{#1}}}

\definecolor{lowgreen}{rgb}{0.40390625,0.6109375,0.09109375}

\begin{document}

\title{Causal graph rewriting}

\author{Pablo Arrighi\inst{1}\email{pablo.arrighi@universite-paris-saclay.fr} \and
Marin Costes\inst{2}\email{marin.costes@ens-paris-saclay.fr} \and
Luidnel Maignan\inst{3}\email{luidnel.maignan@u-pec.fr}
}

\institute{Universit\'e Paris-Saclay, Inria, CNRS, LMF, 91190 Gif-sur-Yvette, France
\and
Centre for Quantum Information and Communication, \'Ecole polytechnique de Bruxelles, CP 165/59, Universit\'e libre de Bruxelles, 1050 Brussels, Belgium
\and
Univ Paris Est Creteil, LACL, 94000 Creteil, France
}

\def\received{}

\maketitle

\begin{abstract}
We introduce causal graph rewriting, a model of computation in which local rules are applied on directed acyclic graphs in an asynchronous manner. The non-determinism arising from asynchrony is disciplined by the oriented edges, which must be understood as both computational dependencies and locality constraints---and are themselves subject to the rewriting. We illustrate the model through two examples: a particle system, and a time-dilation example---reminiscent of general relativity. We study the well-definedness and properties of induced subgraphs and graph composition, which isolate and recombine the region affected by a rewrite. We then study locality with respect to these constructions, showing how a local rewrite preserves positions, borders, and context. Our main result concerns sequential composition: locality extends from single rule applications to arbitrary valid sequences, as any local rule is automatically $*$-local and $*$-extensive. We also formalise and prove the simulation of any one-dimensional cellular automaton.
\end{abstract}

\keywords{Causal graph dynamics, Cellular automata, Distributed computation, Task dependencies, Lightweight synchronization, Space-like cut, Foliation, Dynamical geometry}

\section{Introduction}


\paragraph*{Dynamical systems, synchronism, asynchronism}

{\em A dynamical system} describes the iterative evolution of a configuration, with a global update mapping the configuration at time $t$ to the configuration at time $t+1$.
When configurations have a spatial structure, this global update is synchronous: all sites advance together. Such configurations are often grid-based (e.g. for representing particles \cite{Zuse,Wolf-Gladrow}, fluids \cite{ChopardDroz,RothmanZaleski}, traffic jams \cite{NagelSchreckenberg}, demographics and regional development or consumption \cite{Bruun,WhiteEngelen}). But they can also be graph-based (e.g. for representing physical systems \cite{MeyerLove}, computer processes \cite{PapazianRemila}, biochemical agents \cite{MurrayDicksonVol2}, economical agents \cite{KozmaBarrat}, users of social networks, etc.).
One can think of cellular automata \cite{Neumann,Langton}, lattice-gas automata \cite{RothmanZaleski, Wolf-Gladrow}, parallel graph rewriting \cite{Ehrig}, causal graph dynamics~\cite{ArrighiIC,ArrighiRCGD,ArrighiBRCGD} or the so-called global transformations~\cite{DBLP:conf/gg/MaignanS15,DBLP:conf/mfcs/FernandezMS22,DBLP:conf/gg/MaignanS24} for instance.

{\em Synchronism} is often criticised however, on the basis that: 1/ In some contexts synchronisation mechanisms are considered a costly resource, which hinders the use of parallelism e.g. to achieve high performance computing~\cite{AsynchronousSimsSync}. 2/ It is often dubbed as introducing too much artefacts as far as physical simulation is concerned. Some authors argue that nature has no central clock, and thus cannot apply the same local rule everywhere at once~\cite{AsynchronousPhysical}. This is a fair point as relativistic physics clearly departs from the idea of a global time across the universe. In particular the `time covariance' symmetry entails that, in any real system, it is legitimate to evolve just a small region of space, whilst keeping the rest of it unchanged---in some precise sense.

{\em Asynchronism}---the application of the local rule at totally arbitrary places, non-deterministically---fits this picture better
and is also well studied, especially when it is more compelling to leave the evaluation strategy under-determined.
This is typically the case in rewriting theory, when the rewrite rules arise as a computationally-oriented version of an equality, e.g. $1+1\to 2$. Then, term $1+1+1$ may evolve into $2+1$ or $1+2$, non-deterministically. This feels right, because: 1/ an underlying symmetry tells us there is no reason to favour one over the other and 2/ ultimately, if what matters is the end result, we are reassured by the fact that $2+1\to 3$ and $1+2\to 3$.

\paragraph*{From cellular automata to causal graph rewriting.}

The models recalled above---cellular automata~\cite{Hedlund} with their static grid-based space and the more general causal graph dynamics~\cite{ArrighiCGD,ArrighiCayley} with their dynamical graph-based space---remain synchronous: a global clock updates every site at once. The next step along this line is to drop the global clock while still obtaining a well-determined space-time. That step is \emph{causal graph rewriting} (CGR): the asynchronous counterpart of causal graph dynamics, in which local rules act on directed acyclic graphs whose oriented edges encode both computational dependencies and locality constraints---and are themselves rewritten.

This asynchronous model, together with a space-time determinism theorem, was first presented in a short conference paper~\cite{ArrighiSpaceTimeDetWRLA}. There, local conditions on the rewrite rule are shown to suffice so that asynchronous applications still produce well-determined events in the unfolding of the graph, independently of the scheduling. The present article completes and clarifies that account, using proofs from the thesis~\cite{CostesPhD}: it develops the motivations and design choices of the model, and supplies the detailed proofs of the structural properties on which space-time determinism rests.

The mechanism that makes the step possible is inspired by the block decomposition of reversible and quantum cellular automata~\cite{ArrighiOverview,FarrellyReview,Durand-LoseBlock,ArrighiBLOCKREP}. Successive local updates become vertices of the DAG: each vertex is an event---an application of the local rule at some position and time tag---and may fire only once its incoming dependencies have been resolved. Scheduling is therefore constrained by a causal order, rather than by a global clock. Figure~\ref{fig:causal dynamical systems} locates this construction in the classical-to-reversible-to-quantum trajectory already followed by cellular automata and causal graph dynamics.

\begin{figure}
    \centering
    \begin{tabular}{ |c|c|c|c| } 
        \hline
        & \textbf{\small Cellular automata} & \textbf{\small Graph dynamics} & \textbf{\small Graph rewriting}\\
        \hline
        \textbf{Quantum} & QCA~\cite{ArrighiOverview,FarrellyReview} & QCGD~\cite{ArrighiQCGD,ArrighiQNT} & {QCGR?} \\
        \hline
        \textbf{Reversible} & RCA~\cite{KariBlock,Durand-LoseBlock} & RCGD~\cite{ArrighiRCGD} & RCGR~\cite{CostesPhD} \\
        \hline
        \textbf{Classical} & CA & CGD~\cite{ArrighiCGD} & \textcolor{internal}{CGR}~\cite{ArrighiSpaceTimeDetWRLA,CostesPhD}\\ 
        \hline
    \end{tabular}
    \caption{\emph{Causal dynamical systems.} Cellular automata are grid-based synchronous dynamical systems. Causal graph dynamics generalise this to graphs, allowing for dynamical geometry. Causal graph rewriting is their asynchronous counterpart: it was introduced in~\cite{ArrighiSpaceTimeDetWRLA}, developed in the thesis~\cite{CostesPhD}, and is the subject of the present article (highlighted in dark blue). Extending this line to the quantum case (QCGR) is left as future work.}
    \label{fig:causal dynamical systems}
\end{figure}


\paragraph*{Contributions.}
Relative to the short conference paper~\cite{ArrighiSpaceTimeDetWRLA}, this article contributes the following.

\emph{(1) Structural properties.} The formal definition of causal graph rewriting---port-labelled directed acyclic graphs, neighbourhood schemes, and local rules---was already given in~\cite{ArrighiSpaceTimeDetWRLA}. Here we focus on illustrating \emph{why} this precise definition is needed: so that the basic operations on graphs, taking induced subgraphs and composing them via $\sqcup$, remain well-behaved, and in turn yield the structural properties of local rules (position preservation, border preservation, locality) on which the rest of the development relies.

\emph{(2) Examples and expressivity.} We develop in detail the constructions only sketched in~\cite{ArrighiSpaceTimeDetWRLA}. The particle-system and time-dilation examples are given precise local rules and space-time diagrams, illustrating respectively asynchronous simulation and genuinely asynchronous behaviour. We moreover turn the cellular-automaton simulation, based on a `marching soldiers' encoding~\cite{WeakConsistencyGacs,MarchingNehaniv}, into a formal result, proving that any one-dimensional cellular automaton can be simulated within the framework (Th.~\ref{th:Simulation of synchronous cellular automata}).

\emph{(3) Robustness.} We prove that locality is preserved under sequential composition of rule applications: any local rule is automatically $*$-local and $*$-extensive, meaning that the neighbourhood and locality properties that hold for a single rule application extend canonically to arbitrary valid sequences. This is the main technical result of the paper (Th.~\ref{th: star ext and star loc}). Preserving these structural properties through composition is a prerequisite of any generic argument about the model that involves sequential rule application. This closure property is neither stated nor proved in~\cite{ArrighiSpaceTimeDetWRLA}, although it is needed in the proof of its space-time determinism theorem.



\paragraph*{Related work.}

{\em Graph rewriting.} Geometry is dynamical in our work. We thus hope it makes a useful addition to the already wide literature on Graph Rewriting~\cite{RozenbergBook,EhrigBook}.
We are aware that the dominating vocabulary to describe them is now that of Category theory~\cite{LoweAlgebraic,Taentzer,HarmerFundamentals}, in which ways of combining non-commuting rules~\cite{EchahedCombiningRules} and notions of space-time diagrams have been developed~\cite{UnfoldingAdhesivePawel,BehrStochastic,UnfoldingKonig,UnfoldingReckel}.
We instead use the vocabulary of dynamical systems, as we came to consider Graph Rewriting through a series of works generalising CA to synchronous, causal graph dynamics~\cite{ArrighiCayleyNesme}, and tilings to graph subshifts~\cite{ArrighiSubshifts}. We are confident that abstracting away the essential features of our formalism could yield interesting categorical frameworks, e.g. à la~\cite{Maignan}.

{\em Dynamical networks.} The closest works however turn out to come from varied communities. In algorithmic complexity, \cite{AsynchronousSimsSync} uses a DAG of dependencies representation to reduce the synchronisation costs of simulating a class of synchronous algorithms---we use it in the more dynamical systems context and in order to relax synchronism altogether, whilst aiming at space-time determinism. In computational Physics, \cite{GorardWofram} promotes the lattice of dependencies of local rule applications to a notion of space-time, and advocates a notion of `causal invariance' based on the unicity of this lattice, as formalised in the context of string rewriting~\cite{CausalInvarianceOnStrings}---we identify local conditions expected to achieve it. In the network reliability community, \cite{WeakConsistencyGacs} obtains a result on asynchronous simulation of cellular automata that is related to our setting---our local rules are allowed to modify the neighbouring vertices and the graph itself.


{\em Relativistic Physics.} Dropping the global clock is not only a natural computer-science extension of causal graph dynamics: it is needed to model phenomena that are fundamentally asynchronous. The DAG of dependencies is both a constraint upon the evolution and a subject of the evolution. Rewriting the DAG itself yields effects such as time dilation (Fig.~\ref{fig : Space time diagram example 2}), which have no synchronous counterpart. This is in line with seeking mathematically sound, constructive frameworks for discrete models of general relativity~\cite{Sorkin}. In General Relativity the primary object is 4D space-time, which can still be cut into successive snapshots (`space-like cuts'); the slicing may be irregular, so that one region of a slice may have evolved further than another---as modelled by asynchronism.


\paragraph*{Plan.}
Sec.~\ref{sec : preliminaries} gently introduces the main ideas through examples, motivating the use of DAGs as causal structure.
Sec.~\ref{sec:graphs} defines the graph formalism and Sec.~\ref{sec: local rules} defines neighbourhood schemes and local rules, establishing the structural properties of the model. Sec.~\ref{sec: sequences and ST diag} defines space-time diagrams.
Sec.~\ref{sec: Simulation} shows how to simulate any one-dimensional cellular automaton via a marching-soldiers encoding.
Sec.~\ref{sec:exV2} presents the time dilation example, illustrating fundamentally asynchronous behaviour.
Sec.~\ref{sec: locality for sequences} proves the main robustness result: $*$-locality and $*$-extensivity hold for all local rules.

\section{Background and Motivation}\label{sec : preliminaries}

\todo[inline]{L: Ma première sensation à la lecture de ce préliminaire est que pour un article journal, il faut être plus synthétique en faisant d'abord tout le rappel formel sur les automates cellulaires (avec le vocabulaire classique ?), puis introduire l'exemple}

\subsection{Preliminaries}
%
%

The space on which a cellular automaton acts is divided into cells. Each of these cells contains an \emph{internal state}, picked in a finite set $\Sigma$. We begin by introducing the internal states of the running example, which will be used throughout this section.

A \emph{configuration} of a cellular automaton is an $n$-dimensional grid of cells, each containing an internal state. We call $\calX = \mathbb{Z}^n$ the set of all different positions on the grid. A configuration $\sigma:\calX \to \Sigma$ is a function which maps each position to its internal state. 

To match the terminology used later, we call \emph{neighbourhood scheme} a function $\restri{}$ which, given a position $x$, selects a set of nearby vertices $\restri{x}$. We then call a neighbourhood the restriction of $\sigma$ to $\restri{x}$ and we denote it $\sigma_{\restri{x}}:\restri{x}\to \Sigma$.

%

\begin{example}{Particle system---Configuration, Neighbourhood}{}
    \begin{minipage}{\linewidth}
        We fix $\Sigma= \{0,1\}^2$. Here internal states are pairs of bits representing the presence of a left-moving particle or not, and the presence of a right-moving particle or not. 
        Consider a grid of dimension $n=1$---i.e. $\calX = \mathbb{Z}$. Each configuration $\sigma$ is an element of ${(\{0,1\}^2)}^{\mathbb{Z}}$. We then pick the radius $1$ neighbourhood scheme, namely $\restri{x} = \{x-1,x,x+1\}$. On the right we illustrate a neighbourhood $\sigma_{\restri{2}}$. It maps three positions ($1$, $2$ and $3$) to their internal states (for example: $\sigma_{\restri{2}}:1\mapsto (0,1)$).
    \end{minipage}
    \begin{minipage}{\linewidth}
        \centering
        \begin{tabular}{c}
            \begin{minipage}{0.65\linewidth}
                \includegraphics[width=\linewidth]{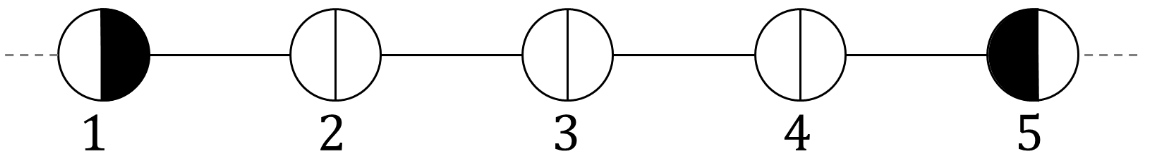}
                \captionof*{figure}{A configuration $\sigma$.} 
            \end{minipage}
            \\
            \begin{minipage}{0.65\linewidth}
                \includegraphics[width=\linewidth]{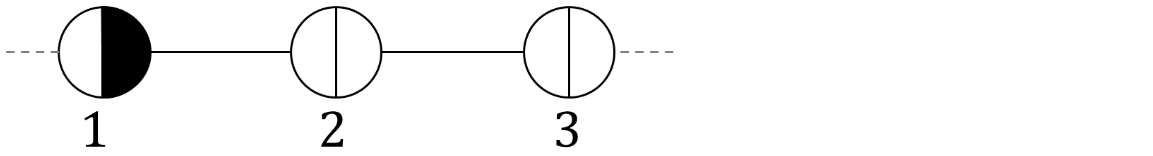}
                \captionof*{figure}{A neighbourhood $\sigma_{\restri{2}}$.} 
            \end{minipage}
        \end{tabular}
    \end{minipage}
\end{example}

A \emph{cellular automaton} is a global function $F$ that maps a configuration $\sigma^t$ at time $t$ to the configuration $\sigma^{t+1}$ at time $t+1$. This global function must be local and homogeneous---i.e. there exists a unique local update function, which we call the \emph{on-site function} $f$, mapping any neighbourhood $\sigma^t_{\restri{x}}$ to the next state $\sigma^{t+1}_x$ at position $x$. Consequently, this global function is shift-equivariant: shifting the input shifts the output by the same amount, or equivalently, $F$ commutes with shifts. 

\begin{example}{Particle system---Local rule}{PS-CA---LR}
    \begin{minipage}{\linewidth}
        The local rule transports left-moving particles to the left, and right-moving particles to the right, which means consuming the particle at a cell, to give it to another. Thus given a configuration $\sigma^t$ the on-site function is given by 
        $f:\sigma^t_{\restri{x}}\mapsto \textit{left}(\sigma_{x+1}^t),\textit{right}(\sigma_{x-1}^t)$, 
        where for any pair of bits $(i,j)$, $\textit{left}(i,j)=i$ and $\textit{right}(i,j)=j$.
    \end{minipage}
    \begin{minipage}{\linewidth}
        \centering
        \begin{tabular}{cc}
            \begin{minipage}{0.25\linewidth}
                \includegraphics[width=\linewidth]{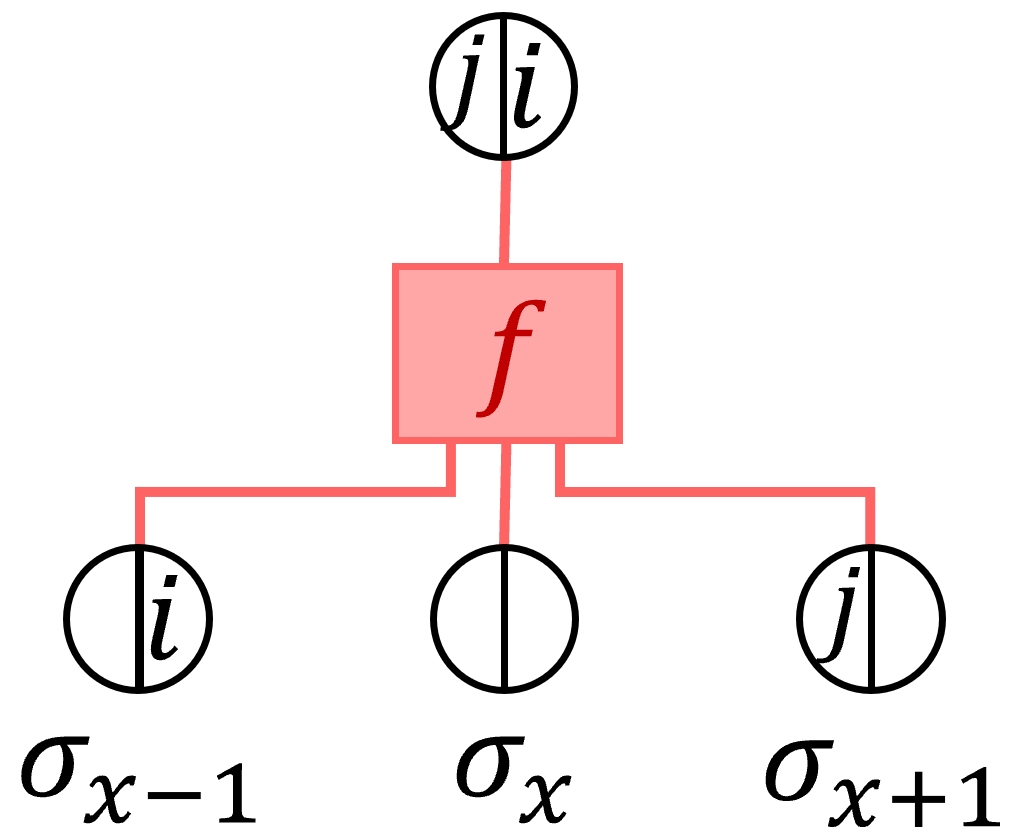}
                \captionof*{figure}{$f$ computes a state.}  
            \end{minipage}
            &
            \begin{minipage}{0.45\linewidth}
                \includegraphics[width=\linewidth]{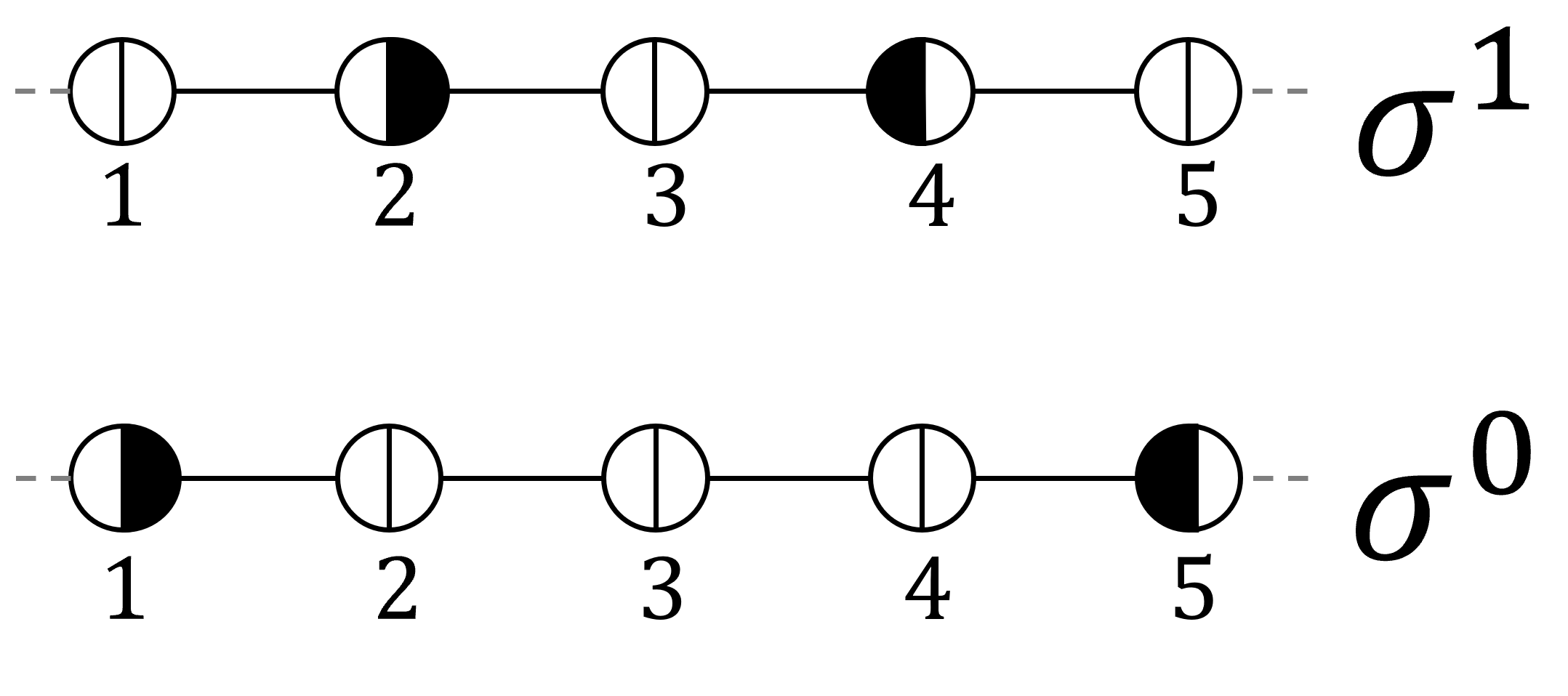}
                \captionof*{figure}{$A$ maps $\sigma^0$ to $\sigma^1$.} 
            \end{minipage}
        \end{tabular}
    \end{minipage}
\end{example}

Given a cellular automaton $F$ and an initial configuration $\sigma^0$, we can compute its temporal evolution as a sequence of configurations. We call it the \emph{space-time diagram} of $F$ and $\sigma^0$ and we denote it $\mathcal{M}_{\sigma^0} = \{\sigma^t\}_{t\in\mathbb{N}}$ where $\sigma^{t+1} = F\sigma^{t}$. Note that, by locality, each state is computed using the on-site function---i.e. $\sigma^{t+1}_x = f\sigma^{t}_{\restri{x}}$.

\begin{figure}[h]
    \centering
    \includegraphics[width=0.40\linewidth]{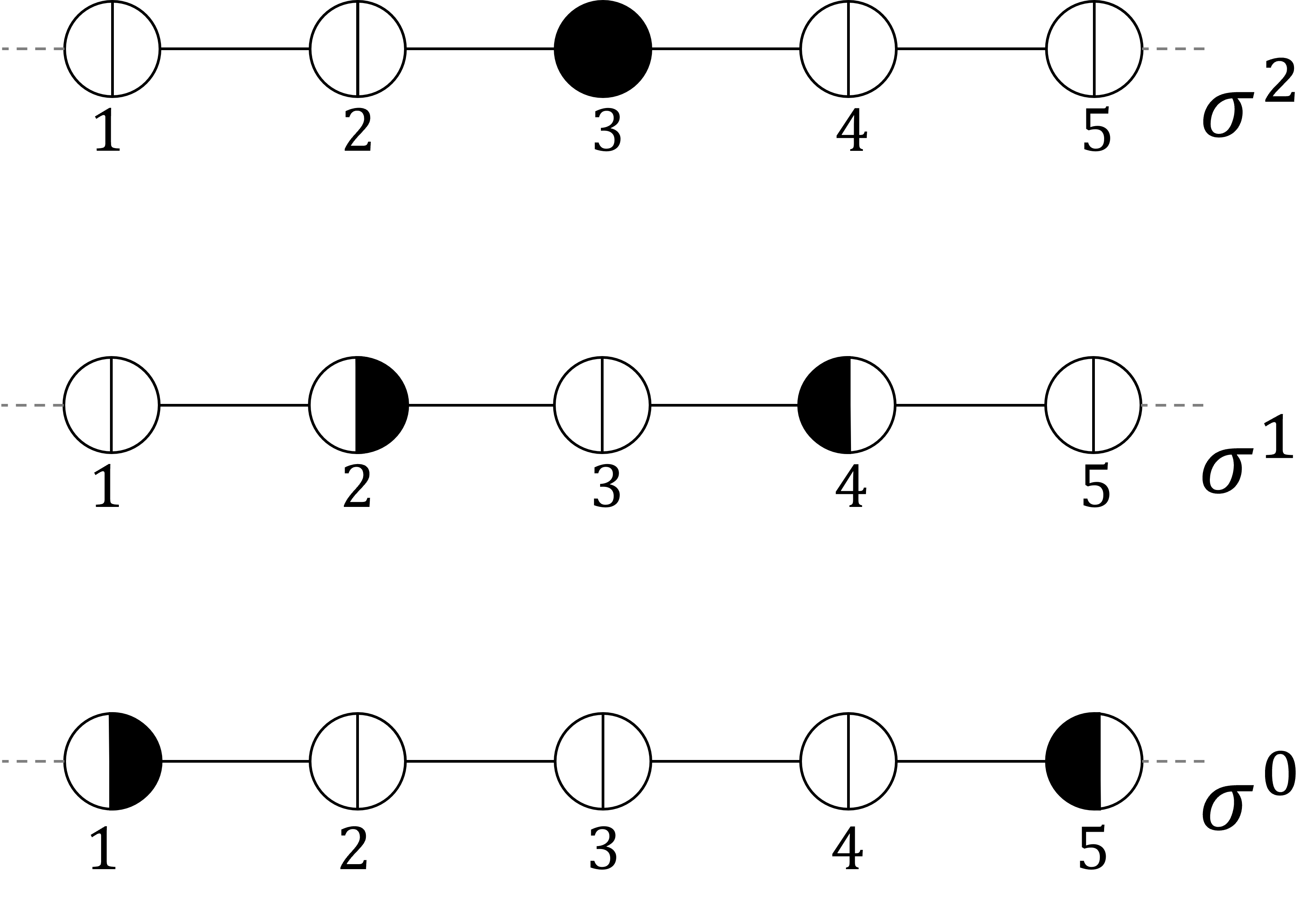}
    \caption{The space-time diagram of the particle system example.}
\end{figure}

\MC{We now turn to a formulation known as block decomposition (see Ex.~\ref{ex:Bloc decomposition}) more aligned with the objectives of this work---one that is both closer to quantum cellular automata (QCA)~\cite{ArrighiOverview,FarrellyReview} and more amenable to incorporating asynchrony.\\
The structure of QCA and reversible cellular automata (RCA), their classical counterparts, is best understood as an infinite circuit composed of local quantum gates. Indeed, it has been established in \cite{Durand-LoseBlock} that any one-dimensional or two-dimensional RCA can be expressed as a composition of finite reversible gates and partial shifts. In higher dimensions, a block representation remains possible, though it may require extending the alphabet $\Sigma$ of the original automaton \cite{ArrighiBLOCKREP}.}

\begin{example}{Particle system---Block decomposition}{Bloc decomposition}
    \begin{minipage}{\linewidth}
        This example admits a two-layer block decomposition, composed solely of swap gates. A swap gate $S$ takes as input two bits $(i,j)$ and returns the pair $(j,i)$. In the first layer, each swap gate acts between two adjacent cells, taking $\textit{right}(\sigma_x)$ and $\textit{left}(\sigma_{x+1})$ as inputs. In the second layer, we apply them again shifted by one half cell, and we obtain exactly the same as in Ex.~\ref{ex:PS-CA---LR}.
    \end{minipage}
    \begin{minipage}{\linewidth}
        \centering
        \includegraphics[width=0.55\linewidth]{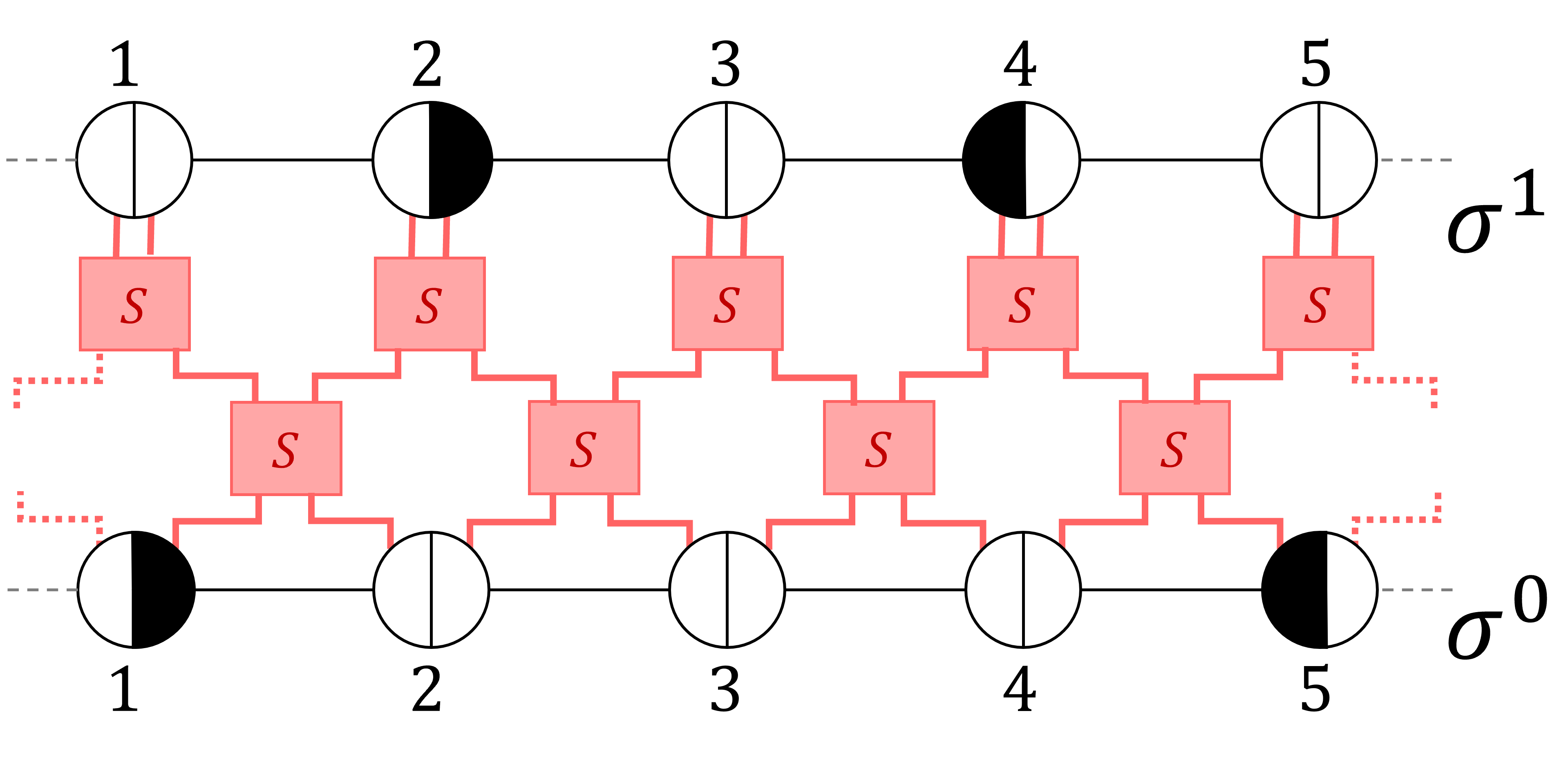}
    \end{minipage}
\end{example}

\subsection{Informal Description of the Model}

\todo{L: Cette sous-section contient clairement du contenu qui n'est pas du préliminaire sur les automates cellulaires ni sur les dynamics asynchrones.}

The block decomposition of RCA suggests an asynchronous simulation: instead of evaluating each layer in full, one may evaluate any subset of gates whose input dependencies have been satisfied. Figure~\ref{fig:space-like cut} contrasts these two evaluation strategies. Panel (a) applies all gates through two complete steps and reaches $\sigma^2$, whereas panel (b) applies only $S_{0.1^+}$, $S_{0.2^+}$, and $S_{1.2}$ and yields a partial configuration $\sigma'$ lying between synchronous cuts.\todo{L: Je modifierai Fig.~\ref{fig:space-like cut}(b) pour y enlever les informations consommées. J'appliquerai également $s_{0,4^+}$} In the following, we refer to the states of these asynchronous, partial computations, as \emph{snapshots}---by analogy to computer science---or as \emph{space-like cuts}, in analogy to physics. In order to properly keep track of which gates have been computed, let us assign names to them. Given a gate $S$, we name each of its applications $S_{t.x}$, where $t$ is a time tag and $x$ a position in space.

\begin{figure}[h]
    \begin{subfigure}{0.49\textwidth}
        \captionsetup{justification=centering}
        \includegraphics[width=\textwidth]{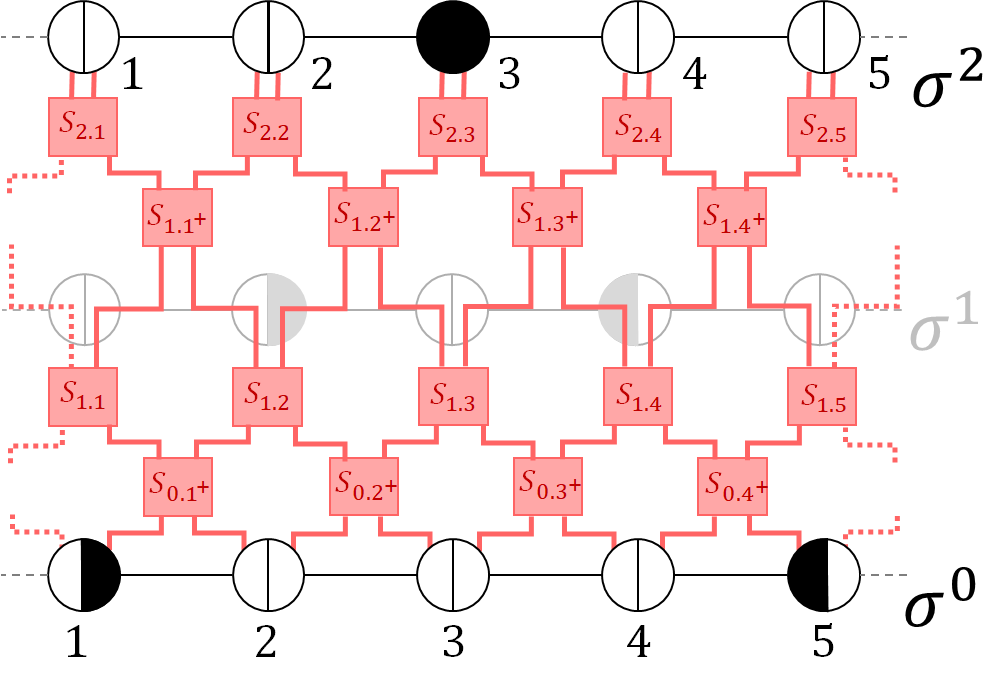}
        \caption{Applying all gates yields configurations produced by $F$.}
    \end{subfigure}
    \hfill
    \begin{subfigure}{0.49\textwidth}
        \captionsetup{justification=centering}
        \includegraphics[width=\textwidth]{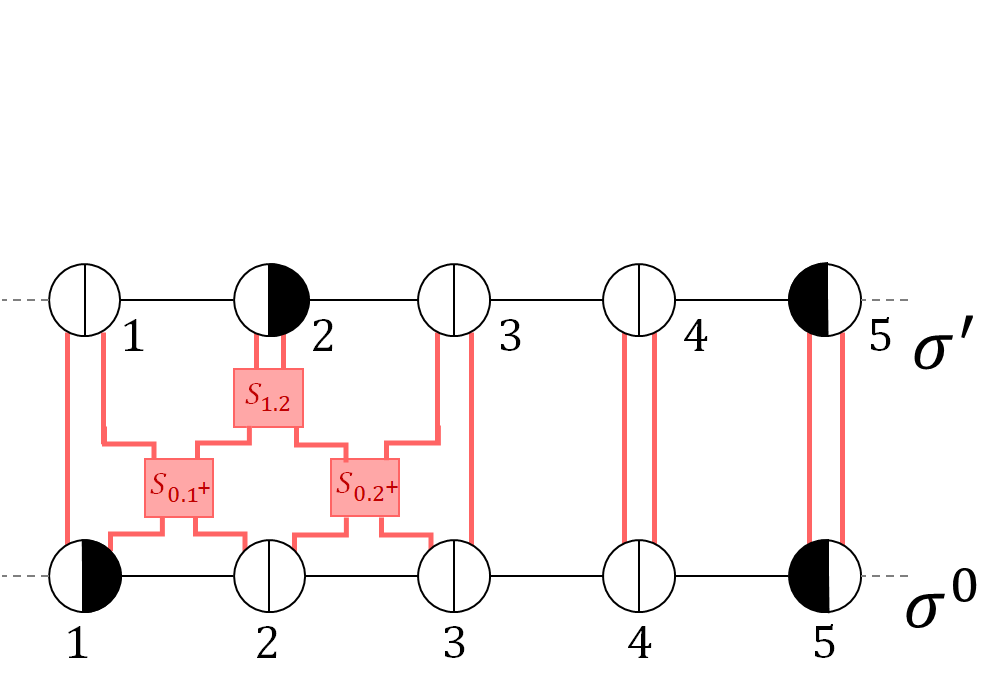}
        \caption{Applying only $S_{0.1^+}$, $S_{0.2^+}$, and $S_{1.2}$ yields the partial configuration $\sigma'$.}
    \end{subfigure}
    \caption{Asynchronous gate computation}
    \label{fig:space-like cut}
\end{figure}

When applying these gates asynchronously, we must keep track not only of the current configuration but also of gate dependencies, represented by wires. We encode both in a labelled directed acyclic port graph. Each vertex $t.x$ corresponds to a gate $S_{t.x}$, its internal state represents the wire inputs, and the edges encode input/output dependencies. Attaching edges to node ports allows each gate to distinguish its inputs. 
We explain how to describe a configuration of our running example in this manner in Ex.~\ref{ex:async_configuration}.

\begin{example}{Asynchronous particle system---Configuration}{async_configuration}
    \begin{minipage}{\linewidth}
        We represent each gate surrounding the configuration $\sigma^1$ by a vertex, and the wires between these gates by directed edges, attached either to the left or to the right port of a vertex. Then we represent the input to the wires as internal states on the bottom part of the graph. 
    \end{minipage}
    \begin{tabular}{c}
        \begin{minipage}{\linewidth}
        \centering
        \includegraphics[width=0.6\linewidth]{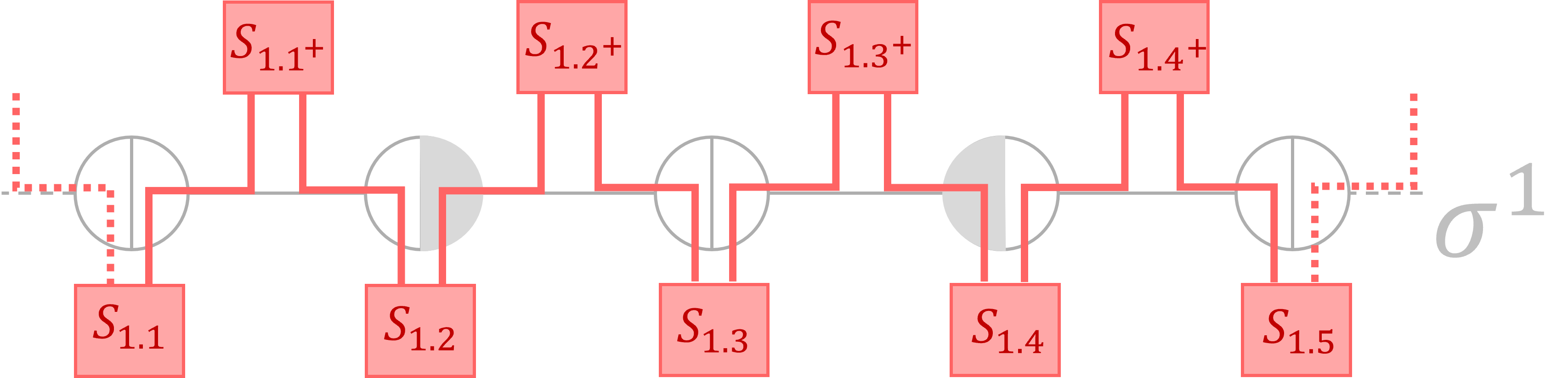}
        \captionof{figure}{The configuration $\sigma^1$ with its surrounding gates\dots}
        \end{minipage}
        \\
        \begin{minipage}{\linewidth}
        \centering
        \includegraphics[width=0.6\linewidth]{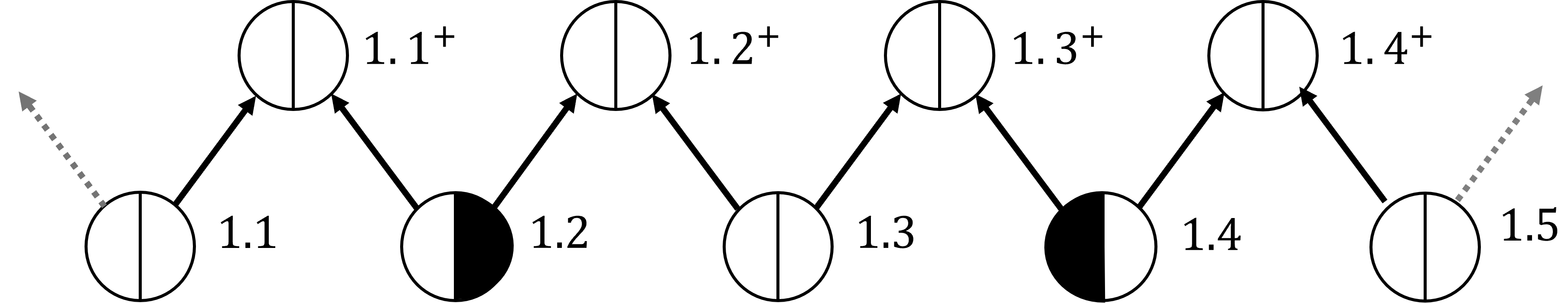}
        \captionof{figure}{\dots can be encoded as $G^1$, a labelled port DAG.}
        \end{minipage}
    \end{tabular}
\end{example}

Now we can phrase the partial evolution of the cellular automaton, as the application of a graph rewriting rule. By homogeneity this rewriting rule has to be the same everywhere in space.
Its application at $x$ is denoted $A_x$.
Given the vertex corresponding to a gate $S_{t.x}$, the rule $A_x$: $1/$ passes its internal states along the dependency edges, possibly applying the neighbouring gates $2/$ creates the vertex corresponding to the next gate at position $x$, namely $S_{(t+1).x}$, with its dependencies $3/$ drops $S_{t.x}$.

\begin{example}{Asynchronous particle system---Local rule}{causal structure}
     \begin{minipage}{0.59\linewidth}
        $A_x$ transforms the bottom pattern into the top one. By doing so any right mover $i$ in $t.x$ moves to the right and any left mover $j$ moves to the left. The vertex $t.x$ disappears from the graph, and a new one is created, called $(t+1).x$.
    \end{minipage}
    \begin{minipage}{0.4\linewidth}
        \centering
        \includegraphics[width=0.75\linewidth]{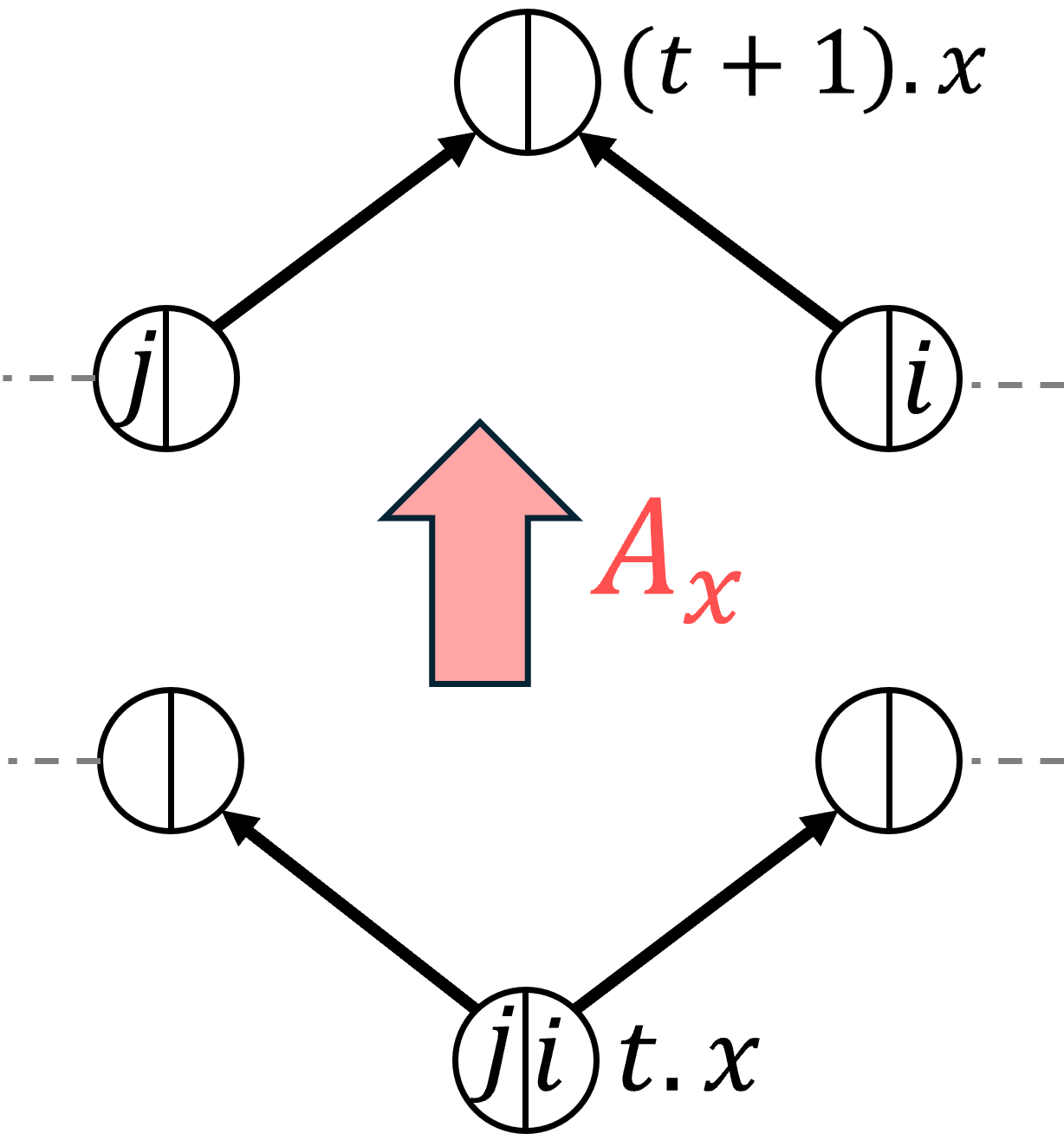}
    \end{minipage}
\end{example}

If the encoding is well-chosen, we can retrieve the states computed by any gate $S_{t.x}$ through consecutive applications of the rewriting pattern, as illustrated in Fig.~\ref{fig: async local rule}. Note that only very specific shapes of graphs correspond to a configuration of the original CA, the other graphs must be thought of as snapshots of the block decomposition.

\begin{figure}[h]
    \centering
    \begin{subfigure}{0.49\textwidth}
        \captionsetup{justification=centering}
        \includegraphics[width=\textwidth]{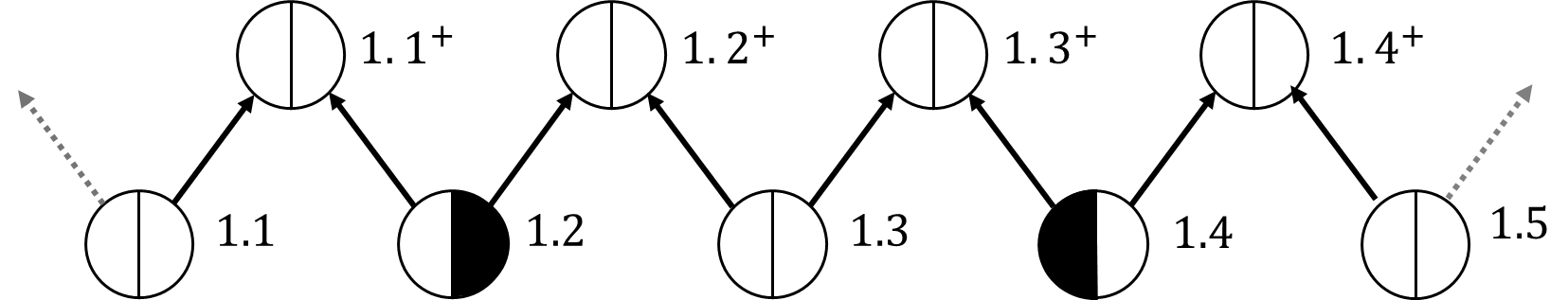}
        \caption{$G^1$.}\label{subfig: G1}
    \end{subfigure}
    \begin{subfigure}{0.49\textwidth}
        \captionsetup{justification=centering}
        \includegraphics[width=\textwidth]{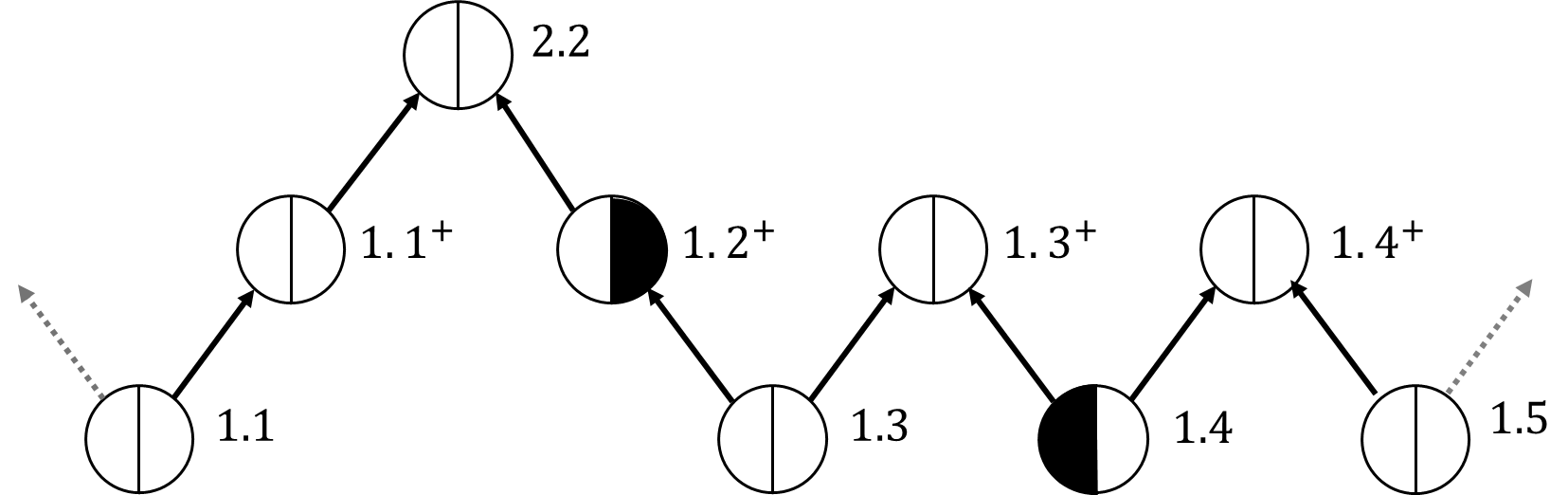}
        \caption{$A_{2}G^1$.}\label{subfig: A2G1}
    \end{subfigure}
    \begin{subfigure}{0.49\textwidth}
        \captionsetup{justification=centering}
        \includegraphics[width=\textwidth]{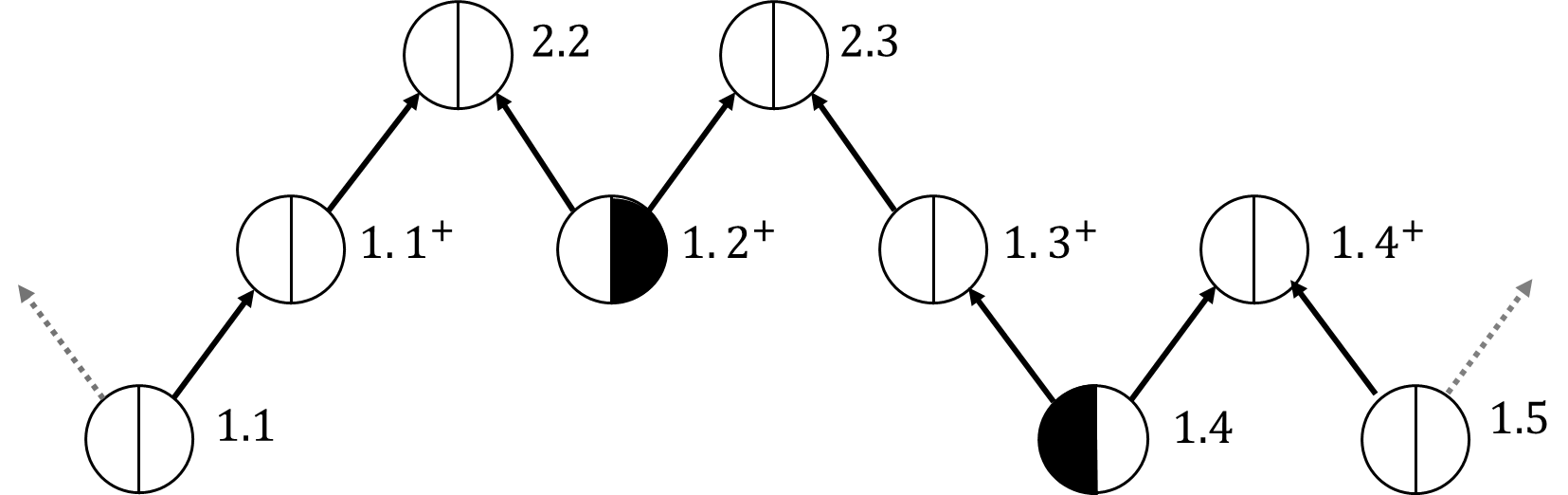}
        \caption{$A_3A_{2}G^1$.}\label{subfig: A3A2G1}
    \end{subfigure}
    \caption{\emph{Evolution of a configuration.} We start from a graph $G^1$, we then apply the rewrite pattern around $1.2$, and we apply it again around $1.3$. We end up with a new graph $A_3A_2 G^1$ where the vertex $1.2^+$ has no incoming edges anymore. Note that the gate $S_{1.2^+}$ has been computed in the process, and its output $(1,0)$ has been stored in the internal state of $1.2^+$.}
    \label{fig: async local rule}
\end{figure}

Starting from an initial graph $G$, we can derive the collection $\mathcal{M}(G) =\{ A_{x_n\dots x_1} G\mid x_n\dots x_1\in \calX^*\}$
of all possible evolutions. It gives us a new notion of space-time diagram, from which we can derive a global partial order between all the vertices that will eventually appear in a configuration as illustrated in Fig.~\ref{fig: causal ST diagram}.
This construction evokes a well-known result in theoretical physics: the geometry of space-time can be recovered from its causal structure \cite{malament_theorem}.
From a computer science perspective, the derived partial order can be interpreted as an elementary event structure \cite{Event_structure}.

\begin{figure}
    \centering
    \includegraphics[width=0.5\linewidth]{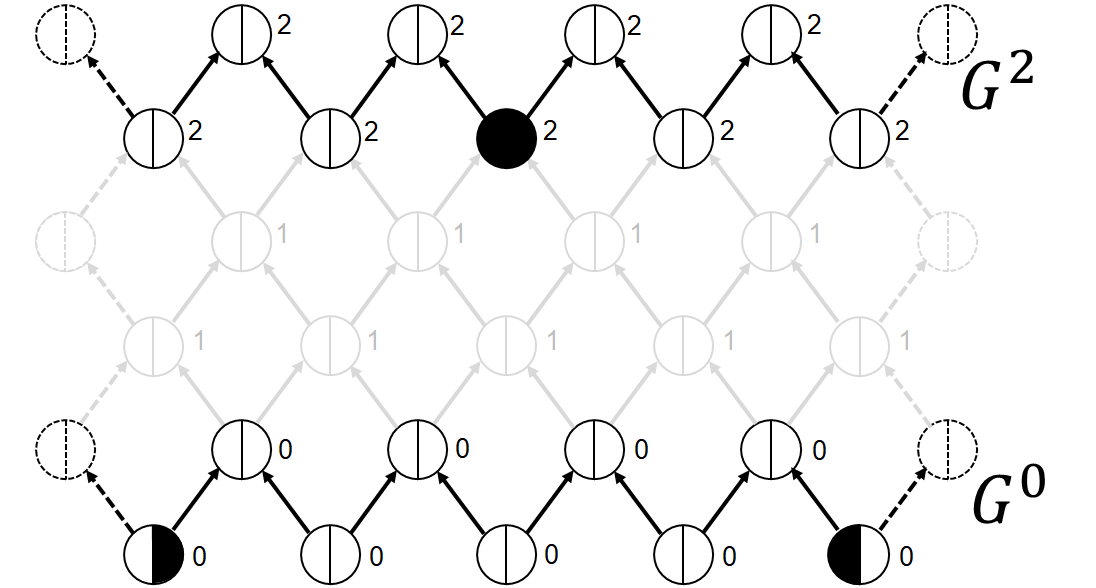}
    \caption{\emph{Representation of the causal space-time diagram.} In black we highlight just the graphs $G^0$ and $G^2$ belonging to the space-time diagram $\mathcal{M}(G)$. In grey we represent all possible rewritings obtained by applying the local rule.}
    \label{fig: causal ST diagram}
\end{figure}

Although the asynchronous simulation of cellular automata is already a well-studied topic \cite{WeakConsistencyGacs,MarchingNehaniv}, it is not the main focus of this paper.

One of the most remarkable features of general relativity is that the geometry of space-time is dynamic: it evolves depending on the distribution and movement of matter. In contrast, our earlier example involves a graph that always evolves in the same fixed way. To reflect the dynamical nature of space-time more faithfully, we aim to allow the shape of the rewrite pattern itself to vary, depending on the local topology of the graph and the internal states of its nodes. 

This analogy with general relativity serves as a conceptual motivation: our goal is not to model physical gravity, but to explore how local computational rules can give rise to dynamic, state-dependent topologies.

In this paper, we define a general framework for asynchronous graph evolution, in which arbitrary local rewriting patterns are allowed as long as they respect a constraint of locality. We then prove that this property of locality is preserved through sequential composition of these local rule applications.

\section{The model}
\subsection{Graphs, Subgraphs and Composition}\label{sec:graphs}

\todo[inline]{L: Je trouvais ça bien de parler explicitement des inpirations et des travaux précédents. C'est déjà mentionné dans l'intro, mais je pense utile de dire quels ingrédients viennent d'où plus exactement. Pour l'instant, on rentre trop brutalement dans le formalisme par rapport à mon feeling naturel... ou bien ça peut être des remarques éventuellement.}

We begin by introducing the type of graphs considered in this work: directed acyclic labelled port graphs. A \emph{port graph} is one in which edges are attached to the ports of nodes rather than to the nodes themselves. A \emph{labelled graph} is one in which each node $u$ carries an internal state $\sigma(u)$ drawn from a finite set $\Sigma=\{0,1,\ldots\}$.\\
Labelled port graphs are standard in distributed computing~\cite{PapazianRemila,Chalopin,KrivineKappa}: ports distinguish neighbouring processes, while labels represent their internal states. Directed acyclic graphs are also commonly used to capture dependencies between processes~\cite{CausalSets,AsynchronousSimsSync}. Such graphs should not be surprising in light of the discussion in Sec.~\ref{sec : preliminaries}.\\
Each vertex $t.x$ may be understood as an event, featuring a computational process at position $x$ and time tag $t$. Moreover, partial views of these graphs may have \emph{borders}: ``dangling'' edges that connect internal vertices to border vertices, whose internal states are unknown. This leads us to the following definition, illustrated in Ex.~\ref{ex:first graph}. 

\begin{definition}[Positions, ports, states, names]
Let $\calX$ be an infinite countable set of \emph{positions}.
Let $\pi$ be a finite set of \emph{ports} and $\Sigma$ be a finite set of \emph{states}.
Let us denote $\calV := \{\, t.x \mid (t,x) \in \Z \times \calX \,\}$ and call its elements \emph{names}. We say that $t.x\in \calV$ is at position $x$. For any subsets $U \subseteq \calV$ and $X\subseteq \mathcal{X}$, let us define $\sp{U} = \{\, x\in \calX \mid \exists t\in \mathbb{Z}, t.x\in U \,\}$. 
Let us also denote $\overline{X} := \mathcal{X} \setminus X$, and 
$t'.u$ for $(t'+t).x \in \calV$.
\end{definition}

\begin{definition}[Graphs]\label{def : graphs}
A \emph{graph} $G$ is given by a tuple $(\I_G, \B_G, \E_G, \sigma_G)$ where:
\begin{itemize}
    \item $\I_G \subseteq \calV$ is the set of \emph{internal vertices of $G$},
    \item $\B_G \subseteq \calV \setminus \I_G$ is the set of \emph{border vertices of $G$},
    \item $\E_G \subseteq (\V_G\!:\!\pi )^{2} \setminus ( \B_{G}\!:\!\pi )^2$ is the set of \emph{(oriented) edges}, and
    \item $\sigma_G :\I_G\rightarrow \Sigma$ maps each internal vertex to its state.
\end{itemize}
where we denote by $\V_G := \I_G\cup \B_G$ the set of all vertices of the graph, and by $(V\!:\!\pi):=\{v\!:\!p\mid v\in V,\,p\in\pi\}$ the set of ports of some set of vertices $V$. Moreover the tuple has to be such that:
\begin{description}[itemsep = 0.25ex, topsep = 0.5ex]
    \item[acyclicity\label{graph : acyclicity}] 
    the graph has no cycles.
    \item[border-attachment\label{graph : border-attachment}] $\forall u \in \B_G,\, \exists (v\!:\!a,v'\!:\!a') \in \E_G,\, u \in \{\,v,v'\,\}$,\\ i.e. border vertices are connected to at least one edge.
    \item[port-saturation\label{graph : port-saturation}] $\forall (u\!:\!a,v\!:\!b),(u'\!:\!a',v'\!:\!b') \in {\E_G},\, u\!:\!a\neq v'\!:\!b' \,\wedge\, (u\!:\!a=u'\!:\!a' \Leftrightarrow v\!:\!b=v'\!:\!b'$)\\
    i.e. ports are used only once---so as to distinguish each neighbour.
    \item[unicity of positions\label{graph : unicity of positions}] $\forall t.x, t'.x' \in {\V_{G}},\, x = x' \Rightarrow t = t'$,\\
    i.e. positions appear only once. \RX{Non overlapping positions}{}
    \item[finite chains\label{graph : finite chains}] the graph only contains paths of finite length.
\end{description}
We denote by $\Past(G)\subseteq \I_G$ the vertices of $\I_G$ with no incoming edges, 
and 
by $\mathcal{G}$ the set of all graphs. 
\end{definition}

Intuitively, vertices represent computational processes, and each edge indicates that the target process is waiting for the source process. The vertices in $\Past(G)$ correspond to processes that are no longer waiting for results from others, and are therefore ready to be executed.

As a first illustration, consider the following example, inspired by the discussion in Sec.~\ref{sec : preliminaries}.

\todo[inline]{L: Ici, j'insisterai sur les différences avec la définition des graphs dans le CGD et cette définition-ci, toujours dans l'idée de discuter la façon dont le design du modèle a eu lieu.}

\begin{example}{Particle system---A first graph}{first graph}
    \begin{minipage}{\linewidth}
        We fix $\calX = \mathbb{Z}$, $\Sigma = \{0,1\}^2$ 
        and $\pi=\{l,r\}$.
        The graph $G=(\I_G, \B_G, \E_G, \sigma_G)$ pictured below is defined as follows. $\I_G =\{1.2,2.3,1.4,1.5,1.6\}$ represents its internal vertices in black, and $\B_G =\{1.1,1.7\}$ its border vertices in grey. $\E_G$ contains all the edges, those between internal vertices (in black) as for example $(1.5\!:\!l,1.4\!:\!r)$, but also ``border'' edges linking a border vertex to an internal one (in grey) as $(1.1\!:\!r,1.2\!:\!l)$. 
        The internal states are then described by $\sigma_G$ which maps : $1.2,1.4,1.5\mapsto (0,0)$, $1.6\mapsto(1,0)$, and $1.3\mapsto(0,1)$.
    \end{minipage}
    \begin{minipage}{\linewidth}
        \centering
        \includegraphics[width=0.45\linewidth]{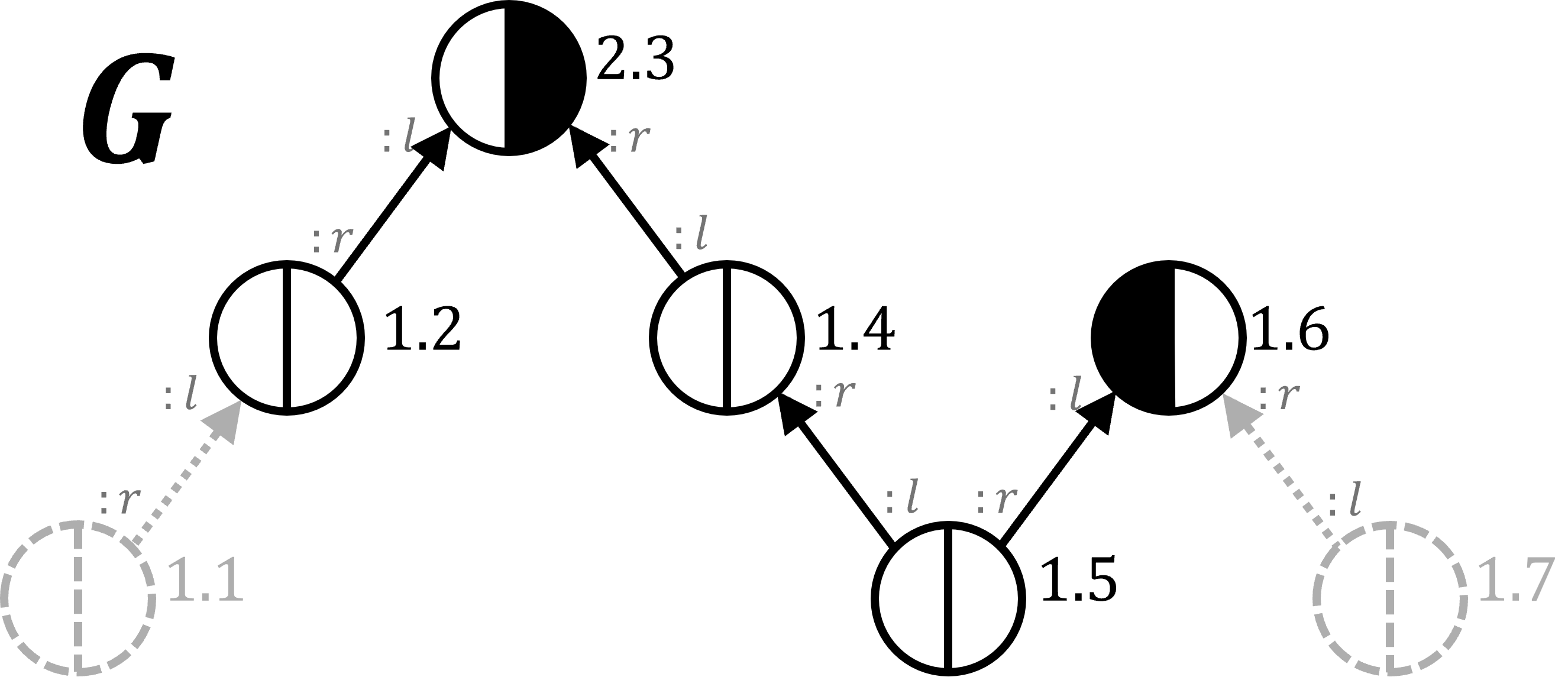}
    \end{minipage}
\end{example}

We now examine in more detail the conditions in our definition of a graph. Each position $x$ appears only once, to avoid name conflicts (see Fig.~\ref{fig : name inconsistency 1}). Each vertex port $:\!a$ can be used at most once per vertex (see Fig.~\ref{fig : port saturation}). Border vertices exclusively represent dangling edges: there are no edges between border vertices (see Fig.~\ref{fig : border attachement}), and each is always attached to some internal vertex (see Rmk.~\ref{rem:border vertices}). Equivalently, every edge is attached to at least one internal vertex. This last requirement is essential: it guarantees that the subgraph relation $\sqsubseteq$ defined below is reflexive. Indeed, Rmk.~\ref{rem:induction on internal vertices} then yields $G_{\I_G}=G$, hence $G\sqsubseteq G$. 
Finally, we require that every chain in the graph be finite, a condition that will be useful for certain inductive proofs.

\begin{figure}[h]
    \begin{subfigure}{0.3\textwidth}
        \captionsetup{justification=centering}
        \centering
        \includegraphics[width=0.7\textwidth]{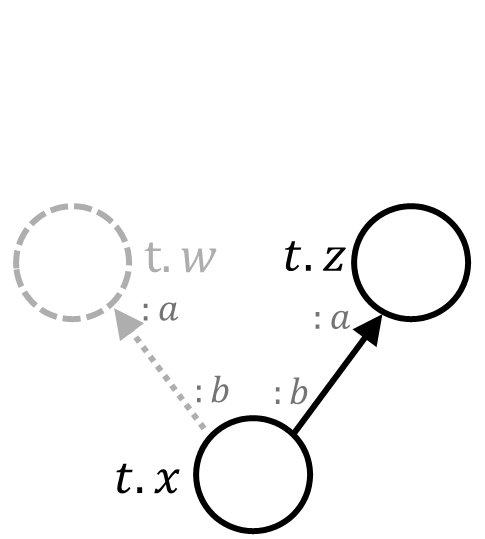}
        \caption{Ports are used once per vertex.}
        \label{fig : port saturation}
    \end{subfigure}
    \begin{subfigure}{0.3\textwidth}
        \captionsetup{justification=centering}
        \centering
        \includegraphics[width=0.75\textwidth]{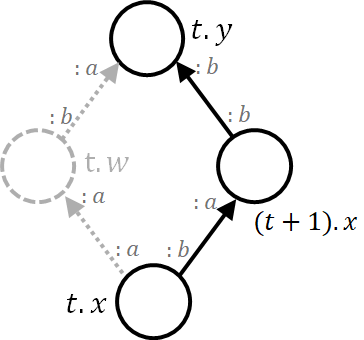}
        \caption{Positions must be distinct.}
        \label{fig : name inconsistency 1}
    \end{subfigure}
    \begin{subfigure}{0.3\textwidth}
        \captionsetup{justification=centering}
        \centering
        \includegraphics[width=0.65\textwidth]{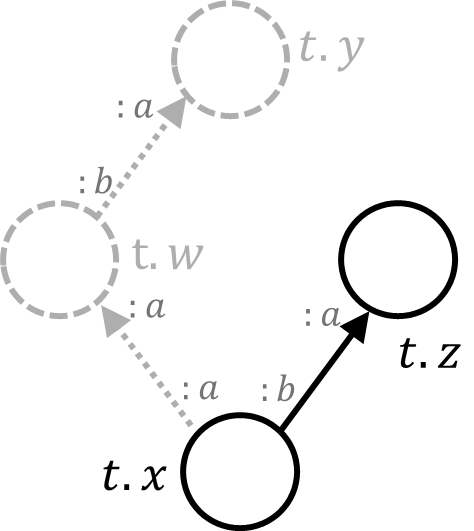}
        \caption{Border vertices cannot be connected.}
        \label{fig : border attachement}
    \end{subfigure}
    \caption{\emph{Example of non-graphs.} 
    $(a)$ is not a graph because the port $b$ of vertex $t.x$ is used twice. 
    $(b)$ is not a graph because it contains twice the same position $x$ (in $t.x$ and $(t+1).x$). 
    We forbid it because if the vertex $t.x$ were to become $(t+1).x$, a name conflict would arise.
    $(c)$ is also not a graph, because it misuses border vertices: no edges are allowed between border vertices, yet here an edge $(t.w\!:\!b, t.y\!:\!a)$ is present.}
    \label{fig: not a graph}
\end{figure}

\begin{remark}{Border vertices are attached to internal vertices}{border vertices}
    Since we allow no edges between border vertices (see Fig.~\ref{fig : border attachement}), the \ref{graph : border-attachment} condition implies that for all $v\in \B_G$, there exists an edge between $v$ and some $u\in \I_G$. 
\end{remark}

Given a graph $G$, we wish to define a local rule that acts only on a set of positions $X \subseteq \calX$. For this purpose, we introduce the induced graph $G_X$: the subgraph of $G$ that contains exactly the information relevant to the vertices $\{t.x\mid x\in X\ t\in \mathbb{Z}\}$, namely their internal states and their connectivity.

\begin{definition}[Induced graphs, subgraphs]\label{def : induced graph}
    Given a subset $U \subseteq \mathcal{V}$, we define the induced subgraph $G_U$ as the graph whose internal vertices $\I_{G_U}$ are given by $\I_G\cap U$, and whose edges $\E_{G_U}$ are all those edges of $G$ which touch a vertex in $\I_{G_U}$. \MC{Thus, its border vertices $\B_{G_U}$ are the vertices of $\V_G\setminus \I_{G_U}$ which lie at distance one of $\I_{G_U}$ in $G$, and the internal state of a vertex $v\in \I_{G_U}$ is given by $\sigma_{G_U}(v)=\sigma_G(v)$. }
    This defines a partial order relation: given any graph $H$ if there exists $U \subseteq \mathcal{V}$ such that $H=G_U$ we say that $H$ is a subgraph of $G$ and denote this relation by $H\sqsubseteq G$. For a subset $X\subseteq \mathcal{X}$, we write $G_X$ for the induced subgraph $G_{\{\, t.x \mid t\in \Z,x\in X \,\}}$.
\end{definition}

\begin{figure*}[t]
\hfil
\begin{subfigure}{0.250\textwidth}
    \captionsetup{justification=centering}
    \centering
    \resizebox{\textwidth}{!}{\tikzset{every picture/.style={line width=0.75pt}} 

\begin{tikzpicture}[x=0.75pt,y=0.75pt,yscale=-1,xscale=1]

\draw [line width=2.25]    (220,185) -- (155,185) ;
\draw [shift={(150,185)}, rotate = 360] [fill=myblack  ][line width=0.08]  [draw opacity=0] (14.29,-6.86) -- (0,0) -- (14.29,6.86) -- cycle    ;
\draw [line width=2.25]    (100,285) -- (165,285) ;
\draw [shift={(170,285)}, rotate = 180] [fill=myblack  ][line width=0.08]  [draw opacity=0] (14.29,-6.86) -- (0,0) -- (14.29,6.86) -- cycle    ;
\draw [color=border  ,draw opacity=1 ][line width=2.25]  [dash pattern={on 2.53pt off 3.02pt}]  (135,170) -- (87.91,104.07) ;
\draw [shift={(85,100)}, rotate = 54.46] [fill=border  ,fill opacity=1 ][line width=0.08]  [draw opacity=0] (14.29,-6.86) -- (0,0) -- (14.29,6.86) -- cycle    ;
\draw [line width=2.25]    (185,270) -- (137.91,204.07) ;
\draw [shift={(135,200)}, rotate = 54.46] [fill=myblack  ][line width=0.08]  [draw opacity=0] (14.29,-6.86) -- (0,0) -- (14.29,6.86) -- cycle    ;
\draw [color=border  ,draw opacity=1 ][line width=2.25]  [dash pattern={on 2.53pt off 3.02pt}]  (235,170) -- (187.91,104.07) ;
\draw [shift={(185,100)}, rotate = 54.46] [fill=border  ,fill opacity=1 ][line width=0.08]  [draw opacity=0] (14.29,-6.86) -- (0,0) -- (14.29,6.86) -- cycle    ;
\draw [line width=2.25]    (285,270) -- (237.91,204.07) ;
\draw [shift={(235,200)}, rotate = 54.46] [fill=myblack  ][line width=0.08]  [draw opacity=0] (14.29,-6.86) -- (0,0) -- (14.29,6.86) -- cycle    ;
\draw [line width=2.25]    (185,270) -- (232.09,204.07) ;
\draw [shift={(235,200)}, rotate = 125.54] [fill=myblack  ][line width=0.08]  [draw opacity=0] (14.29,-6.86) -- (0,0) -- (14.29,6.86) -- cycle    ;
\draw [line width=2.25]    (235,170) -- (282.09,104.07) ;
\draw [shift={(285,100)}, rotate = 125.54] [fill=myblack  ][line width=0.08]  [draw opacity=0] (14.29,-6.86) -- (0,0) -- (14.29,6.86) -- cycle    ;
\draw  [color=myblack  ,draw opacity=1 ][fill={rgb, 255:red, 255; green, 255; blue, 255 }  ,fill opacity=1 ][line width=3]  (170,285) .. controls (170,276.72) and (176.72,270) .. (185,270) .. controls (193.28,270) and (200,276.72) .. (200,285) .. controls (200,293.28) and (193.28,300) .. (185,300) .. controls (176.72,300) and (170,293.28) .. (170,285) -- cycle ;
\draw  [color=myblack  ,draw opacity=1 ][fill={rgb, 255:red, 255; green, 255; blue, 255 }  ,fill opacity=1 ][line width=3]  (70,285) .. controls (70,276.72) and (76.72,270) .. (85,270) .. controls (93.28,270) and (100,276.72) .. (100,285) .. controls (100,293.28) and (93.28,300) .. (85,300) .. controls (76.72,300) and (70,293.28) .. (70,285) -- cycle ;
\draw  [color=myblack  ,draw opacity=1 ][fill={rgb, 255:red, 255; green, 255; blue, 255 }  ,fill opacity=1 ][line width=3]  (270,285) .. controls (270,276.72) and (276.72,270) .. (285,270) .. controls (293.28,270) and (300,276.72) .. (300,285) .. controls (300,293.28) and (293.28,300) .. (285,300) .. controls (276.72,300) and (270,293.28) .. (270,285) -- cycle ;
\draw  [color=myblack  ,draw opacity=1 ][fill={rgb, 255:red, 255; green, 255; blue, 255 }  ,fill opacity=1 ][line width=3]  (120,185) .. controls (120,176.72) and (126.72,170) .. (135,170) .. controls (143.28,170) and (150,176.72) .. (150,185) .. controls (150,193.28) and (143.28,200) .. (135,200) .. controls (126.72,200) and (120,193.28) .. (120,185) -- cycle ;
\draw  [color=myblack  ,draw opacity=1 ][fill={rgb, 255:red, 255; green, 255; blue, 255 }  ,fill opacity=1 ][line width=3]  (220,185) .. controls (220,176.72) and (226.72,170) .. (235,170) .. controls (243.28,170) and (250,176.72) .. (250,185) .. controls (250,193.28) and (243.28,200) .. (235,200) .. controls (226.72,200) and (220,193.28) .. (220,185) -- cycle ;
\draw  [color=border  ,draw opacity=1 ][fill={rgb, 255:red, 255; green, 255; blue, 255 }  ,fill opacity=1 ][dash pattern={on 3.38pt off 3.27pt}][line width=3]  (170,85) .. controls (170,76.72) and (176.72,70) .. (185,70) .. controls (193.28,70) and (200,76.72) .. (200,85) .. controls (200,93.28) and (193.28,100) .. (185,100) .. controls (176.72,100) and (170,93.28) .. (170,85) -- cycle ;
\draw  [color=border  ,draw opacity=1 ][fill={rgb, 255:red, 255; green, 255; blue, 255 }  ,fill opacity=1 ][dash pattern={on 3.38pt off 3.27pt}][line width=3]  (70,85) .. controls (70,76.72) and (76.72,70) .. (85,70) .. controls (93.28,70) and (100,76.72) .. (100,85) .. controls (100,93.28) and (93.28,100) .. (85,100) .. controls (76.72,100) and (70,93.28) .. (70,85) -- cycle ;
\draw  [color=myblack  ,draw opacity=1 ][fill={rgb, 255:red, 255; green, 255; blue, 255 }  ,fill opacity=1 ][line width=3]  (270,85) .. controls (270,76.72) and (276.72,70) .. (285,70) .. controls (293.28,70) and (300,76.72) .. (300,85) .. controls (300,93.28) and (293.28,100) .. (285,100) .. controls (276.72,100) and (270,93.28) .. (270,85) -- cycle ;

\draw (81,242.4) node [anchor=north west][inner sep=0.75pt]  [font=\sizeFtwo]  {$\mathbf{t_{0} .x_{0}}$};
\draw (201,252.4) node [anchor=north west][inner sep=0.75pt]  [font=\sizeFtwo]  {$\mathbf{t_{1} .x_{1}}$};
\draw (301,252.4) node [anchor=north west][inner sep=0.75pt]  [font=\sizeFtwo]  {$\mathbf{t_{2} .x_{2}}$};
\draw (141,145.4) node [anchor=north west][inner sep=0.75pt]  [font=\sizeFtwo]  {$\mathbf{t_{4} .x_{4}}$};
\draw (261,172.4) node [anchor=north west][inner sep=0.75pt]  [font=\sizeFtwo]  {$\mathbf{t_{3} .x_{3}}$};
\draw (102,88.4) node [anchor=north west][inner sep=0.75pt]  [font=\sizeFtwo]  {$\mathbf{t_{5} .x_{5}}$};
\draw (202,88.4) node [anchor=north west][inner sep=0.75pt]  [font=\sizeFtwo]  {$\mathbf{t_{6} .x_{6}}$};
\draw (301,95.4) node [anchor=north west][inner sep=0.75pt]  [font=\sizeFtwo]  {$\mathbf{t_{7} .x_{7}}$};
\draw (279,242.4) node [anchor=north west][inner sep=0.75pt]  [font=\sizeFtwop,color={rgb, 255:red, 155; green, 155; blue, 155 }  ,opacity=1 ]  {$:a$};
\draw (201,242.4) node [anchor=north west][inner sep=0.75pt]  [font=\sizeFtwop,color={rgb, 255:red, 155; green, 155; blue, 155 }  ,opacity=1 ]  {$:a$};
\draw (102,288.4) node [anchor=north west][inner sep=0.75pt]  [font=\sizeFtwop,color={rgb, 255:red, 155; green, 155; blue, 155 }  ,opacity=1 ]  {$:a$};
\draw (99,152.4) node [anchor=north west][inner sep=0.75pt]  [font=\sizeFtwop,color={rgb, 255:red, 155; green, 155; blue, 155 }  ,opacity=1 ]  {$:a$};
\draw (69,112.4) node [anchor=north west][inner sep=0.75pt]  [font=\sizeFtwop,color={rgb, 255:red, 155; green, 155; blue, 155 }  ,opacity=1 ]  {$:a$};
\draw (169,112.4) node [anchor=north west][inner sep=0.75pt]  [font=\sizeFtwop,color={rgb, 255:red, 155; green, 155; blue, 155 }  ,opacity=1 ]  {$:a$};
\draw (279,112.4) node [anchor=north west][inner sep=0.75pt]  [font=\sizeFtwop,color={rgb, 255:red, 155; green, 155; blue, 155 }  ,opacity=1 ]  {$:a$};
\draw (249,152.4) node [anchor=north west][inner sep=0.75pt]  [font=\sizeFtwop,color={rgb, 255:red, 155; green, 155; blue, 155 }  ,opacity=1 ]  {$:a$};
\draw (251,202.4) node [anchor=north west][inner sep=0.75pt]  [font=\sizeFtwop,color={rgb, 255:red, 155; green, 155; blue, 155 }  ,opacity=1 ]  {$:b$};
\draw (199,202.4) node [anchor=north west][inner sep=0.75pt]  [font=\sizeFtwop,color={rgb, 255:red, 155; green, 155; blue, 155 }  ,opacity=1 ]  {$:c$};
\draw (137,288.4) node [anchor=north west][inner sep=0.75pt]  [font=\sizeFtwop,color={rgb, 255:red, 155; green, 155; blue, 155 }  ,opacity=1 ]  {$:b$};
\draw (151,202.4) node [anchor=north west][inner sep=0.75pt]  [font=\sizeFtwop,color={rgb, 255:red, 155; green, 155; blue, 155 }  ,opacity=1 ]  {$:b$};
\draw (161,162.4) node [anchor=north west][inner sep=0.75pt]  [font=\sizeFtwop,color={rgb, 255:red, 155; green, 155; blue, 155 }  ,opacity=1 ]  {$:c$};
\draw (198,152.4) node [anchor=north west][inner sep=0.75pt]  [font=\sizeFtwop,color={rgb, 255:red, 155; green, 155; blue, 155 }  ,opacity=1 ]  {$:d$};
\draw (151,252.4) node [anchor=north west][inner sep=0.75pt]  [font=\sizeFtwop,color={rgb, 255:red, 155; green, 155; blue, 155 }  ,opacity=1 ]  {$:c$};

\end{tikzpicture}}
    \caption{$G$}
    \label{fig : Induced subgraph 1}
\end{subfigure}
\hfil
\hfil
\begin{subfigure}{0.410\textwidth}
    \captionsetup{justification=centering}
    \centering
    \resizebox{.5\textwidth}{!}{\tikzset{every picture/.style={line width=0.75pt}} 

\begin{tikzpicture}[x=0.75pt,y=0.75pt,yscale=-1,xscale=1]

\draw [color=border  ,draw opacity=1 ][line width=2.25]  [dash pattern={on 2.53pt off 3.02pt}]  (222,185) -- (157,185) ;
\draw [shift={(152,185)}, rotate = 360] [fill=border  ,fill opacity=1 ][line width=0.08]  [draw opacity=0] (14.29,-6.86) -- (0,0) -- (14.29,6.86) -- cycle    ;
\draw [color=border  ,draw opacity=1 ][line width=2.25]  [dash pattern={on 2.53pt off 3.02pt}]  (237,170) -- (189.91,104.07) ;
\draw [shift={(187,100)}, rotate = 54.46] [fill=border  ,fill opacity=1 ][line width=0.08]  [draw opacity=0] (14.29,-6.86) -- (0,0) -- (14.29,6.86) -- cycle    ;
\draw [line width=2.25]    (287,270) -- (239.91,204.07) ;
\draw [shift={(237,200)}, rotate = 54.46] [fill=myblack  ][line width=0.08]  [draw opacity=0] (14.29,-6.86) -- (0,0) -- (14.29,6.86) -- cycle    ;
\draw [color=border  ,draw opacity=1 ][line width=2.25]  [dash pattern={on 2.53pt off 3.02pt}]  (187,270) -- (234.09,204.07) ;
\draw [shift={(237,200)}, rotate = 125.54] [fill=border  ,fill opacity=1 ][line width=0.08]  [draw opacity=0] (14.29,-6.86) -- (0,0) -- (14.29,6.86) -- cycle    ;
\draw [line width=2.25]    (237,170) -- (284.09,104.07) ;
\draw [shift={(287,100)}, rotate = 125.54] [fill=myblack  ][line width=0.08]  [draw opacity=0] (14.29,-6.86) -- (0,0) -- (14.29,6.86) -- cycle    ;
\draw  [color=border  ,draw opacity=1 ][fill={rgb, 255:red, 255; green, 255; blue, 255 }  ,fill opacity=1 ][dash pattern={on 3.38pt off 3.27pt}][line width=3]  (172,285) .. controls (172,276.72) and (178.72,270) .. (187,270) .. controls (195.28,270) and (202,276.72) .. (202,285) .. controls (202,293.28) and (195.28,300) .. (187,300) .. controls (178.72,300) and (172,293.28) .. (172,285) -- cycle ;
\draw  [color=myblack  ,draw opacity=1 ][fill={rgb, 255:red, 255; green, 255; blue, 255 }  ,fill opacity=1 ][line width=3]  (272,285) .. controls (272,276.72) and (278.72,270) .. (287,270) .. controls (295.28,270) and (302,276.72) .. (302,285) .. controls (302,293.28) and (295.28,300) .. (287,300) .. controls (278.72,300) and (272,293.28) .. (272,285) -- cycle ;
\draw  [color=border  ,draw opacity=1 ][fill={rgb, 255:red, 255; green, 255; blue, 255 }  ,fill opacity=1 ][dash pattern={on 3.38pt off 3.27pt}][line width=3]  (122,185) .. controls (122,176.72) and (128.72,170) .. (137,170) .. controls (145.28,170) and (152,176.72) .. (152,185) .. controls (152,193.28) and (145.28,200) .. (137,200) .. controls (128.72,200) and (122,193.28) .. (122,185) -- cycle ;
\draw  [color=myblack  ,draw opacity=1 ][fill={rgb, 255:red, 255; green, 255; blue, 255 }  ,fill opacity=1 ][line width=3]  (222,185) .. controls (222,176.72) and (228.72,170) .. (237,170) .. controls (245.28,170) and (252,176.72) .. (252,185) .. controls (252,193.28) and (245.28,200) .. (237,200) .. controls (228.72,200) and (222,193.28) .. (222,185) -- cycle ;
\draw  [color=border  ,draw opacity=1 ][fill={rgb, 255:red, 255; green, 255; blue, 255 }  ,fill opacity=1 ][dash pattern={on 3.38pt off 3.27pt}][line width=3]  (172,85) .. controls (172,76.72) and (178.72,70) .. (187,70) .. controls (195.28,70) and (202,76.72) .. (202,85) .. controls (202,93.28) and (195.28,100) .. (187,100) .. controls (178.72,100) and (172,93.28) .. (172,85) -- cycle ;
\draw  [color=myblack  ,draw opacity=1 ][fill={rgb, 255:red, 255; green, 255; blue, 255 }  ,fill opacity=1 ][line width=3]  (272,85) .. controls (272,76.72) and (278.72,70) .. (287,70) .. controls (295.28,70) and (302,76.72) .. (302,85) .. controls (302,93.28) and (295.28,100) .. (287,100) .. controls (278.72,100) and (272,93.28) .. (272,85) -- cycle ;

\draw (203,252.4) node [anchor=north west][inner sep=0.75pt]  [font=\sizeFtwo]  {$\mathbf{t_{1} .x_{1}}$};
\draw (303,252.4) node [anchor=north west][inner sep=0.75pt]  [font=\sizeFtwo]  {$\mathbf{t_{2} .x_{2}}$};
\draw (141,142.4) node [anchor=north west][inner sep=0.75pt]  [font=\sizeFtwo]  {$\mathbf{t_{4} .x_{4}}$};
\draw (263,172.4) node [anchor=north west][inner sep=0.75pt]  [font=\sizeFtwo]  {$\mathbf{t_{3} .x_{3}}$};
\draw (204,88.4) node [anchor=north west][inner sep=0.75pt]  [font=\sizeFtwo]  {$\mathbf{t_{6} .x_{6}}$};
\draw (303,95.4) node [anchor=north west][inner sep=0.75pt]  [font=\sizeFtwo]  {$\mathbf{t_{7} .x_{7}}$};
\draw (281,242.4) node [anchor=north west][inner sep=0.75pt]  [font=\sizeFtwop,color={rgb, 255:red, 155; green, 155; blue, 155 }  ,opacity=1 ]  {$:a$};
\draw (203,242.4) node [anchor=north west][inner sep=0.75pt]  [font=\sizeFtwop,color={rgb, 255:red, 155; green, 155; blue, 155 }  ,opacity=1 ]  {$:a$};
\draw (171,112.4) node [anchor=north west][inner sep=0.75pt]  [font=\sizeFtwop,color={rgb, 255:red, 155; green, 155; blue, 155 }  ,opacity=1 ]  {$:a$};
\draw (281,112.4) node [anchor=north west][inner sep=0.75pt]  [font=\sizeFtwop,color={rgb, 255:red, 155; green, 155; blue, 155 }  ,opacity=1 ]  {$:a$};
\draw (251,152.4) node [anchor=north west][inner sep=0.75pt]  [font=\sizeFtwop,color={rgb, 255:red, 155; green, 155; blue, 155 }  ,opacity=1 ]  {$:a$};
\draw (253,202.4) node [anchor=north west][inner sep=0.75pt]  [font=\sizeFtwop,color={rgb, 255:red, 155; green, 155; blue, 155 }  ,opacity=1 ]  {$:b$};
\draw (201,202.4) node [anchor=north west][inner sep=0.75pt]  [font=\sizeFtwop,color={rgb, 255:red, 155; green, 155; blue, 155 }  ,opacity=1 ]  {$:c$};
\draw (163,162.4) node [anchor=north west][inner sep=0.75pt]  [font=\sizeFtwop,color={rgb, 255:red, 155; green, 155; blue, 155 }  ,opacity=1 ]  {$:c$};
\draw (200,152.4) node [anchor=north west][inner sep=0.75pt]  [font=\sizeFtwop,color={rgb, 255:red, 155; green, 155; blue, 155 }  ,opacity=1 ]  {$:d$};

\end{tikzpicture}}
    \caption{$G_{\{x_2,x_3,x_7\}}$}
    \label{fig : Induced subgraph 2}
\end{subfigure}
\hfil
\begin{subfigure}{0.250\textwidth}
    \captionsetup{justification=centering}
    \centering
    \resizebox{\textwidth}{!}{\tikzset{every picture/.style={line width=0.75pt}} 

\begin{tikzpicture}[x=0.75pt,y=0.75pt,yscale=-1,xscale=1]

\draw [line width=2.25]    (220,185) -- (155,185) ;
\draw [shift={(150,185)}, rotate = 360] [fill=myblack  ][line width=0.08]  [draw opacity=0] (14.29,-6.86) -- (0,0) -- (14.29,6.86) -- cycle    ;
\draw [line width=2.25]    (100,285) -- (165,285) ;
\draw [shift={(170,285)}, rotate = 180] [fill=myblack  ][line width=0.08]  [draw opacity=0] (14.29,-6.86) -- (0,0) -- (14.29,6.86) -- cycle    ;
\draw [color=border  ,draw opacity=1 ][line width=2.25]  [dash pattern={on 2.53pt off 3.02pt}]  (135,170) -- (87.91,104.07) ;
\draw [shift={(85,100)}, rotate = 54.46] [fill=border  ,fill opacity=1 ][line width=0.08]  [draw opacity=0] (14.29,-6.86) -- (0,0) -- (14.29,6.86) -- cycle    ;
\draw [line width=2.25]    (185,270) -- (137.91,204.07) ;
\draw [shift={(135,200)}, rotate = 54.46] [fill=myblack  ][line width=0.08]  [draw opacity=0] (14.29,-6.86) -- (0,0) -- (14.29,6.86) -- cycle    ;
\draw [color=border  ,draw opacity=1 ][line width=2.25]  [dash pattern={on 2.53pt off 3.02pt}]  (235,170) -- (187.91,104.07) ;
\draw [shift={(185,100)}, rotate = 54.46] [fill=border  ,fill opacity=1 ][line width=0.08]  [draw opacity=0] (14.29,-6.86) -- (0,0) -- (14.29,6.86) -- cycle    ;
\draw [line width=2.25]    (285,270) -- (237.91,204.07) ;
\draw [shift={(235,200)}, rotate = 54.46] [fill=myblack  ][line width=0.08]  [draw opacity=0] (14.29,-6.86) -- (0,0) -- (14.29,6.86) -- cycle    ;
\draw [line width=2.25]    (185,270) -- (232.09,204.07) ;
\draw [shift={(235,200)}, rotate = 125.54] [fill=myblack  ][line width=0.08]  [draw opacity=0] (14.29,-6.86) -- (0,0) -- (14.29,6.86) -- cycle    ;
\draw [line width=2.25]    (235,170) -- (282.09,104.07) ;
\draw [shift={(285,100)}, rotate = 125.54] [fill=myblack  ][line width=0.08]  [draw opacity=0] (14.29,-6.86) -- (0,0) -- (14.29,6.86) -- cycle    ;
\draw  [color=setBorder  ,draw opacity=1 ][fill={rgb, 255:red, 255; green, 255; blue, 255 }  ,fill opacity=1 ][line width=3]  (170,285) .. controls (170,276.72) and (176.72,270) .. (185,270) .. controls (193.28,270) and (200,276.72) .. (200,285) .. controls (200,293.28) and (193.28,300) .. (185,300) .. controls (176.72,300) and (170,293.28) .. (170,285) -- cycle ;
\draw  [color=myblack  ,draw opacity=1 ][fill={rgb, 255:red, 255; green, 255; blue, 255 }  ,fill opacity=1 ][line width=3]  (70,285) .. controls (70,276.72) and (76.72,270) .. (85,270) .. controls (93.28,270) and (100,276.72) .. (100,285) .. controls (100,293.28) and (93.28,300) .. (85,300) .. controls (76.72,300) and (70,293.28) .. (70,285) -- cycle ;
\draw  [color=internal  ,draw opacity=1 ][fill={rgb, 255:red, 255; green, 255; blue, 255 }  ,fill opacity=1 ][line width=3]  (270,285) .. controls (270,276.72) and (276.72,270) .. (285,270) .. controls (293.28,270) and (300,276.72) .. (300,285) .. controls (300,293.28) and (293.28,300) .. (285,300) .. controls (276.72,300) and (270,293.28) .. (270,285) -- cycle ;
\draw  [color=setBorder  ,draw opacity=1 ][fill={rgb, 255:red, 255; green, 255; blue, 255 }  ,fill opacity=1 ][line width=3]  (120,185) .. controls (120,176.72) and (126.72,170) .. (135,170) .. controls (143.28,170) and (150,176.72) .. (150,185) .. controls (150,193.28) and (143.28,200) .. (135,200) .. controls (126.72,200) and (120,193.28) .. (120,185) -- cycle ;
\draw  [color=internal  ,draw opacity=1 ][fill={rgb, 255:red, 255; green, 255; blue, 255 }  ,fill opacity=1 ][line width=3]  (220,185) .. controls (220,176.72) and (226.72,170) .. (235,170) .. controls (243.28,170) and (250,176.72) .. (250,185) .. controls (250,193.28) and (243.28,200) .. (235,200) .. controls (226.72,200) and (220,193.28) .. (220,185) -- cycle ;
\draw  [color=setBorder  ,draw opacity=1 ][fill={rgb, 255:red, 255; green, 255; blue, 255 }  ,fill opacity=1 ][dash pattern={on 3.38pt off 3.27pt}][line width=3]  (170,85) .. controls (170,76.72) and (176.72,70) .. (185,70) .. controls (193.28,70) and (200,76.72) .. (200,85) .. controls (200,93.28) and (193.28,100) .. (185,100) .. controls (176.72,100) and (170,93.28) .. (170,85) -- cycle ;
\draw  [color=border  ,draw opacity=1 ][fill={rgb, 255:red, 255; green, 255; blue, 255 }  ,fill opacity=1 ][dash pattern={on 3.38pt off 3.27pt}][line width=3]  (70,85) .. controls (70,76.72) and (76.72,70) .. (85,70) .. controls (93.28,70) and (100,76.72) .. (100,85) .. controls (100,93.28) and (93.28,100) .. (85,100) .. controls (76.72,100) and (70,93.28) .. (70,85) -- cycle ;
\draw  [color=internal  ,draw opacity=1 ][fill={rgb, 255:red, 255; green, 255; blue, 255 }  ,fill opacity=1 ][line width=3]  (270,85) .. controls (270,76.72) and (276.72,70) .. (285,70) .. controls (293.28,70) and (300,76.72) .. (300,85) .. controls (300,93.28) and (293.28,100) .. (285,100) .. controls (276.72,100) and (270,93.28) .. (270,85) -- cycle ;

\draw (81,242.4) node [anchor=north west][inner sep=0.75pt]  [font=\sizeFtwo]  {$\mathbf{t_{0} .x_{0}}$};
\draw (201,252.4) node [anchor=north west][inner sep=0.75pt]  [font=\sizeFtwo]  {$\mathbf{t_{1} .x_{1}}$};
\draw (301,252.4) node [anchor=north west][inner sep=0.75pt]  [font=\sizeFtwo]  {$\mathbf{t_{2} .x_{2}}$};
\draw (141,145.4) node [anchor=north west][inner sep=0.75pt]  [font=\sizeFtwo]  {$\mathbf{t_{4} .x_{4}}$};
\draw (261,172.4) node [anchor=north west][inner sep=0.75pt]  [font=\sizeFtwo]  {$\mathbf{t_{3} .x_{3}}$};
\draw (102,88.4) node [anchor=north west][inner sep=0.75pt]  [font=\sizeFtwo]  {$\mathbf{t_{5} .x_{5}}$};
\draw (202,88.4) node [anchor=north west][inner sep=0.75pt]  [font=\sizeFtwo]  {$\mathbf{t_{6} .x_{6}}$};
\draw (301,95.4) node [anchor=north west][inner sep=0.75pt]  [font=\sizeFtwo]  {$\mathbf{t_{7} .x_{7}}$};
\draw (279,242.4) node [anchor=north west][inner sep=0.75pt]  [font=\sizeFtwop,color={rgb, 255:red, 155; green, 155; blue, 155 }  ,opacity=1 ]  {$:a$};
\draw (201,242.4) node [anchor=north west][inner sep=0.75pt]  [font=\sizeFtwop,color={rgb, 255:red, 155; green, 155; blue, 155 }  ,opacity=1 ]  {$:a$};
\draw (102,288.4) node [anchor=north west][inner sep=0.75pt]  [font=\sizeFtwop,color={rgb, 255:red, 155; green, 155; blue, 155 }  ,opacity=1 ]  {$:a$};
\draw (99,152.4) node [anchor=north west][inner sep=0.75pt]  [font=\sizeFtwop,color={rgb, 255:red, 155; green, 155; blue, 155 }  ,opacity=1 ]  {$:a$};
\draw (69,112.4) node [anchor=north west][inner sep=0.75pt]  [font=\sizeFtwop,color={rgb, 255:red, 155; green, 155; blue, 155 }  ,opacity=1 ]  {$:a$};
\draw (169,112.4) node [anchor=north west][inner sep=0.75pt]  [font=\sizeFtwop,color={rgb, 255:red, 155; green, 155; blue, 155 }  ,opacity=1 ]  {$:a$};
\draw (279,112.4) node [anchor=north west][inner sep=0.75pt]  [font=\sizeFtwop,color={rgb, 255:red, 155; green, 155; blue, 155 }  ,opacity=1 ]  {$:a$};
\draw (249,152.4) node [anchor=north west][inner sep=0.75pt]  [font=\sizeFtwop,color={rgb, 255:red, 155; green, 155; blue, 155 }  ,opacity=1 ]  {$:a$};
\draw (251,202.4) node [anchor=north west][inner sep=0.75pt]  [font=\sizeFtwop,color={rgb, 255:red, 155; green, 155; blue, 155 }  ,opacity=1 ]  {$:b$};
\draw (199,202.4) node [anchor=north west][inner sep=0.75pt]  [font=\sizeFtwop,color={rgb, 255:red, 155; green, 155; blue, 155 }  ,opacity=1 ]  {$:c$};
\draw (137,288.4) node [anchor=north west][inner sep=0.75pt]  [font=\sizeFtwop,color={rgb, 255:red, 155; green, 155; blue, 155 }  ,opacity=1 ]  {$:b$};
\draw (151,202.4) node [anchor=north west][inner sep=0.75pt]  [font=\sizeFtwop,color={rgb, 255:red, 155; green, 155; blue, 155 }  ,opacity=1 ]  {$:b$};
\draw (161,162.4) node [anchor=north west][inner sep=0.75pt]  [font=\sizeFtwop,color={rgb, 255:red, 155; green, 155; blue, 155 }  ,opacity=1 ]  {$:c$};
\draw (198,152.4) node [anchor=north west][inner sep=0.75pt]  [font=\sizeFtwop,color={rgb, 255:red, 155; green, 155; blue, 155 }  ,opacity=1 ]  {$:d$};
\draw (151,252.4) node [anchor=north west][inner sep=0.75pt]  [font=\sizeFtwop,color={rgb, 255:red, 155; green, 155; blue, 155 }  ,opacity=1 ]  {$:c$};

\end{tikzpicture}}
    \caption{\scalefont{0.8}$G$ with $X$ and $X^-$.}
    \label{fig : Induced subgraph 3}
\end{subfigure}
\caption{
{\em Induced subgraph and borders (a)(b):} 
We consider a graph $G$ and its induced subgraph $G_{\{x_2,x_3,x_7\}}$. Graphs have borders, as shown pointed by the dashed lines for $(a)$ $G$ and $(b)$ $G_{\{x_2,x_3,x_7\}}$. 
{\em Interior of a set (c):} We consider a set of positions $X = \{x_1,x_2,x_3,x_4,x_6,x_7\}$. Positions in this set are either interior (denoted $X^-$ and shown in dark blue) or in the boundary ($X\setminus X^-$ in cyan).}
\label{fig : Induced subgraph}
\end{figure*}

The notion of an induced graph 
retains only the internal states of vertices in $X$. But it also preserves all information about their connectivity, including connections to vertices at positions of ${\overline{X}}$\MC{, thus inducing on all internal vertices of a graph acts trivially (see Rmk.~\ref{rem:induction on internal vertices}). }
As illustrated in Fig.~\ref{fig : Induced subgraph}, any vertex not at a position of ${X}$, but connected to it becomes a border vertex. 

\begin{remark}{Induction on internal vertices is trivial}{induction on internal vertices}
    Note how Rmk.~\ref{rem:border vertices} implies that, for any graph $G$, we have :
    \begin{equation}
        G_{X} = G \equiv \I_G \subseteq X
    \end{equation}
    In particular $G_{\I_G}=G$, so the relation $\sqsubseteq$ is reflexive.
\end{remark}

We can now split a graph $G$ into two subgraphs, $G_X$ and $G_{\overline{X}}$. Later on, we will often apply an operation only to a subgraph of the graph. For this purpose, we need a notion of composition: having applied the local operation $A$ on $G_X$ we then recompose it with the rest of the graph $G_{\overline{X}}$, yielding a new graph $AG_X \sqcup G_{\overline{X}}$. When no operation is applied to the left-hand side, \MC{thanks to the \hyperref[graph : border-attachment]{border-attachment}} we simply recover the original graph $G=G_X \sqcup G_{\overline{X}}$, as illustrated in Ex.~\ref{ex:splitting a graph}.

\begin{definition}[Composition of graphs]
    Given $G,H\in \calG$ we introduce the operation $G\sqcup H$ which is only defined if $G$ and $H$ can both be obtained as induced subgraphs of the same graph, in which case $G\sqcup H$ is the smallest graph (for the order $\sqsubseteq$) such that $G\sqsubseteq G\sqcup H$ and $H\sqsubseteq G\sqcup H$.
\end{definition}

\begin{remark}{Composition as union}{composition of graphs}
    Whenever $G\sqcup H$ is defined, its vertices and edges are obtained by taking the corresponding unions in $G$ and $H$:
    \begin{align*}
        \I_{G\sqcup H} &= \I_G\cup \I_H, &
        \V_{G\sqcup H} &= \V_G\cup \V_H, &
        \E_{G\sqcup H} &= \E_G\cup \E_H.
    \end{align*}
    Its internal-state map is inherited from its components: $\sigma_{G\sqcup H}(u)=\sigma_G(u)$ for $u\in \I_G$, and $\sigma_{G\sqcup H}(u)=\sigma_H(u)$ for $u\in \I_H$. 
    In particular, $\sigma_G$ and $\sigma_H$ necessarily agree on $\I_G\cap \I_H$.
\end{remark}

\begin{example}{Splitting a graph in two subgraphs}{splitting a graph}
    \begin{minipage}{0.54\linewidth}
        Given a set of positions $X$, we can always split $G$ into two subgraphs, $G_X$ and $G_{\overline{X}}$, and recover the original graph via composition, i.e. $G_X \sqcup G_{\overline{X}} = G$. As an illustration, consider the graph $G$ from Ex.~\ref{ex:first graph} and the set of positions $X={3,6}$. In this case, $\overline{X}=\calX \setminus {3,6}$, but $G_{\overline{X}} = G_{{2,4,5}}$, since only internal vertices matter when taking an induced subgraph.
    \end{minipage}
    \begin{minipage}{0.45\linewidth}
        \centering
        \includegraphics[width=\linewidth]{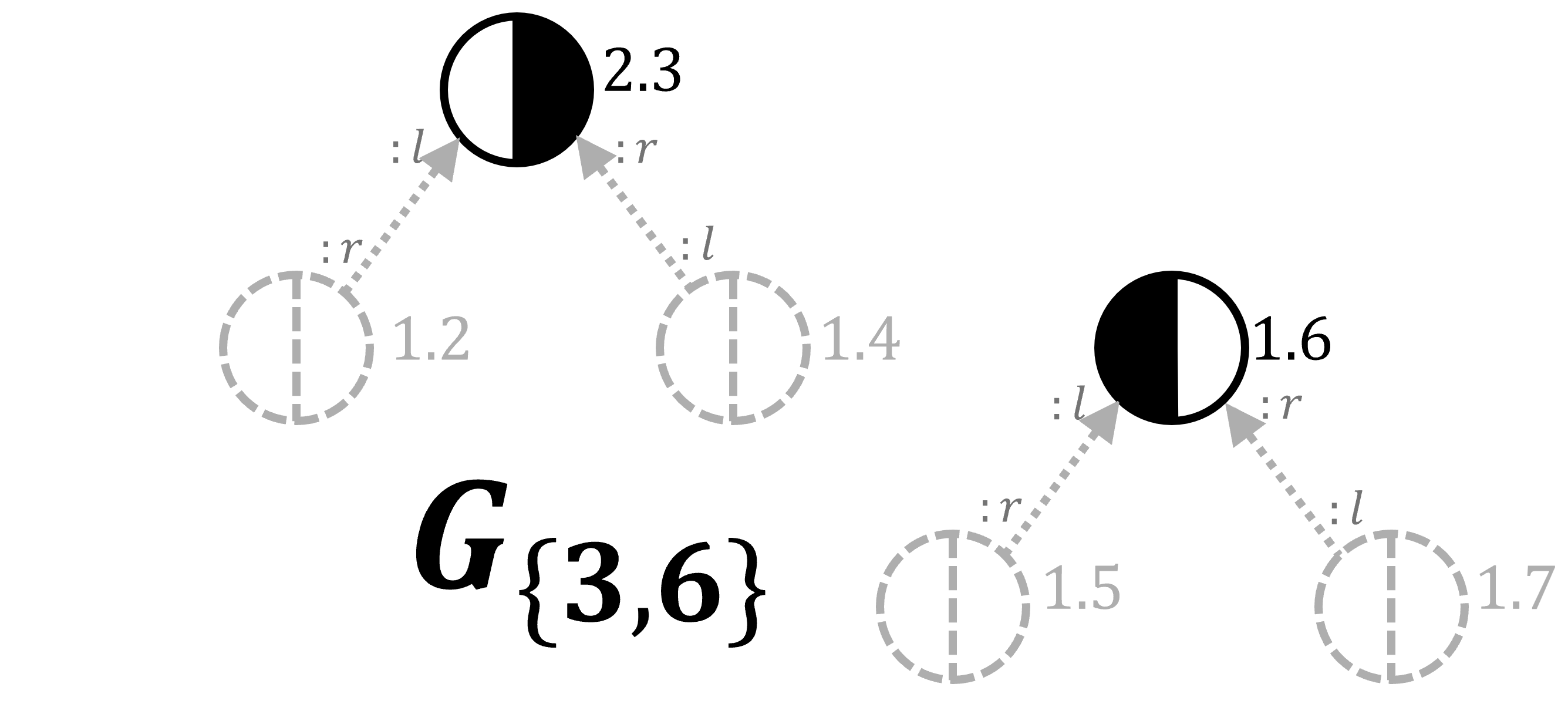}
        \centering
        \includegraphics[width=\linewidth]{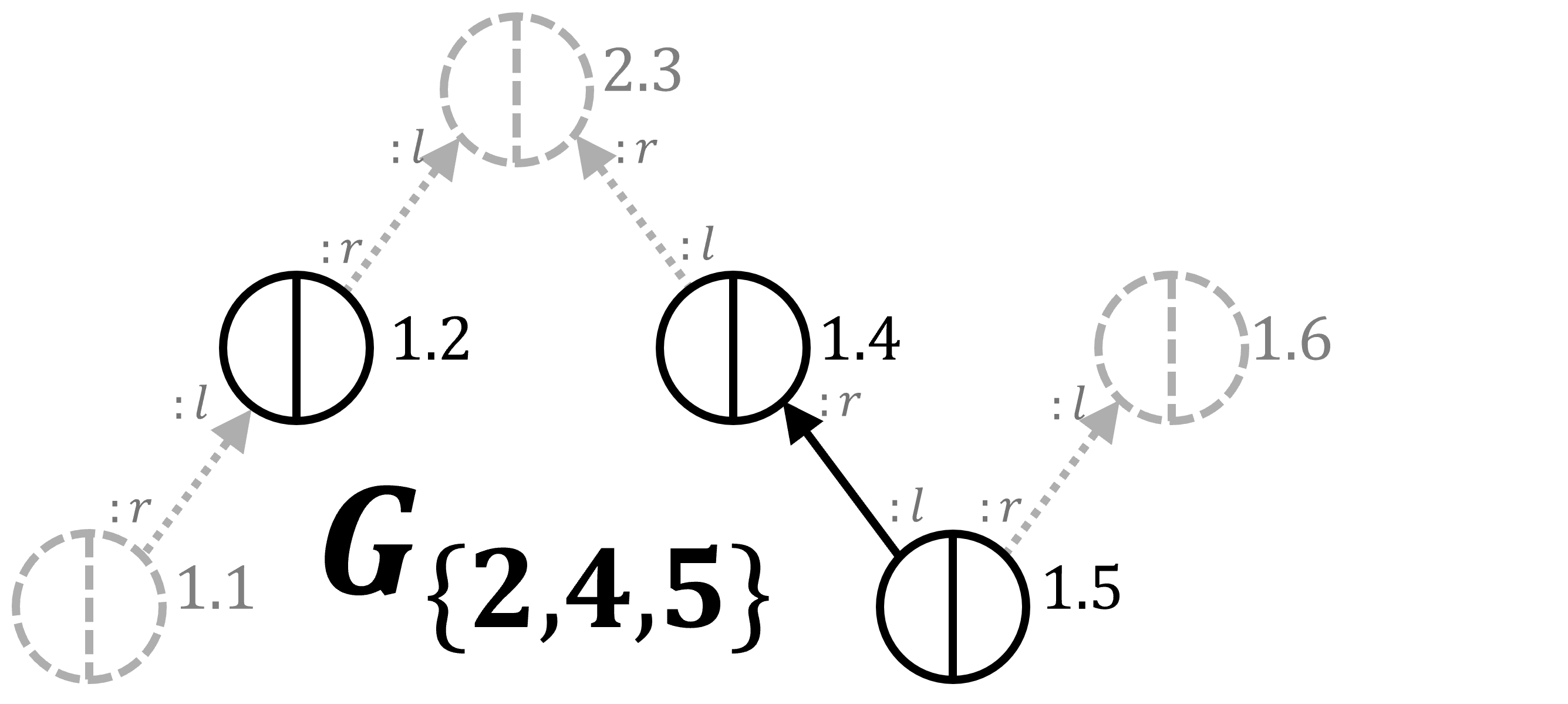}
    \end{minipage}
\end{example}

The preceding remark describes the result of composition whenever it is defined. The non-trivial question is whether two given graphs are compatible---that is, whether the union of their data forms a graph in which both occur as induced subgraphs. This question will arise when a subgraph $G_X$ is transformed into $AG_X$ and then recomposed with its context $G_{\overline{X}}$. The following lemma characterises compatibility in the case relevant here, where the two graphs have disjoint sets of internal vertices.


\begin{lemma}{Characterization of compatible disjoint subgraphs}{compatible subgraphs}
    Consider two graphs $G$ and $H$ that share no internal vertices---i.e. $\I_G\cap \I_H=\emptyset$. $H\sqcup G$ is well defined if and only if :
    \begin{enumerate}
        \item positions are unique---i.e. $\forall t.x\in \V_G, \forall t'\!.x'\in \V_H,\ x= x'\implies t=t'$;
        \item all border edges are coherent, meaning that border edges of $G$ linked to internal vertices of $H$ must also exist in $H$---i.e. for all $e\in \dangling{G}$ in between $u\in \I_G$ and $v\in \B_G\cap \I_H$, we have $e\in \dangling{H}$ and $u\in \B_H$---and symmetrically border edges of $H$ pointing towards internal vertices of $G$ must also exist in $G$.
        \item the pseudo graph $U$ defined by $\V_U = \V_H\cup \V_G$ and $\E_U = \E_H\cup \E_G$ has no cycle, and only contain finite chains.
        \item all common border vertices $v\in \B_H\cap \B_G$ have their ports not oversaturated---i.e. edges connected to $v$ in $\E_H$ use different ports than edges connected to $v$ in $\E_G$.
    \end{enumerate}
\end{lemma}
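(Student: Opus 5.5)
We prove the two implications separately, with the union $U$ of condition~3 as the only candidate for $H\sqcup G$. By Rmk.~\ref{rem:composition of graphs}, whenever $H\sqcup G$ exists it has $\I=\I_H\cup\I_G$, $\V=\V_H\cup\V_G$, $\E=\E_H\cup\E_G$ and the inherited state map. So the lemma amounts to characterising when the tuple
\[
U=(\I_H\cup\I_G,\ (\V_H\cup\V_G)\setminus(\I_H\cup\I_G),\ \E_H\cup\E_G,\ \sigma_H\cup\sigma_G)
\]
is a graph in which $G$ and $H$ are induced subgraphs. Since $\I_G\cap\I_H=\emptyset$, the state map is well defined.

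\emph{Necessity.} Suppose $H\sqcup G$ is defined, so $G,H\sqsubseteq K$ for some graph $K$; take $K=H\sqcup G=U$. Each condition then follows from the axioms of Def.~\ref{def : graphs} for $K$:
\begin{itemize}
\item Condition~1 follows from \hyperref[graph : unicity of positions]{unicity of positions} in $K$, since $\V_G\cup\V_H\subseteq\V_K$.
\item Condition~3 follows from \hyperref[graph : acyclicity]{acyclicity} and \hyperref[graph : finite chains]{finite chains} in $K$, since $\E_U\subseteq\E_K$.
\item Condition~4 follows from \hyperref[graph : port-saturation]{port-saturation} at a common border vertex $v$.
\item For condition~2, take a border edge $e$ of $G$ between $u\in\I_G$ and $v\in\B_G\cap\I_H$. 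Then $e\in\E_K$ touches $v\in\I_K\cap\I_H$. By Def.~\ref{def : induced graph}, $H=K_{\I_H}$ contains every edge of $K$ touching $\I_H$, so $e\in\E_H$. Because $u\notin\I_H$, the vertex $u$ is a border vertex of $H$. The symmetric case is identical.
\end{itemize}

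\emph{Sufficiency.} Assume conditions~1--4. We first check that $U$ is a graph.
\begin{itemize}
\item Acyclicity and finite chains are exactly condition~3.
\item Unicity of positions within $G$ and within $H$ is given, and across them it is condition~1.
\item Border-attachment is inherited: a border vertex of $U$ is a border vertex of $G$ or of $H$, and so it keeps its edge.
\item No edge of $U$ joins two border vertices of $U$. Each edge of $\E_G$ touches $\I_G\subseteq\I_U$, and similarly for $\E_H$.
\end{itemize}

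The \textbf{main obstacle} is port-saturation, because it is the step where the edge sets interact. Take two edges, $e\in\E_G$ and $e'\in\E_H$, sharing a port $v\!:\!p$; we must show $e=e'$ by cases on $v$.
\begin{itemize}
\item If $v\in\I_G$, then $e'$ touches $v$. The vertex $v$ is not internal to $H$, so it lies in $\B_H\cap\I_G$. By condition~2, $e'\in\E_G$, and port-saturation in $G$ gives $e=e'$. The case $v\in\I_H$ is symmetric.
\item If $v\in\B_G\cap\B_H$, condition~4 forbids the clash outright.
\end{itemize}
The clause $u\!:\!a\neq v'\!:\!b'$ is handled the same way, since both edges then meet at a common port. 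The clause requiring an edge to determine its endpoint ports follows from the same case analysis.

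It remains to show $G=U_{\I_G}$ and $H=U_{\I_H}$. By definition, $U_{\I_G}$ has internal vertices $\I_U\cap\I_G=\I_G$, the states $\sigma_G$, and as edges those edges of $\E_U$ touching $\I_G$. Every edge of $\E_G$ qualifies. Conversely, an edge of $\E_H$ touching some $w\in\I_G$ has $w\in\B_H\cap\I_G$, so by condition~2 it lies in $\E_G$. Hence the edge sets agree, and so do the border vertices. Thus $G,H\sqsubseteq U$.

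Finally, minimality holds because any common supergraph must contain all of $U$'s data, again by Rmk.~\ref{rem:composition of graphs}. Therefore $H\sqcup G$ is defined and equals $U$.
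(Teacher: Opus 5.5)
Your proof is correct and takes the same route as the paper's. Necessity comes from the graph axioms of $H\sqcup G$ together with the fact that $H$ is the induced subgraph on $\I_H$; sufficiency comes from checking that the union is a graph, with port-saturation obtained from conditions~2 and~4. You are more careful than the paper in the case analysis for port-saturation, in verifying $U_{\I_G}=G$, and in addressing minimality. For minimality, though, it is cleaner to argue directly that $K_{\I_G\cup\I_H}=U$ for any common supergraph $K$ than to cite the composition remark, which presupposes that $H\sqcup G$ is already defined.
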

\begin{proof}
    Suppose $\I_H\cap \I_G=\emptyset$ and $H\sqcup G$ well defined. 1/ Positions cannot appear multiple times, otherwise either $H\sqcup G$ would not be a graph because of \hyperref[graph : unicity of positions]{unicity of positions}, or they would contradict $\I_H\cap \I_G=\emptyset$. 2/ Now consider a border edge $e$ of $G$, in between $u\in \I_G$ and $v\in \I_H\cap \B_G$. Then $e\in \E_{H\sqcup G}$. Since $H= (H\sqcup G)_{U}$ and $v\in \I_H$, we necessarily also have $e\in \E_H$. Since $\I_H\cap \I_G = \emptyset$, this implies $u\in \B_H$, thus $e\in \dangling{H}$. 3/ and 4/ $H\sqcup G$ being larger than both $H$ and $G$, implies that it contains all their edges and vertices, thus $3/$ is true and \hyperref[graph : port-saturation]{port-saturation} of $H\sqcup G$ implies $4/$.\\
    Suppose the four conditions above. Then the pseudo graph $K$, with $\I_K = \I_H\cup \I_G$, $\sigma_K = \sigma_G$ on $\I_G$ and $\sigma_K = \sigma_H$ on $\I_H$, is a graph such that $K_{\I_H} = H$ and $K_{\I_G}=G$. Indeed \hyperref[graph : acyclicity]{acyclicity} and \hyperref[graph : finite chains]{finite chains} come directly from $3/$, \hyperref[graph : border-attachment]{border-attachment} comes from border-attachement of $G$ and $H$, \hyperref[graph : unicity of positions]{unicity of positions} comes from $1/$, and \hyperref[graph : port-saturation]{port-saturation} comes from $4/$ and $2/$.
\end{proof}

The second condition of Lem.~\ref{lem:compatible subgraphs} implies that the transformation $AG_X$ must preserve certain border edges between $G_X$ and $G_{\overline{X}}$. Consequently, any vertex of $G_X$ connected to such an edge must also be preserved. These vertices therefore play a special role when defining local operations, so we refer to them as \emph{boundary vertices}. Figure.~\ref{fig : Induced subgraph 3} illustrates them.

\begin{definition}[Boundary, Interior]\label{def : boundary}
    We denote $X^-$ the set $\{x\in \sp{\I_G}  \;|\; \sp{\V_{G_x}} \subseteq X\}$, and call these the interior positions of $X$ in $G$. 
    Its other vertices, namely $X\setminus X^-$, are referred to as the boundary of $X$ in $G$, see Fig.~\ref{fig : Induced subgraph}.
\end{definition}

The vertices of our graphs are named. In the quantum setting, anonymous graphs~\cite{ArrighiNamesInQG} lead to superluminal signalling. 
Still, in the end the names of the vertices are just intended to describe the geometry, to be able to single out events, and nothing else. Consequently, the constructions introduced below must be independent of the particular names chosen for the vertices. We formalise changes of naming as follows.
\begin{definition}[Renaming]\label{def:renaminginvariance}
Consider a function $R:\calX\rightarrow \calV$ mapping each position $x$ to a name $R(x) = t.y$. It is a renaming whenever its action on positions $\sp{R(.)}:x\mapsto \sp{R(x)}$ is a bijection on $\calX$.
It is extended to act on any $t.y\in \calV$ as follows: $R(t.y)=t.R(y)$. 
It is extended to act upon graphs by renaming their nodes, written $R(G)$.
\end{definition}

\begin{remark}{Renaming graphs}{renaming on graphs}
    Since the induced map on positions $\sp{R(\cdot)}$ is bijective, renaming preserves all the graph conditions. Hence $G\in\calG$ implies $R(G)\in\calG$. It also preserves induced-subgraph inclusion:
    \[
        G\sqsubseteq H \implies R(G)\sqsubseteq R(H).
    \]
    Finally, whenever $G\sqcup H$ is defined, so is $R(G)\sqcup R(H)$, and
    \[
        R(G)\sqcup R(H)=R(G\sqcup H).
    \]
\end{remark}

Although the preceding constructions are defined on the full set $\calG$, applications generally involve a more restricted family of graphs $\sshift\subseteq\calG$. This family must remain stable under the operations used throughout the paper. In particular, taking an induced subgraph or renaming a graph should not leave $\sshift$. We also require closure under disjoint union, which allows us to adjoin a renamed copy with fresh vertices. We do not, however, require closure under arbitrary compatible compositions. Combined with closure under induced subgraphs, such a requirement would be extremely restrictive: arbitrarily small fragments of two graphs in $\sshift$ could be recomposed and would necessarily remain in $\sshift$, thereby preventing $\sshift$ from expressing most constraints extending beyond radius one.

\begin{definition}[Closed subset of graphs]\label{def : closure}
Consider $\sshift\subseteq\mathcal{G}$. This $\sshift$ is said to be closed under
\begin{description}[itemsep = 0.25ex, topsep = 0.5ex]
    \item[renaming\label{S : renaming}] if $G\in \sshift$ implies $R(G)\in \sshift$ where $R$ is a \hyperref[def:renaminginvariance]{renaming} (see Def.~\ref{def:renaminginvariance}).
    \item[inclusion\label{S : inclusion}] if $G\in \sshift$ implies $\forall H\sqsubseteq G,\ H\in \sshift$.
    \item[disjoint-union\label{S : disjoint-union}] if $G, H\in \sshift$ and $\V_G\cap \V_H=\varnothing$ implies $G\sqcup H\in \sshift$.
\end{description}
\end{definition}
Given $L\subseteq\calG$, we denote by $\sshift_L$ the smallest closed subset of graphs containing $L$, and say that $L$ generates $\sshift_L$. In the remainder of the paper, $\sshift$ denotes a set satisfying these three closure properties. In particular, all results apply to $\calG$, which is itself closed.


We now give a precise example of such an $\sshift$, which will serve as the domain for the main example of dynamic presented in the next sections. 

\begin{example}{Particle system---Defining $\sshift$}{PS---sshift}
    We fix $\pi = \{l,r\}$ and $\Sigma=\{0,1\}^2$. Internal states are pairs of bits representing the presence of a left-moving particle or not, and of a right-moving particle or not. Ports specify spatial directions (left or right). Let $L$ be the set of infinite ``line'' graphs, i.e. each edge goes either from left port $:\!l$ to $:\!r$ or from right port $:\!r$ to $:\!l$, thus forming an infinite line without borders. We define $\sshift$ as the closed subset of graphs generated by $L$.
\end{example}

\subsection{Neighbourhood and local rules}\label{sec: local rules}

The decomposition $G=G_X\sqcup G_{\overline{X}}$ (Ex.~\ref{ex:splitting a graph}) provides the basic mechanism for defining local transformations: isolate a region of the graph, transform it, and recompose it with the unchanged context.

We first define a \emph{neighbourhood scheme} $\restri{}$, which associates with a position $x$ and a graph $G$ a set of nearby positions $\restr{x}{G}$. We then define a local rule $A_{(-)}$ by requiring that its action at $x$ replace only the induced subgraph $G_{\restr{x}{G}}$:
\[
    A_xG=(A_xG_{\restr{x}{G}})\sqcup G_{\overline{\restr{x}{G}}}.
\]
Thus, the transformation is determined by the selected neighbourhood, while the rest of the graph remains unchanged. This construction is close in spirit to double-pushout graph rewriting~\cite{HarmerFundamentals}, although here it is expressed directly through the partial composition $\sqcup$.


\subsubsection{Neighbourhood scheme}

A neighbourhood scheme is a function yielding the position of the nodes that are considered ``closed'' to a position $x$. A neighbourhood scheme $\restri{}$ does not return an arbitrary set of positions, but those of a \emph{cone} of $x$. By a cone of $x$, we mean a connected subgraph of $G$ obtained by starting from a past node at position $x$ and exploring forward along the directed edges. Intuitively, cones represent the regions of the graph that $x$ could potentially influence. The neighbourhood scheme then chooses one such cone, denoted $G_{\restr{x}{G}}$. This idea is reminiscent of relativity theory, where an event can only signal within its future light-cone. It is also reminiscent of DAGs of dependencies between processes: a result is sent to those processes which depend upon it.


\begin{definition}[Cones]
    A \emph{(forward) cone} of $x$ is a graph $C\in\sshift$ such that $x\in\sp{\Past(C)}$ and, for every $v\in \I_C$, there exists a directed path from the vertex at position $x$ to $v$ that lies entirely in $\I_C$, a.k.a {\em accessibility}. We denote by $\xCone$ the set of cones of $x$, and by $\xcone{}$ the set of all cones.\\ 
\end{definition}

\begin{figure*}[h]
\centering
\begin{subfigure}{0.3\textwidth}
    \captionsetup{justification=centering}
    \centering
    \includegraphics[width=0.55\textwidth]{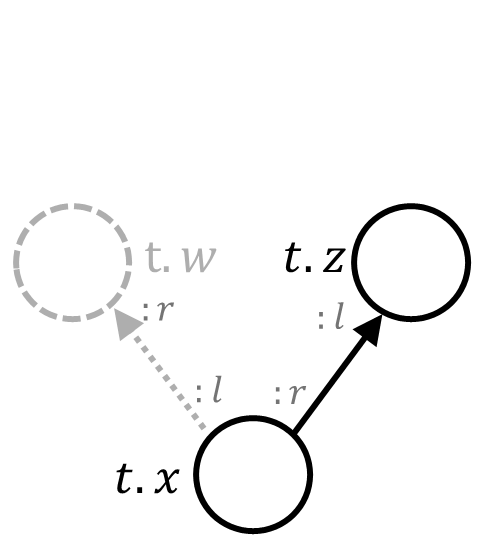}
    \caption{$\restri{x}$ can sometimes be undefined.}
    \label{fig : NS undefined}
\end{subfigure}
\begin{subfigure}{0.3\textwidth}
    \captionsetup{justification=centering}
    \centering
    \includegraphics[width=0.55\textwidth]{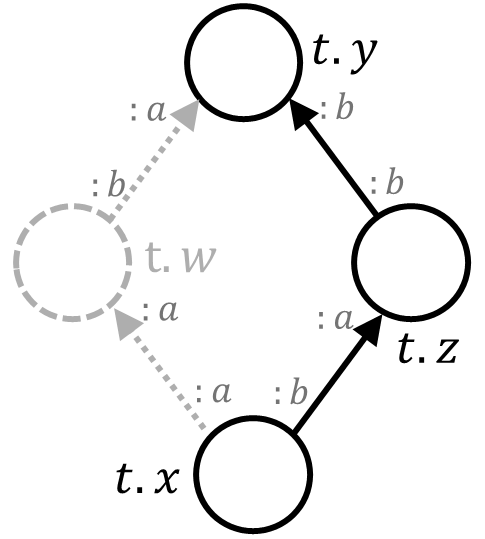}
    \caption{When it is defined it returns a cone.}
    \label{fig : cone 1}
\end{subfigure}
\begin{subfigure}{0.3\textwidth}
    \captionsetup{justification=centering}
    \centering
    \includegraphics[width=0.55\textwidth]{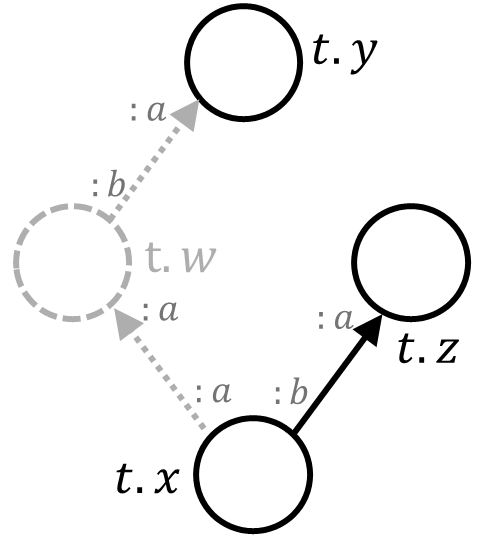}
    \caption{So this graph is never a neighbourhood.}
    \label{fig : cone 2}
\end{subfigure}
\caption{\emph{Properties of neighbourhood.} \emph{(a)} In this graph, the neighbourhood scheme of Ex.~\ref{ex:PS---NS} is undefined, as it lacks the required information: $t.x$ is too close to the border vertex $t.w$. \emph{(b)} This graph is a cone of $x$. Thus it can be the output $G_{\restri{x}}$ of some neighbourhood scheme $\restri{}$. \emph{(c)} This graph is not a cone, since the vertex $t.y$ violates the accessibility condition. Thus it is never the output $G_{\restri{x}}$ of any neighbourhood scheme $\restri{}$.}
\label{fig : examples of cones}
\end{figure*}

Thus, the neighbourhood scheme $\restri{x}$ constructs $G_{\restri{x}}$, a cone of $x$, on a domain $\graphvalidexplicit{\restri{}}{x}$ of graphs containing enough information to determine such a cone.\\
To guarantee that $\restri{x}$ is undefined only when the graph genuinely lacks information, we require \emph{stability}: once $\restri{x}$ is defined on a graph $G$, it must also be defined on all larger graphs. Lastly, we impose a condition called \emph{extensivity}, which guarantees that the neighbourhood of $x$ is indeed determined locally. We will discuss this condition in more detail later.

\begin{definition}[Neighbourhood scheme]\label{def:neighbourhood}
A \emph{(forward) neighbourhood scheme} $\restri{}$ is a function which maps any position $x\in \mathcal{X}$ to a function $\restri{x} : \graphvalidexplicit{\restri{}}{x} \to P(\mathcal{X})$ on some domain $\graphvalidexplicit{\restri{}}{x} \subseteq \sshift$ such that for any  $H \in \sshift$ and $G\in \graphvalidexplicit{\restri{}}{x}$ we have:
\begin{description}[itemsep = 0ex, topsep = 0.25ex, parsep = 0ex]
    \RX{}{\item{unambiguity:} $\forall G\in \upast{u},\ |\{x\in\X \mid x\namesubseteq\sp{u}\textrm{ and }G\in\graphvalid{\restri{}}{x}\}|\leq 1$.}
    \item[cone\label{neighbourhood : cone}] $\restr{x}{G}\subseteq\sp{\I_G}$ and $G_{\restri{x}}$ is in $\xCone$,
    \item[\closeddomain\label{neighbourhood : stability}] $G_{\restri{x}}\sqsubseteq H$ implies $H\in\graphvalid{\restri{}}{x}$.
    \item[extensivity\label{neighbourhood : extensivity}] $G_{\restri{x}} \sqsubseteq H\sqsubseteq G \text{ implies } G_{\restri{x}} = H_{\restri{x}}.$
    \item[renaming-invariance\label{neighbourhood : renaming-invariance}] for all renaming $R$, we have $R(G)\in \graphvalidexplicit{\restri{}}{\sp{R(x)}}$ and $\restri{\sp{R(x)}}(R(G))=\sp{R(\restri{x}(G))}$.
\end{description}
where we introduced $G_{\restri{x}}$ as a shorthand for $G_{\restr{x}{G}}$.\\
We denote $\diskA{x}:= \{ G_{\restri{x}}\mid G\in\graphvalidexplicit{\restri{}}{x} \}$ the \emph{set of disks} of $x$.
\end{definition}

\begin{example}{Particle system---Neighbourhood scheme}{PS---NS}
    \begin{minipage}{\linewidth}
        We start by fixing $\graphvalidexplicit{\restri{}}{x}$ to be the set of graphs in $\sshift$ such that $x\in \Past(G)$ and both neighbours of $t.x$ are internal vertices. Then for all $G\in \graphvalidexplicit{\restri{}}{x}$, we fix $\restr{x}{G}=\{y,x,z\}$ where $y$ and $z$ are the positions of the two neighbours of $t.x$. On the example graph $G$ on the right, $\restri{3}(G)=\{2,3,4\}$. The vertex $1.3$ belongs to the interior (Def.~\ref{def : boundary}) of the graph (pictured in dark blue), whilst $1.2$ and $1.4$ are boundary vertices (pictured in cyan).
    \end{minipage}
    \begin{minipage}{\linewidth}
        \centering
        \begin{tabular}{cc}
            \begin{minipage}{0.49\linewidth}
                \centering
                \includegraphics[width=\linewidth]{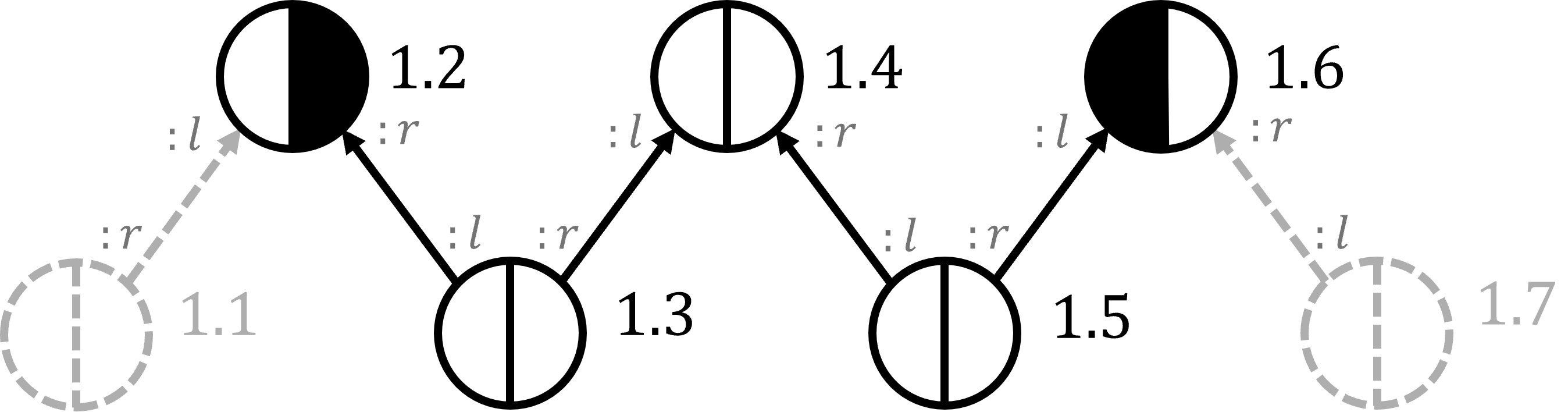}
                \captionof*{figure}{A graph $G\in \sshift$.} 
            \end{minipage}
            &
            \begin{minipage}{0.49\linewidth}
                \centering
                \includegraphics[width=0.67\linewidth]{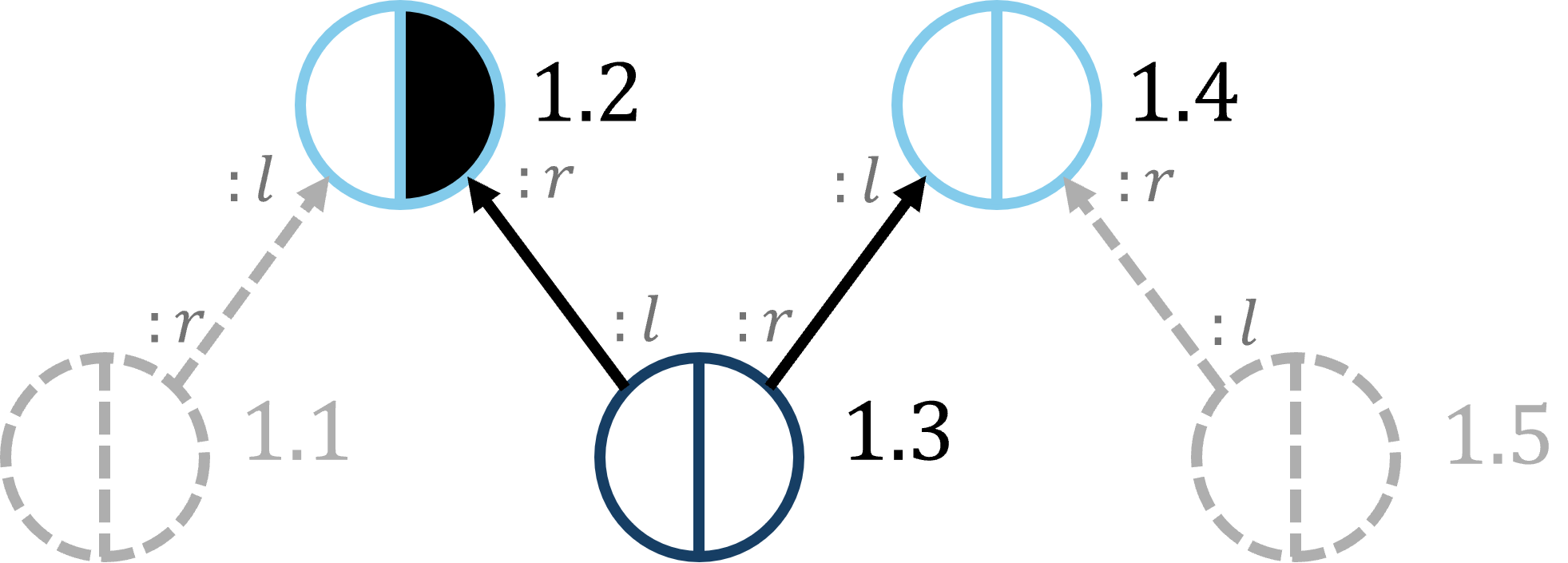}
                \captionof*{figure}{A neighbourhood $G_{\restri{3}}$.} 
            \end{minipage}
        \end{tabular}
    \end{minipage}
\end{example}

From the neighbourhood scheme we also derive a set of disks $\diskA{x}$. These disks are the neighbourhoods that can be extracted from any $G \in \sshift$, and they will serve as the basic building blocks on which the local rule is defined. Moreover, by extensivity, disks are unchanged under the action of the neighbourhood scheme, as explained in Rmk.~\ref{rem:idempotence}.

\begin{remark}{Extensivity implies idempotence}{idempotence}
    For any disk $G_{\restri{x}}\in \diskA{x}$, we have $(G_{\restri{x}})_{\restri{x}} = G_{\restri{x}}$. Indeed by \hyperref[neighbourhood : extensivity]{extensivity} we have $\restr{x}{G_{\restri{x}}}=\restr{x}{G}$ since $G_{\restri{x}}\sqsubseteq G_{\restri{x}}\sqsubseteq G$. By Rmk.~\ref{rem:induction on internal vertices}, this implies $(G_{\restri{x}})_{\restri{x}} = G_{\restri{x}}$.
\end{remark}

Notice that our definition of neighbourhood schemes is quite permissive. It even allows for unbounded neighbourhood (if the directed edges are interpreted as lightlike this still respects bounded speed of propagation of information), even though for all practical purposes one is likely to use a bounded neighbourhood scheme. Our results will apply so long as the abstract conditions are met. For some of them we need extensivity, which is there to forbid that global criteria be used to decide whether some closeby vertex belongs to the neighbourhood or not (see Ex.~\ref{ex:non extensive}). It demands that the neighbourhood $\restr{x}{G}$ as computed by the function $\restri{}$ over a graph $G$, be the same as that computed over a big enough subgraph of $G$. It can be understood as a form of graph-locality of the neighbourhood scheme $\restri{}$ itself. Typically, if $\restr{x}{G}$ is computed step by step starting from $x$ until a suitable disk is found, and no preferred disk is present, then it is extensive (see Ex.~\ref{ex:infinite neighbourhood}). 

\begin{example}{A non extensive neighbourhood}{non extensive}
    \begin{minipage}{\linewidth}
        Consider the non-local function $\restri{x}$ defined as follows: if it is applied to a graph $G$ that does not contain a black node, it returns just $x$. If it is applied to a graph $H$ that does contain a black node, it instead returns $x$ together with its two neighbouring positions ($y$ and $w$). This function is non-local because the size of the neighbourhood depends on the state of the unconnected node $t.z$. The issue arises because $\restri{}$ is not extensive. Indeed, consider $H_{\restri{x}}$. Since it does not contain any black node, we should have $\restr{x}{H_{\restri{x}}} = x$, which contradicts idempotency (Rmk.~\ref{rem:idempotence}).
    \end{minipage}\\
    \begin{minipage}{\linewidth}
        \begin{tabular}{cc}
            \begin{minipage}{0.49\linewidth}
                \centering
                \captionsetup{justification=centering}
                \includegraphics[width=0.6\linewidth]{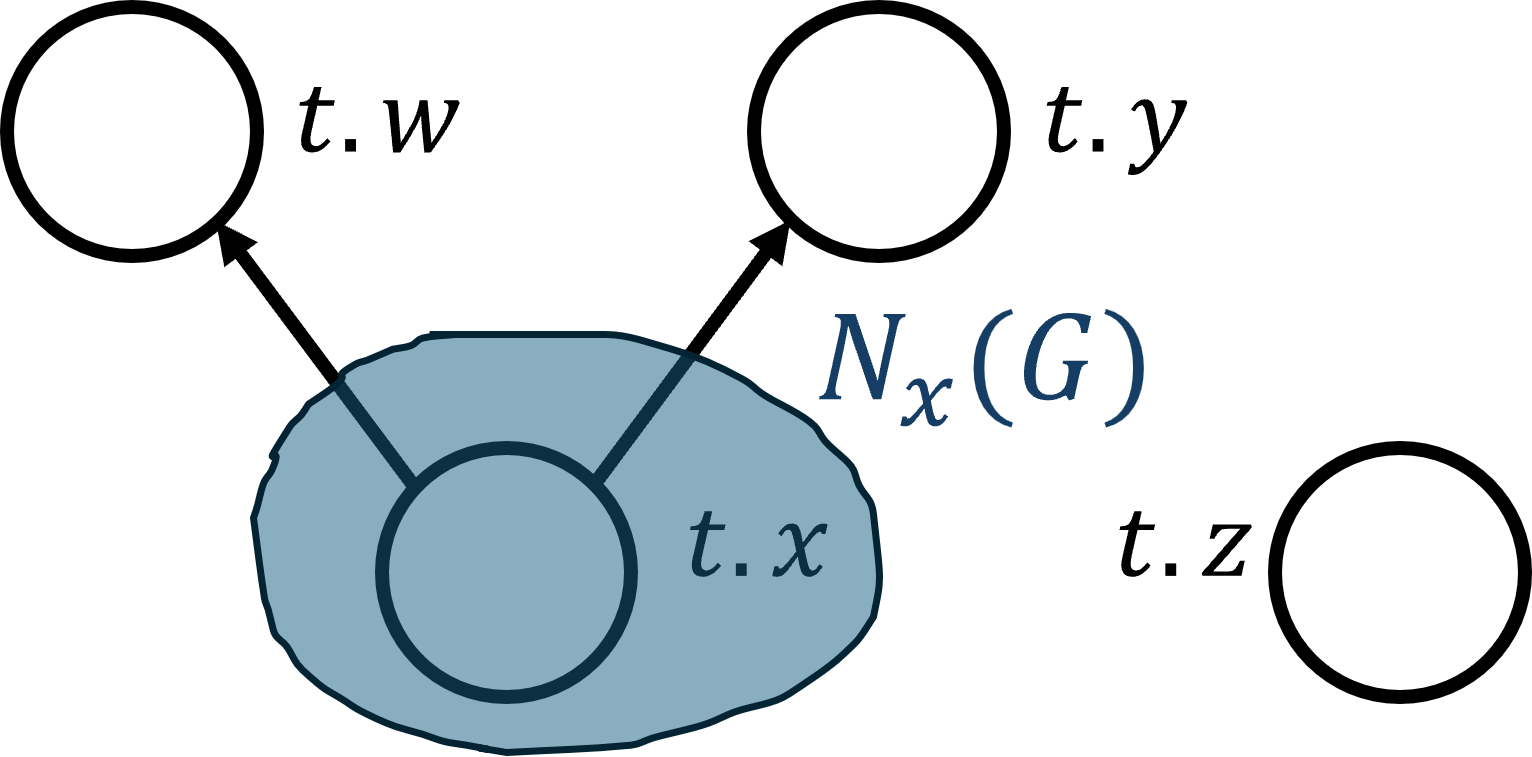}
                \captionof{figure}{$G$ and $\restr{x}{G}$.}      
            \end{minipage}
            &
            \begin{minipage}{0.49\linewidth}
                \centering
                \captionsetup{justification=centering}
                \includegraphics[width=0.6\linewidth]{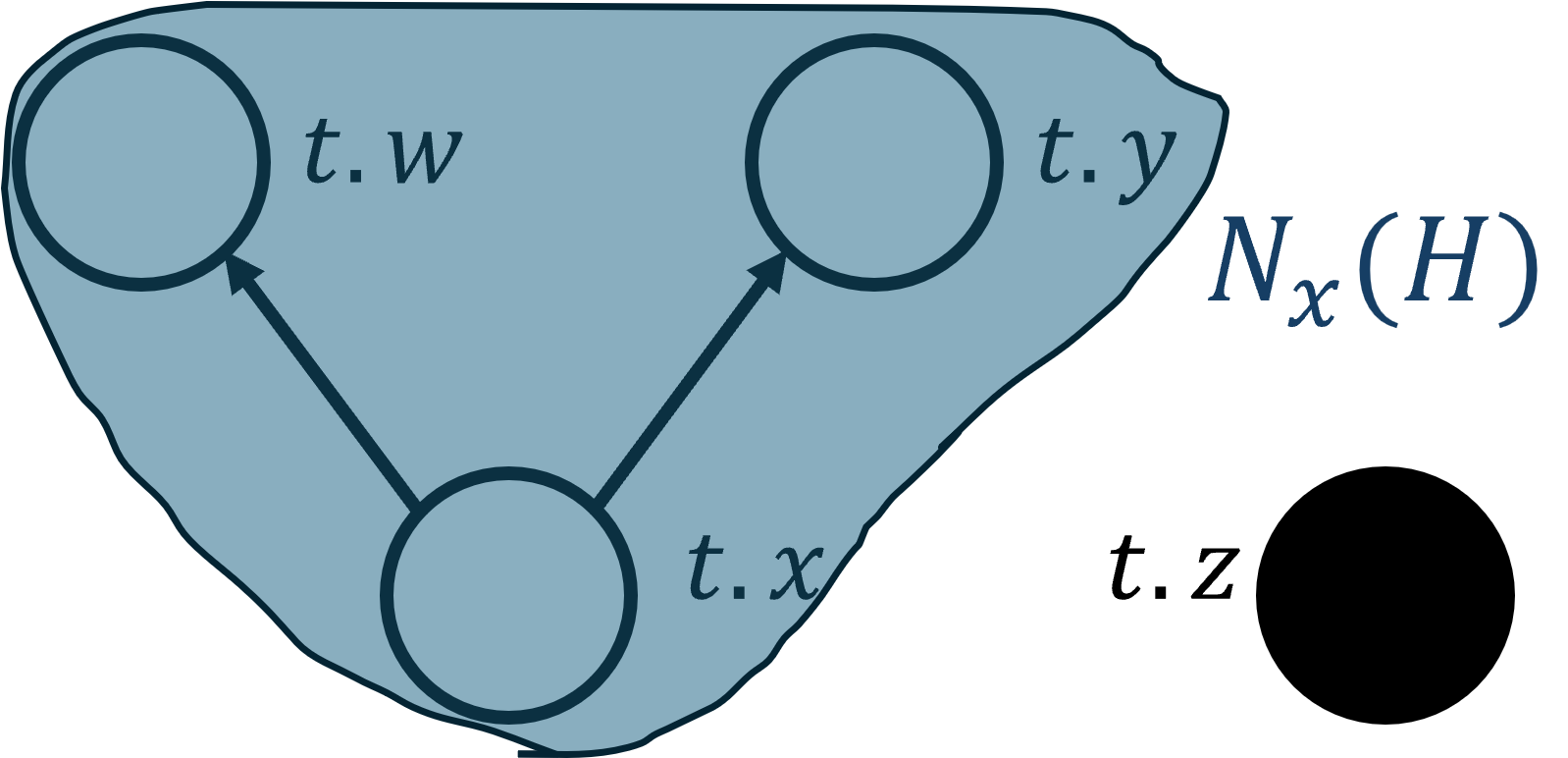}
                \captionof{figure}{$H$ and $\restr{x}{H}$}
            \end{minipage}
        \end{tabular}
    \end{minipage}
\end{example}

\begin{example}{An extensive unbounded neighbourhood}{infinite neighbourhood}
    \begin{minipage}{\linewidth}
        Here we work on $\sshift$ defined in Ex.~\ref{ex:PS---sshift}.
        Given a graph $G$ and $x \in \Past(G)$, the function $\restri{x}(G)$ returns the set of positions of all vertices reached by the following exploration:
        start from $x$ and move forward along directed edges, layer by layer (distance $0$, $1$, $2$, …), stopping when a “dead end” is reached — that is, when the next edge points in the opposite direction (see $e=(1.8\!:\!l, 2.7\!:\!r)$), or when it has reached a border (see $e'=(2.3\!:\!l, 2.2\!:\!r)$).
        This function is extensive, it outputs the same positions in any graph $H$ such that $G_{\restri{5}}\sqsubseteq H \sqsubseteq G$, because the exploration in $H$ stops at the same vertices on both sides. On the right side, it is because $G_{\restri{5}}\sqsubseteq H $, thus we still have the edge $e$ in the opposite direction. On the left side $H \sqsubseteq G$ enforces $2.2\in \B_H$, thus $e'$ is still a border edge.
    \end{minipage}\\
    \begin{minipage}{\linewidth}
        \begin{tabular}{cc}
            \begin{minipage}{0.49\linewidth}
                \centering
                \captionsetup{justification=centering}
                \includegraphics[width=\linewidth]{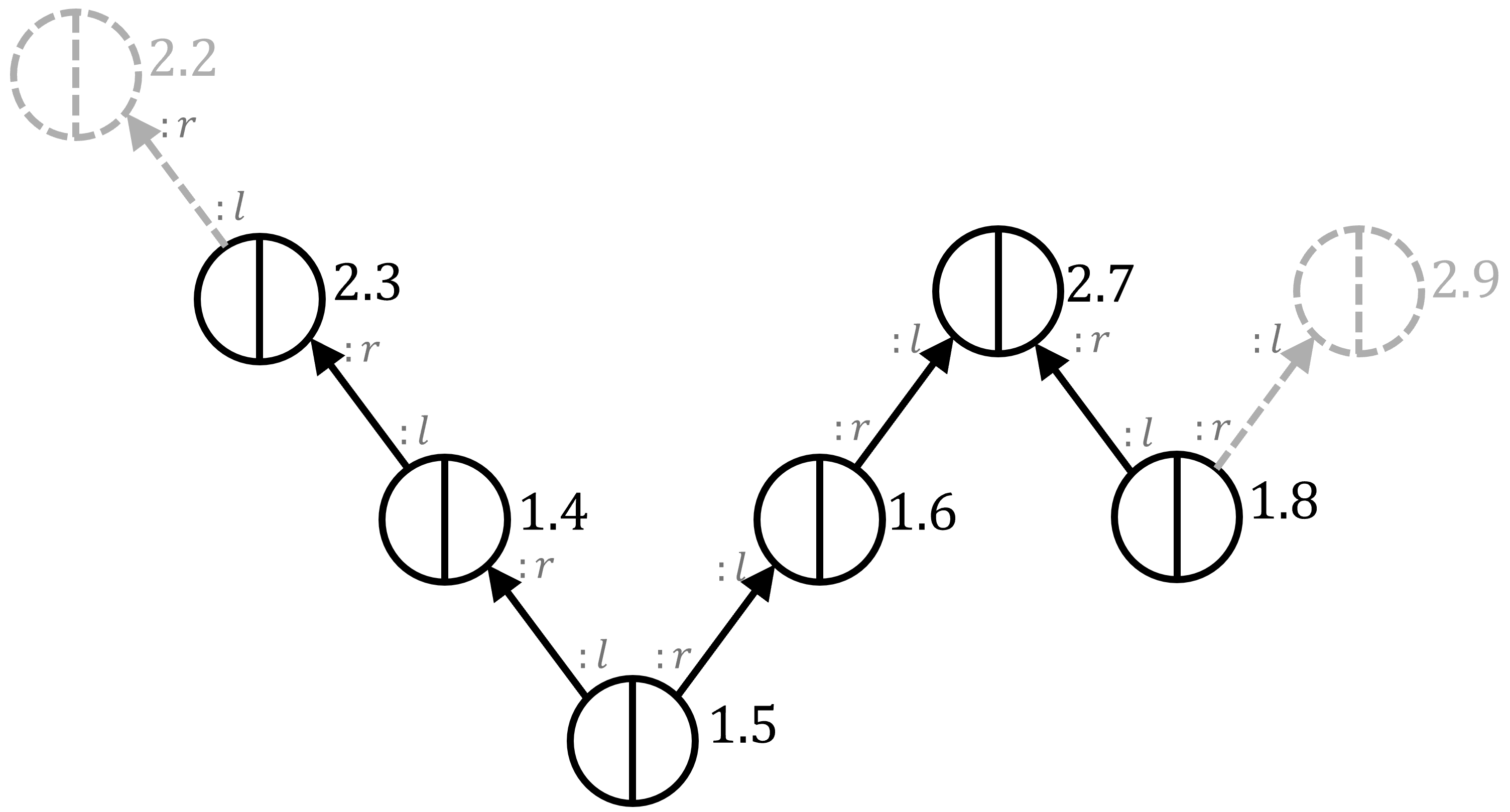}
                \captionof{figure}{$G$.}         
            \end{minipage}
            &
            \begin{minipage}{0.49\linewidth}
                \centering
                \captionsetup{justification=centering}
                \includegraphics[width=\linewidth]{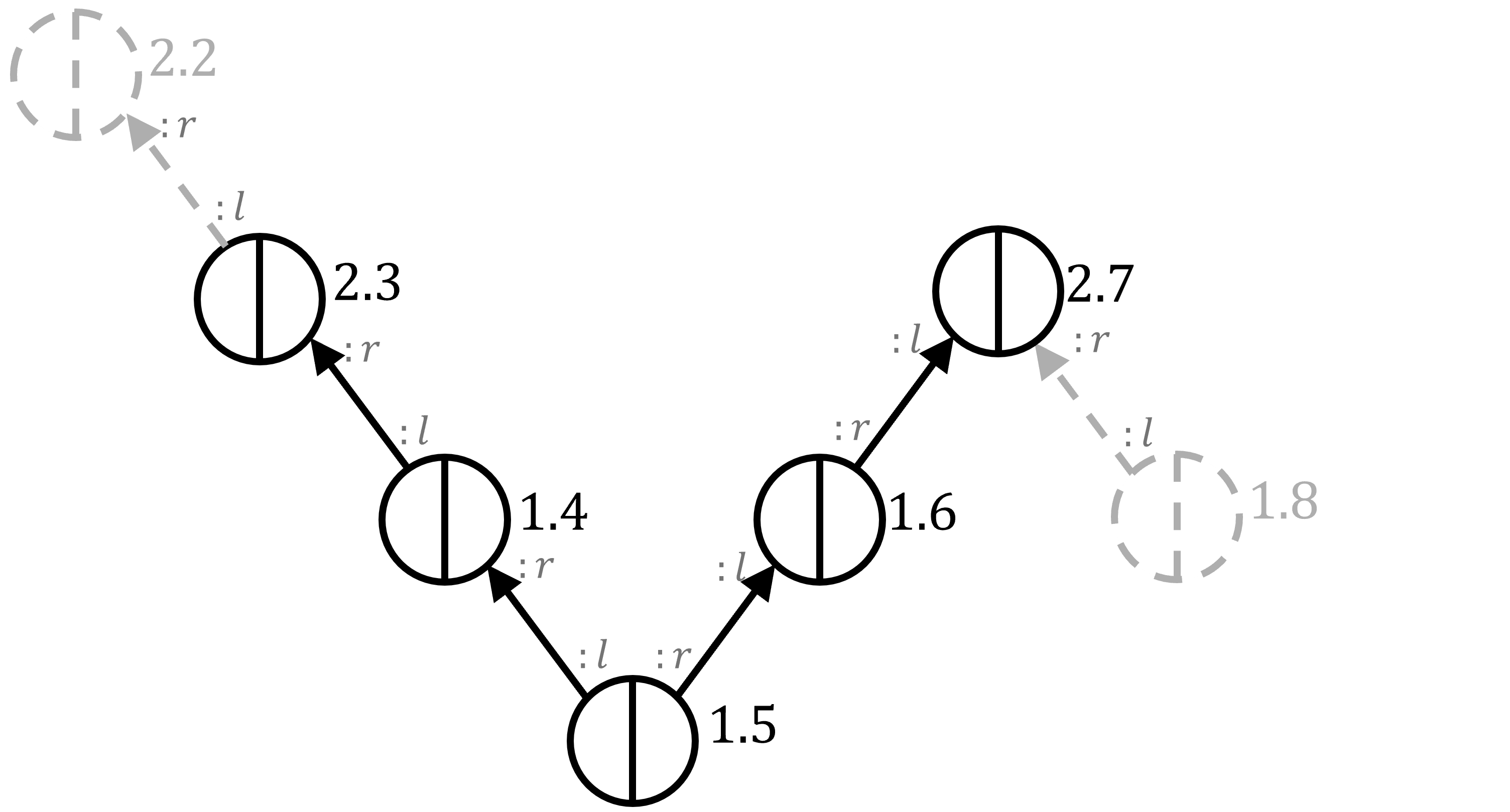}
                \captionof{figure}{$G_{\restri{5}}$.}
            \end{minipage}
        \end{tabular}
    \end{minipage}
\end{example}


\subsubsection{Local rule}\label{sec:local rule}

Next we define the notion of a local rule $A_{(-)}$. Before we proceed, remember that our graphs can never contain both vertices $u=t.x$ and $u'=(t+1).x$. It follows that the local rule $A_x$ will act unambiguously on the unique vertex of the form $u=t.x$ that is present in the graph. Keep in mind also that our directed edges stand for dependencies between events, i.e. if $u=t.x$ points to $v=t'.y$, then $v$ is considered ahead in time of $u$, frozen awaiting information from $u$. The action of some local rule $A_y$ on $v$ is therefore trivial, preventing $y$ to be computed too far ahead and thereby providing a weak synchronisation mechanism. Moreover, the action of $A_x$ on $u$ is non-trivial only if $\restri{x}$ is defined on the graph $G$ considered, capturing the fact that enough information is present. This forces $u$ to be at a past position. Such a vertex can be thought of as lagging behind in time, and no longer awaiting any information---it has reached its local normal form. The action of $A_x$ is to dispose of it possibly modifying $v$ and other dependencies, and creating vertex $u'=t'.x$ in some provisional state.

\reduced{
\begin{definition}[Local rule]\label{def : locality}
    Given some forward neighbourhood scheme $\restri{}$, an \emph{$\restri{}$-local operator} $A_{(-)}$ is a renaming-invariant operator from $\mathcal{X}\times\sshift$ to $\sshift$ such that $\forall x\in\X$, 
    $\forall G \in \sshift,$
    $$A_x G := 
    \begin{cases}
    (A_x G_\restri{x}) \sqcup G_\crestri{x}&\textrm{if $G\in \graphvalid{\restri{}}{x}$}\\
    G&\textrm{otherwise}
    \end{cases}
    $$
    where the action of $A_x$ on $G_{\restri{x}}$ must be ``context-preserving''---i.e. it preserves the border $\B_{G}= \B_{A_xG}$ and its border edges $e\in \dangling{G} \implies e \in \E_{A_x G}$.
\end{definition}
}{
\begin{definition}[Local rule]\label{def : locality}
    Given some forward neighbourhood scheme $\restri{}$, an \emph{$\restri{}$-local operator} $A_{(-)}$ is an operator from $\mathcal{X}\times\sshift$ to $\sshift$ such that $\forall x\in\X$ and $\forall G \in \sshift$, 
    $A_{\sp{R(x)}}(R(G))=R(A_x G)$ for every renaming $R$, and
    $$A_x G := 
    \begin{cases}
    (A_x G_\restri{x}) \sqcup G_\crestri{x}&\textrm{if $G\in \graphvalid{\restri{}}{x}$}\\
    G&\textrm{otherwise}
    \end{cases}
    $$
    where the action of $A_x$ on $G_{\restri{x}}$ must preserve the border---i.e. $\B_{G_{\restri{x}}}= \B_{A_xG_{\restri{x}}}$.
\end{definition}
}
\begin{remark}{$A_{(-)}$ is determined by its action on $\diskA{x}$}{local rule and disks}
    The definition above may seem inductive, as $A$ appears on both sides of the equation. However, by idempotence (Rmk.~\ref{rem:idempotence}) of the neighbourhood scheme, this induction stops after the first step. Thus the action of $A_x$ is fully determined by its effect on the set of disks $\diskA{x}$. Hence, we will always define local rules only on this set of graphs.
\end{remark}
\begin{example}{Particle system---Local rule}{PS---LR}
        $A_x$ acts by consuming the internal state $i=\sigma^r_G(v)$ of vertex $v$ (and symmetrically with $j=\sigma^l_G(w)$), thereby moving those particles at $x$ if they are present. It also updates ports 
        , flips the arrows pointing to $x$, and increments its timetag, in order to move the vertex from past $u=t.x$ to future $u'= 1.u = (t+1).x$. Dashed edges and vertices do not influence the local rule.
    \begin{tabular}{cc}
        \begin{minipage}{0.49\linewidth}
            \centering
            \captionsetup{justification=centering}
            \resizebox{0.5\linewidth}{!}{\tikzset{every picture/.style={line width=0.75pt}} 

\begin{tikzpicture}[x=0.75pt,y=0.75pt,yscale=-1,xscale=1]

\draw [color={rgb, 255:red, 74; green, 74; blue, 74 }  ,draw opacity=1 ][line width=2.25]  [dash pattern={on 2.53pt off 3.02pt}]  (276,176) -- (321.96,104.74) ;
\draw [shift={(324.67,100.53)}, rotate = 122.82] [fill={rgb, 255:red, 74; green, 74; blue, 74 }  ,fill opacity=1 ][line width=0.08]  [draw opacity=0] (14.29,-6.86) -- (0,0) -- (14.29,6.86) -- cycle    ;
\draw [color={rgb, 255:red, 74; green, 74; blue, 74 }  ,draw opacity=1 ][line width=2.25]  [dash pattern={on 2.53pt off 3.02pt}]  (176,176) -- (126.52,104.64) ;
\draw [shift={(123.67,100.53)}, rotate = 55.26] [fill={rgb, 255:red, 74; green, 74; blue, 74 }  ,fill opacity=1 ][line width=0.08]  [draw opacity=0] (14.29,-6.86) -- (0,0) -- (14.29,6.86) -- cycle    ;
\draw [color={rgb, 255:red, 74; green, 74; blue, 74 }  ,draw opacity=1 ][line width=2.25]  [dash pattern={on 2.53pt off 3.02pt}]  (116,284) -- (162.19,202.88) ;
\draw [shift={(164.67,198.53)}, rotate = 119.66] [fill={rgb, 255:red, 74; green, 74; blue, 74 }  ,fill opacity=1 ][line width=0.08]  [draw opacity=0] (14.29,-6.86) -- (0,0) -- (14.29,6.86) -- cycle    ;
\draw [color={rgb, 255:red, 0; green, 0; blue, 0 }  ,draw opacity=1 ][line width=2.25]    (226,274) -- (189.91,202) ;
\draw [shift={(187.67,197.53)}, rotate = 63.38] [fill={rgb, 255:red, 0; green, 0; blue, 0 }  ,fill opacity=1 ][line width=0.08]  [draw opacity=0] (14.29,-6.86) -- (0,0) -- (14.29,6.86) -- cycle    ;
\draw [color={rgb, 255:red, 74; green, 74; blue, 74 }  ,draw opacity=1 ][line width=2.25]  [dash pattern={on 2.53pt off 3.02pt}]  (335,285) -- (291.05,203.93) ;
\draw [shift={(288.67,199.53)}, rotate = 61.54] [fill={rgb, 255:red, 74; green, 74; blue, 74 }  ,fill opacity=1 ][line width=0.08]  [draw opacity=0] (14.29,-6.86) -- (0,0) -- (14.29,6.86) -- cycle    ;
\draw [line width=2.25]    (226,274) -- (264.27,204.09) ;
\draw [shift={(266.67,199.7)}, rotate = 118.69] [fill={rgb, 255:red, 0; green, 0; blue, 0 }  ][line width=0.08]  [draw opacity=0] (14.29,-6.86) -- (0,0) -- (14.29,6.86) -- cycle    ;
\draw  [color={rgb, 255:red, 74; green, 74; blue, 74 }  ,draw opacity=1 ][fill={rgb, 255:red, 255; green, 255; blue, 255 }  ,fill opacity=1 ][dash pattern={on 3.38pt off 3.27pt}][line width=3]  (320,285) .. controls (320,276.72) and (326.72,270) .. (335,270) .. controls (343.28,270) and (350,276.72) .. (350,285) .. controls (350,293.28) and (343.28,300) .. (335,300) .. controls (326.72,300) and (320,293.28) .. (320,285) -- cycle ;
\draw  [color={rgb, 255:red, 74; green, 74; blue, 74 }  ,draw opacity=1 ][fill={rgb, 255:red, 255; green, 255; blue, 255 }  ,fill opacity=1 ][dash pattern={on 3.38pt off 3.27pt}][line width=3]  (102,284) .. controls (102,276.27) and (108.27,270) .. (116,270) .. controls (123.73,270) and (130,276.27) .. (130,284) .. controls (130,291.73) and (123.73,298) .. (116,298) .. controls (108.27,298) and (102,291.73) .. (102,284) -- cycle ;
\draw  [color={rgb, 255:red, 38; green, 105; blue, 185 }  ,draw opacity=1 ][fill={rgb, 255:red, 255; green, 255; blue, 255 }  ,fill opacity=1 ][line width=3]  (200,274) .. controls (200,259.64) and (211.64,248) .. (226,248) .. controls (240.36,248) and (252,259.64) .. (252,274) .. controls (252,288.36) and (240.36,300) .. (226,300) .. controls (211.64,300) and (200,288.36) .. (200,274) -- cycle ;
\draw  [color={rgb, 255:red, 80; green, 227; blue, 194 }  ,draw opacity=1 ][fill={rgb, 255:red, 255; green, 255; blue, 255 }  ,fill opacity=1 ][line width=3]  (152,176) .. controls (152,162.75) and (162.75,152) .. (176,152) .. controls (189.25,152) and (200,162.75) .. (200,176) .. controls (200,189.25) and (189.25,200) .. (176,200) .. controls (162.75,200) and (152,189.25) .. (152,176) -- cycle ;
\draw  [color={rgb, 255:red, 80; green, 227; blue, 194 }  ,draw opacity=1 ][fill={rgb, 255:red, 255; green, 255; blue, 255 }  ,fill opacity=1 ][line width=3]  (252,176) .. controls (252,162.75) and (262.75,152) .. (276,152) .. controls (289.25,152) and (300,162.75) .. (300,176) .. controls (300,189.25) and (289.25,200) .. (276,200) .. controls (262.75,200) and (252,189.25) .. (252,176) -- cycle ;
\draw  [color={rgb, 255:red, 74; green, 74; blue, 74 }  ,draw opacity=1 ][fill={rgb, 255:red, 255; green, 255; blue, 255 }  ,fill opacity=1 ][dash pattern={on 3.38pt off 3.27pt}][line width=3]  (100,85) .. controls (100,76.72) and (106.72,70) .. (115,70) .. controls (123.28,70) and (130,76.72) .. (130,85) .. controls (130,93.28) and (123.28,100) .. (115,100) .. controls (106.72,100) and (100,93.28) .. (100,85) -- cycle ;
\draw  [color={rgb, 255:red, 74; green, 74; blue, 74 }  ,draw opacity=1 ][fill={rgb, 255:red, 255; green, 255; blue, 255 }  ,fill opacity=1 ][dash pattern={on 3.38pt off 3.27pt}][line width=3]  (320,85) .. controls (320,76.72) and (326.72,70) .. (335,70) .. controls (343.28,70) and (350,76.72) .. (350,85) .. controls (350,93.28) and (343.28,100) .. (335,100) .. controls (326.72,100) and (320,93.28) .. (320,85) -- cycle ;
\draw [color={rgb, 255:red, 80; green, 227; blue, 194 }  ,draw opacity=1 ][line width=3]    (176,152) -- (176,200) ;
\draw [color={rgb, 255:red, 80; green, 227; blue, 194 }  ,draw opacity=1 ][line width=3]    (276,152) -- (276,200) ;
\draw [color={rgb, 255:red, 74; green, 107; blue, 226 }  ,draw opacity=1 ][line width=3]    (226,250) -- (226,298) ;

\draw (250,288.4) node [anchor=north west][inner sep=0.75pt]  [font=\sizeFfour]  {$\mathbf{u=t.x}$};
\draw (130,165) node [anchor=north west][inner sep=0.75pt]  [font=\sizeFfour]  {$\mathbf{v}$};
\draw (300,165) node [anchor=north west][inner sep=0.75pt]  [font=\sizeFfour]  {$\mathbf{w}$};
\draw (200,201.4) node [anchor=north west][inner sep=0.75pt]  [font=\sizeFfourp,color={rgb, 255:red, 155; green, 155; blue, 155 }  ,opacity=1 ]  {$:r$};
\draw (230,200.4) node [anchor=north west][inner sep=0.75pt]  [font=\sizeFfourp,color={rgb, 255:red, 155; green, 155; blue, 155 }  ,opacity=1 ]  {$:l$};
\draw (246,232.4) node [anchor=north west][inner sep=0.75pt]  [font=\sizeFfourp,color={rgb, 255:red, 155; green, 155; blue, 155 }  ,opacity=1 ]  {$:r$};
\draw (184,232.4) node [anchor=north west][inner sep=0.75pt]  [font=\sizeFfourp,color={rgb, 255:red, 155; green, 155; blue, 155 }  ,opacity=1 ]  {$:l$};
\draw (181,162) node [anchor=north west][inner sep=0.75pt]   [align=left] {\textbf{{\sizeFfours i}}};
\draw (261,162) node [anchor=north west][inner sep=0.75pt]   [align=left] {\textbf{{\sizeFfours j}}};

\end{tikzpicture}}
            \captionof{figure}{$G_{\restri{x_1}}$.}
            \label{fig : Dynamic example 1; G}            
        \end{minipage}
        &
        \begin{minipage}{0.49\linewidth}
            \centering
            \resizebox{0.5\linewidth}{!}{\tikzset{every picture/.style={line width=0.75pt}} 

\begin{tikzpicture}[x=0.75pt,y=0.75pt,yscale=-1,xscale=1]

\draw [color={rgb, 255:red, 74; green, 74; blue, 74 }  ,draw opacity=1 ][line width=2.25]  [dash pattern={on 2.53pt off 3.02pt}]  (276,176) -- (323.91,103.6) ;
\draw [shift={(326.67,99.43)}, rotate = 123.49] [fill={rgb, 255:red, 74; green, 74; blue, 74 }  ,fill opacity=1 ][line width=0.08]  [draw opacity=0] (14.29,-6.86) -- (0,0) -- (14.29,6.86) -- cycle    ;
\draw [color={rgb, 255:red, 74; green, 74; blue, 74 }  ,draw opacity=1 ][line width=2.25]  [dash pattern={on 2.53pt off 3.02pt}]  (176,176) -- (118.13,103.9) ;
\draw [shift={(115,100)}, rotate = 51.25] [fill={rgb, 255:red, 74; green, 74; blue, 74 }  ,fill opacity=1 ][line width=0.08]  [draw opacity=0] (14.29,-6.86) -- (0,0) -- (14.29,6.86) -- cycle    ;
\draw [color={rgb, 255:red, 74; green, 74; blue, 74 }  ,draw opacity=1 ][line width=2.25]  [dash pattern={on 2.53pt off 3.02pt}]  (116,284) -- (162.14,205.08) ;
\draw [shift={(164.67,200.77)}, rotate = 120.31] [fill={rgb, 255:red, 74; green, 74; blue, 74 }  ,fill opacity=1 ][line width=0.08]  [draw opacity=0] (14.29,-6.86) -- (0,0) -- (14.29,6.86) -- cycle    ;
\draw [color={rgb, 255:red, 0; green, 0; blue, 0 }  ,draw opacity=1 ][line width=2.25]    (276,176) -- (239.95,105.88) ;
\draw [shift={(237.67,101.43)}, rotate = 62.79] [fill={rgb, 255:red, 0; green, 0; blue, 0 }  ,fill opacity=1 ][line width=0.08]  [draw opacity=0] (14.29,-6.86) -- (0,0) -- (14.29,6.86) -- cycle    ;
\draw [color={rgb, 255:red, 74; green, 74; blue, 74 }  ,draw opacity=1 ][line width=2.25]  [dash pattern={on 2.53pt off 3.02pt}]  (335,285) -- (292.34,204.42) ;
\draw [shift={(290,200)}, rotate = 62.1] [fill={rgb, 255:red, 74; green, 74; blue, 74 }  ,fill opacity=1 ][line width=0.08]  [draw opacity=0] (14.29,-6.86) -- (0,0) -- (14.29,6.86) -- cycle    ;
\draw [line width=2.25]    (176,176) -- (212.32,107.85) ;
\draw [shift={(214.67,103.43)}, rotate = 118.05] [fill={rgb, 255:red, 0; green, 0; blue, 0 }  ][line width=0.08]  [draw opacity=0] (14.29,-6.86) -- (0,0) -- (14.29,6.86) -- cycle    ;
\draw  [color={rgb, 255:red, 74; green, 74; blue, 74 }  ,draw opacity=1 ][fill={rgb, 255:red, 255; green, 255; blue, 255 }  ,fill opacity=1 ][dash pattern={on 3.38pt off 3.27pt}][line width=3]  (320,285) .. controls (320,276.72) and (326.72,270) .. (335,270) .. controls (343.28,270) and (350,276.72) .. (350,285) .. controls (350,293.28) and (343.28,300) .. (335,300) .. controls (326.72,300) and (320,293.28) .. (320,285) -- cycle ;
\draw  [color={rgb, 255:red, 74; green, 74; blue, 74 }  ,draw opacity=1 ][fill={rgb, 255:red, 255; green, 255; blue, 255 }  ,fill opacity=1 ][dash pattern={on 3.38pt off 3.27pt}][line width=3]  (102,284) .. controls (102,276.27) and (108.27,270) .. (116,270) .. controls (123.73,270) and (130,276.27) .. (130,284) .. controls (130,291.73) and (123.73,298) .. (116,298) .. controls (108.27,298) and (102,291.73) .. (102,284) -- cycle ;
\draw  [color={rgb, 255:red, 38; green, 105; blue, 185 }  ,draw opacity=1 ][fill={rgb, 255:red, 255; green, 255; blue, 255 }  ,fill opacity=1 ][line width=3]  (200,76) .. controls (200,61.64) and (211.64,50) .. (226,50) .. controls (240.36,50) and (252,61.64) .. (252,76) .. controls (252,90.36) and (240.36,102) .. (226,102) .. controls (211.64,102) and (200,90.36) .. (200,76) -- cycle ;
\draw  [color={rgb, 255:red, 80; green, 227; blue, 194 }  ,draw opacity=1 ][fill={rgb, 255:red, 255; green, 255; blue, 255 }  ,fill opacity=1 ][line width=3]  (152,176) .. controls (152,162.75) and (162.75,152) .. (176,152) .. controls (189.25,152) and (200,162.75) .. (200,176) .. controls (200,189.25) and (189.25,200) .. (176,200) .. controls (162.75,200) and (152,189.25) .. (152,176) -- cycle ;
\draw  [color={rgb, 255:red, 80; green, 227; blue, 194 }  ,draw opacity=1 ][fill={rgb, 255:red, 255; green, 255; blue, 255 }  ,fill opacity=1 ][line width=3]  (252,176) .. controls (252,162.75) and (262.75,152) .. (276,152) .. controls (289.25,152) and (300,162.75) .. (300,176) .. controls (300,189.25) and (289.25,200) .. (276,200) .. controls (262.75,200) and (252,189.25) .. (252,176) -- cycle ;
\draw  [color={rgb, 255:red, 74; green, 74; blue, 74 }  ,draw opacity=1 ][fill={rgb, 255:red, 255; green, 255; blue, 255 }  ,fill opacity=1 ][dash pattern={on 3.38pt off 3.27pt}][line width=3]  (100,85) .. controls (100,76.72) and (106.72,70) .. (115,70) .. controls (123.28,70) and (130,76.72) .. (130,85) .. controls (130,93.28) and (123.28,100) .. (115,100) .. controls (106.72,100) and (100,93.28) .. (100,85) -- cycle ;
\draw  [color={rgb, 255:red, 74; green, 74; blue, 74 }  ,draw opacity=1 ][fill={rgb, 255:red, 255; green, 255; blue, 255 }  ,fill opacity=1 ][dash pattern={on 3.38pt off 3.27pt}][line width=3]  (320,85) .. controls (320,76.72) and (326.72,70) .. (335,70) .. controls (343.28,70) and (350,76.72) .. (350,85) .. controls (350,93.28) and (343.28,100) .. (335,100) .. controls (326.72,100) and (320,93.28) .. (320,85) -- cycle ;
\draw [color={rgb, 255:red, 80; green, 227; blue, 194 }  ,draw opacity=1 ][line width=3]    (176,152) -- (176,200) ;
\draw [color={rgb, 255:red, 80; green, 227; blue, 194 }  ,draw opacity=1 ][line width=3]    (276,152) -- (276,200) ;
\draw [color={rgb, 255:red, 74; green, 107; blue, 226 }  ,draw opacity=1 ][line width=3]    (226,50) -- (226,98) ;

\draw (241,92.4) node [anchor=north west][inner sep=0.75pt]  [font=\sizeFfour]  {$\mathbf{( t+1) .x}$};
\draw (130,165) node [anchor=north west][inner sep=0.75pt]  [font=\sizeFfour]  {$\mathbf{v}$};
\draw (300,165) node [anchor=north west][inner sep=0.75pt]  [font=\sizeFfour]  {$\mathbf{w}$};
\draw (251,112.4) node [anchor=north west][inner sep=0.75pt]  [font=\sizeFfourp,color={rgb, 255:red, 155; green, 155; blue, 155 }  ,opacity=1 ]  {$:r$};
\draw (179,101.4) node [anchor=north west][inner sep=0.75pt]  [font=\sizeFfourp,color={rgb, 255:red, 155; green, 155; blue, 155 }  ,opacity=1 ]  {$:l$};
\draw (195,141.4) node [anchor=north west][inner sep=0.75pt]  [font=\sizeFfourp,color={rgb, 255:red, 155; green, 155; blue, 155 }  ,opacity=1 ]  {$:r$};
\draw (236,141.4) node [anchor=north west][inner sep=0.75pt]  [font=\sizeFfourp,color={rgb, 255:red, 155; green, 155; blue, 155 }  ,opacity=1 ]  {$:l$};
\draw (231,62) node [anchor=north west][inner sep=0.75pt]   [align=left] {\textbf{{\sizeFfours i}}};
\draw (211,62) node [anchor=north west][inner sep=0.75pt]   [align=left] {\textbf{{\sizeFfours j}}};

\end{tikzpicture}}
            \captionof{figure}{$A_{x_1}G_{\restri{x_1}}$.}
            \label{fig : Dynamic example 1; A_x G}
        \end{minipage}
    \end{tabular}
\end{example}
In the definition of a local rule, we require the local action on disks to be border-preserving---that is, to preserve border \reduced{edges}{vertices}. This is a standard assumption in the graph rewriting community \cite{HarmerFundamentals}, and it rules out pathological interface behaviours. As we make precise below, once we restrict to ``large enough'' graphs, this also enforces preservation of boundary positions (Def.~\ref{def : boundary}).\\
By requiring that a local rule is a function $A_x : \sshift \to \sshift$, we implicitly assume that $(A_x G_{\restri{x}}) \sqcup G_{\crestri{x}}$ is well defined for all $G \in \graphvalid{\restri{}}{x}$. In particular, the action of $A_x$ on $G_{\restri{x}}$ cannot create positions already present in $G_{\crestri{x}}$. Combined with renaming-invariance, this ensures that no new positions appear in $G_{\restri{x}}$, as stated in Lem.~\ref{lem:easy name preservation}. Hence $\I_{A_x G_{\restri{x}}} \cap \I_{G_{\crestri{x}}} = \emptyset$, and we may apply Lem.~\ref{lem:compatible subgraphs} to better understand what the well-definedness of the $\sqcup$ operation actually enforces.\\
To derive stronger constraints, we then restrict to a broad family of well-behaved graphs that are ``large enough'', namely graphs $G$ such that there exists $H\in\sshift$ with $\V_G\subseteq \I_H$ and $H_{\restri{x}}=G$. In that case, border vertices of $G$ lie in the context part of $H$, so the four conditions of Lem.~\ref{lem:compatible subgraphs} become informative at the interface. Condition (1) implies no more than name preservation (Lem.~\ref{lem:easy name preservation}). Condition (2) \reduced{is redundant with the stronger assumption of context-preservation.}{ yields boundary-position and dangling-edge preservation (Lem.~\ref{lem:border edges are preserved}).
} Condition (3) implies that the global action $A_x$ on $G$ must preserve acyclicity. Condition (4) adds nothing, since vertices belonging both to $\B_{A_x G_{\restri{x}}}= \B_{G_{\restri{x}}}$ and $\B_{G_{\crestri{x}}}$ already use different ports because of the \hyperref[graph : port-saturation]{port-saturation} of $G_{\restri{x}}\sqcup G_{\crestri{x}}$.

\reduced{
\begin{remark}{Only names of interior vertices are modified}{boundaries are stable}
    $A_x$ does not change the name of boundary vertices---i.e. for all $t.y\in \I_{G}$ we have $y\in \restri{x}\setminus \restri{x}^-$ implies $t.y\in \I_{A_x G}$. Indeed by definition such a boundary vertex $t.y$ is the extremity of a border edge $e$. Since by context-preservation $e\in \E_{A_xG}$, we also have $t.y\in \I_{A_x G}$.
\end{remark}}
{
\begin{lemma}{Position preservation}{easy name preservation}
  Let $A_x:{\sshift}\to {\sshift}$ be a local rule, then $\sp{\I_{A_x G}} \subseteq \sp{\I_{G}}$.
\end{lemma}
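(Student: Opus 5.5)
We argue by contradiction, combining the well-definedness of $\sqcup$ (implicit in $A_x$ being a function $\sshift\to\sshift$) with renaming-invariance. If $G\notin\graphvalid{\restri{}}{x}$ then $A_xG=G$ and there is nothing to prove. Otherwise $A_xG=(A_xG_{\restri{x}})\sqcup G_{\crestri{x}}$, so $\I_{A_xG}=\I_{A_xG_{\restri{x}}}\cup\I_{G_{\crestri{x}}}$. Positions of the second part are already in $\sp{\I_G}$. It therefore suffices to show that $\sp{\I_{A_xD}}\subseteq\sp{\I_D}$ for every disk $D=G_{\restri{x}}$, since $\sp{\I_D}=\restr{x}{G}\subseteq\sp{\I_G}$ by the \hyperref[neighbourhood : cone]{cone} condition.

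Suppose, towards a contradiction, that some $t'.y\in\I_{A_xD}$ has $y\notin\sp{\I_D}$. We split into two cases.

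\emph{Case $y\in\sp{\B_D}$.} Border preservation gives $\B_{A_xD}=\B_D$. The border vertex of $D$ at position $y$ then remains a border vertex of $A_xD$, while $t'.y$ is internal there. If the two names differ, \hyperref[graph : unicity of positions]{unicity of positions} is violated. If they coincide, $\I_{A_xD}\cap\B_{A_xD}\neq\emptyset$, which contradicts Def.~\ref{def : graphs}.

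\emph{Case $y\notin\sp{\V_D}$.} This is the main obstacle: $y$ is a fresh position, and nothing in $D$ alone forbids creating it. Here we use renaming-invariance and closure under \hyperref[S : disjoint-union]{disjoint-union}. Pick any graph $K\in\sshift$ containing an internal vertex at position $y$, for instance a single vertex obtained as an induced subgraph and renamed. By a renaming $R$ fixing $\sp{\V_D}$ pointwise (and hence $D$ and $x$), we arrange that $\V_{R(K)}\cap\V_D=\emptyset$ while $R(K)$ still has an internal vertex at position $y$. A suitable permutation of positions outside $\sp{\V_D}\cup\{y\}$ achieves this, together with a time shift so that the vertex at $y$ gets some name $s.y$, which can be chosen so that $s\neq t'$.

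Then $G':=D\sqcup R(K)\in\sshift$. Since $D\sqsubseteq G'$, \hyperref[neighbourhood : stability]{stability} gives $G'\in\graphvalid{\restri{}}{x}$, and \hyperref[neighbourhood : extensivity]{extensivity} (with $H=D$, or by idempotence, Rmk.~\ref{rem:idempotence}, applied to the disk of $G'$) gives $G'_{\restri{x}}=D$. Indeed the disk of $G'$ must lie inside the cone component $D$, as $R(K)$ is disconnected from it; this point needs some care. Hence $A_xG'=(A_xD)\sqcup G'_{\crestri{x}}$, with $G'_{\crestri{x}}\supseteq R(K)$ containing $s.y$, while $A_xD$ contains $t'.y$ with $t'\neq s$. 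Condition~1 of Lem.~\ref{lem:compatible subgraphs} then fails, so the composition is undefined, contradicting that $A_x$ is total on $\sshift$.

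Even without choosing $s\neq t'$, we can conclude: if $s=t'$ the vertex would be internal on both sides, contradicting the disjointness $\I_{A_xD}\cap\I_{G'_{\crestri{x}}}=\emptyset$ required for a sensible union. So the freedom in the time shift settles this case either way.
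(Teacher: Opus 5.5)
This is essentially the paper's argument. You adjoin a renamed fresh graph by disjoint union, use stability and extensivity to keep the disk unchanged, and let unicity of positions make $(A_xD)\sqcup G'_{\crestri{x}}$ undefined. Your reduction to the disk $D$, with an explicit choice $s\neq t'$, is if anything tidier than the paper's write-up, which applies border preservation to $\B_G$ rather than to the disk's border and leaves the time tag of the adjoined vertex at $y$ unspecified.

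There are two slips to repair. First, the renaming acts only on $K$, so it should not fix $\sp{\V_D}$: a permutation fixing those positions cannot move $K$ off them. It should instead send every position of $K$ outside $\sp{\V_D}$, with the chosen internal vertex landing on $s.y$. Second, drop the closing $s=t'$ remark. It is unnecessary, and it is also unjustified, since $\sqcup$ does not require disjoint interiors.
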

\begin{proof}
    We proceed by contradiction. Consider $G\in  \graphvalid{\restri{}}{x}$, and $y\in \sp{\I_{A_x G}}\setminus\sp{\I_{G}}$. 
    Construct a graph $G'\in \sshift$ such that $ \I_G\cup \{t.y\} \subseteq \I_{G'}$ and $G\sqsubseteq G'$. Such a graph can exist since $y\notin\sp{\V_{G}}=\sp{\I_G}\cup \sp{\B_G}$ (by border preservation we have $y\notin \sp{\B_G}$), but we must make sure that $G\in \sshift$. This can be done by first constructing $R(G)$ a copy of $G$ which contains only fresh names, including $t.y$. This $R(G)\in \sshift$ since $\sshift$ is closed under \hyperref[S : renaming]{renaming}. Second, since $\sshift$ is closed under \mbox{\hyperref[S : disjoint-union]{disjoint union}}, we can take $G'= G\sqcup R(G)$. Note that by \hyperref[neighbourhood : stability]{stability} of $\restri{}$ we have $G'\in \graphvalid{\restri{}}{x}$. Moreover we have $G'_{\restri{x}}\sqsubseteq G$ since the neighbourhood cannot select some unconnected nodes in $R(G)$, so by extensivity, $G'_\restri{x}=G_\restri{x}$, from which it follows that $A_xG'=(A_x G_\restri{x}) \sqcup G'_\crestri{x}$. But this union is undefined as $y\in \sp{\I_{A_x G_\restri{x}}}$ and $t.y\in G'_\crestri{x}$ violates the \hyperref[graph : unicity of positions]{unicity of positions} condition.  This proves $\sp{\I_{A_xG}} \subseteq \sp{\I_{G}}$.
\end{proof}

\begin{lemma}{Boundary preservation}{border edges are preserved}
    For any graph $G$ such that $\exists H\in\sshift,\ \V_G\subseteq \I_H,\ H_{\restri{x}}=G$, $A_x$ preserves boundary vertices and border edges---i.e.
    \begin{align*}
        \forall u\in \I_G,\sp{u}\in\restri{x}\setminus \restri{x}^- \implies u\in \I_{A_xG} && \dangling{G} = \dangling{A_x G}
    \end{align*}
\end{lemma}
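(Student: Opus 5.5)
The plan is to work inside the ambient graph $H$ and use the fact that $A_xH=(A_xG)\sqcup H_{\crestri{x}}$ must be well defined. Since $H_{\restri{x}}=G$, idempotence (Rmk.~\ref{rem:idempotence}) gives $G_{\restri{x}}=G$, so $A_x$ acts on the disk $G$ itself. By Lem.~\ref{lem:easy name preservation}, $\sp{\I_{A_xG}}\subseteq\sp{\I_G}$. Hence $\I_{A_xG}\cap\I_{H_{\crestri{x}}}=\emptyset$, and Lem.~\ref{lem:compatible subgraphs} applies to the pair $(A_xG,H_{\crestri{x}})$. Everything will be read off from its condition (2), applied in the direction ``border edges of $H_{\crestri{x}}$ pointing to internal vertices of $A_xG$''.

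\emph{Step 1: dangling edges of $G$ become dangling edges of the context.} Take $e\in\dangling{G}$, with endpoints $u\in\I_G$ and $v\in\B_G$. Because $\V_G\subseteq\I_H$, the vertex $v$ is internal in $H$. Since $v\notin\I_G=\I_H\cap\restr{x}{H}$, it lies in $\I_{H_{\crestri{x}}}$. Therefore $e$ touches an internal vertex of $H_{\crestri{x}}$, so $e\in\E_{H_{\crestri{x}}}$ with $u\in\B_{H_{\crestri{x}}}$. By border preservation, $\B_{A_xG}=\B_G$, so $v\in\B_{A_xG}\cap\I_{H_{\crestri{x}}}$. Since $A_xH$ is a graph containing $H_{\crestri{x}}$ as an induced subgraph, $e$ is an edge of $A_xH$ whose endpoint $u$ must exist there.

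The main obstacle is showing that $u$ is internal in $A_xG$. It cannot be a vertex of $H_{\crestri{x}}$'s interior, because $u\in\I_G$ and the positions are disjoint. It cannot be a border vertex of $A_xG$ either, because $\B_{A_xG}=\B_G\not\ni u$. The $\sqcup$ takes the union of vertex sets and adds no new vertices (Rmk.~\ref{rem:composition of graphs}). It follows that $u\in\I_{A_xG}$, and $e$ is then an edge of $A_xG$, since $A_xG=(A_xH)_{\I_{A_xG}}$. This proves $\dangling{G}\subseteq\dangling{A_xG}$. The first claim follows from the same argument. If $\sp{u}\in\restri{x}\setminus\restri{x}^-$, then by Def.~\ref{def : boundary} some vertex adjacent to $u$ lies outside $\restr{x}{H}$. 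That vertex is in $\B_G$, so $u$ is the endpoint of a dangling edge, and hence $u\in\I_{A_xG}$.

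\emph{Step 2: reverse inclusion.} Let $e\in\dangling{A_xG}$, running between $u'\in\I_{A_xG}$ and $v\in\B_{A_xG}=\B_G\subseteq\I_H$. As in Step 1, $v\in\I_{H_{\crestri{x}}}$. Condition (2) of Lem.~\ref{lem:compatible subgraphs}, now applied in the direction from $A_xG$ to $H_{\crestri{x}}$, forces $e\in\E_{H_{\crestri{x}}}\subseteq\E_H$. Since $u'$ has a position in $\restr{x}{H}$ and is a vertex of $H$, it lies in $\I_G$. Thus $e$ touches $\I_G$, so $e\in\E_G$ and $e\in\dangling{G}$. This gives equality.

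The delicate point is the bookkeeping, in Step 1, of why the endpoint $u$ must survive as internal. It rests on three facts together: the union $\sqcup$ creates no vertices, border preservation holds, and $\I_G$ and $\I_{H_{\crestri{x}}}$ have disjoint positions.
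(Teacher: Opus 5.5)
Your route is the paper's. You work inside $H$, use $A_xH=(A_xG)\sqcup H_{\crestri{x}}$ with $\I_{A_xG}\cap\I_{H_{\crestri{x}}}=\emptyset$, and read the conclusions off Lem.~\ref{lem:compatible subgraphs}. Your Step~2 ($\dangling{A_xG}\subseteq\dangling{G}$) is correct and coincides with the last part of the paper's proof.

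The gap is in Step~1, exactly where you locate the main obstacle. You place $u$ in $\V_{A_xH}=\V_{A_xG}\cup\V_{H_{\crestri{x}}}$ and rule out $\I_{H_{\crestri{x}}}$ and $\B_{A_xG}$. You never rule out $u\in\B_{H_{\crestri{x}}}\setminus\V_{A_xG}$. You had just shown $u\in\B_{H_{\crestri{x}}}$, so the fact that $u$ ``must exist'' in $A_xH$ is already explained by the context and says nothing about $A_xG$. Nothing you invoke closes this case:
\begin{itemize}
\item condition~(2) of Lem.~\ref{lem:compatible subgraphs}, read from $H_{\crestri{x}}$ towards $A_xG$, only constrains edges whose endpoint is \emph{already} in $\I_{A_xG}$;
\item border preservation is imposed only on the disk ($\B_{A_xG}=\B_G$), not on $A_xH$ versus $H$.
\end{itemize}
Both your first claim and $\dangling{G}\subseteq\dangling{A_xG}$ rest on this step.

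The paper's proof makes the same jump (``$e\in\E_{A_xH}$, hence $t.y\in\I_{A_xG}$''), and with the definitions as stated it cannot be closed. Here is a counterexample.
\begin{itemize}
\item Take $\sshift=\calG$. Let $G\in\Gamma_x$ iff the vertex $t.x$ lies in $\Past(G)$ with all its out-neighbours internal, and let $\restr{x}{G}$ be $x$ together with the positions of those out-neighbours. Cone, stability, extensivity and renaming-invariance are routine to check.
\item Let $D$ be the disk with $\I_D=\{t.x,t_1.y_1,t_2.y_2\}$, $\B_D=\{s.z\}$, and edges $t.x\to t_i.y_i$, $t_i.y_i\to s.z$ for $i=1,2$, with ports chosen to respect port-saturation.
\item On disks of exactly this shape, let $A_x$ delete $t_1.y_1$ (say, the out-neighbour on a fixed port of $t.x$) together with its two edges. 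On all other disks, $A_x$ is the identity.
\end{itemize}
Then $\B_{A_xD}=\{s.z\}=\B_D$. For every $G\in\Gamma_x$ the union $(A_xG_{\restri{x}})\sqcup G_{\crestri{x}}$ is a graph: its edges form a subset of $\E_G$, and the only edge between $\I_{A_xD}$ and the interior of the context, $t_2.y_2\to s.z$, is kept.

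Now let $H$ be $D$ with $s.z$ made internal. The hypothesis holds and $y_1\in\restri{x}\setminus\restri{x}^-$. Yet $t_1.y_1\notin\I_{A_xD}$, and the edge $t_1.y_1\to s.z$ lies in $\dangling{D}$ but not in $\dangling{A_xD}$. In $A_xH$, $t_1.y_1$ has merely become a border vertex supplied by $H_{\crestri{x}}$, which is precisely your missing case.

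Your argument does go through under a stronger hypothesis on the rule, for instance border preservation for the whole graph, $\B_{A_xH}=\B_H$. Then $u\in\I_H$ cannot be a border vertex of $A_xH$, so $u\in\I_{A_xH}=\I_{A_xG}\cup\I_{H_{\crestri{x}}}$, and its position places it in $\I_{A_xG}$.
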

\begin{proof}
    Fix $H\in\sshift$ such that $\V_G\subseteq \I_H$ and $H_{\restri{x}}=G$. Then $A_xH=(A_xG)\sqcup H_{\crestri{x}}$ by locality of $A_x$.\\
    \textbf{$\forall u\in \I_G,\sp{u}\in\restri{x}\setminus \restri{x}^- \implies u\in \I_{A_xG}$.} Consider $t.y\in \I_G$ such that $y\in\restri{x}\setminus \restri{x}^-$. There exists a border edge $e$ linking $t.y\in \I_G$ to some border vertex $t'.z\in \B_G\subseteq \I_H$. Since $t'.z\in \I_{H_{\overline{\restri{x}}}}$, we have $e\in \E_{H_{\overline{\restri{x}}}}$. By locality of $A_x$ on $H$, we have $e\in \E_{A_xH}$, hence $t.y\in \I_{A_xG}$.\\
    \textbf{$\dangling{G} \subseteq \dangling{A_x G}$.} Consider $e\in \dangling{G}$. This edge links a vertex $t.z\in \I_G$ to a vertex $u\in \B_G$. Thus $z\in \restri{x}\setminus \restri{x}^-$, which implies $z\in \sp{\I_{A_xG}}$. Since $u\in \B_G\subseteq \I_H$, we have $e\in \E_{H_{\overline{\restri{x}}}}$. As the square cup between $A_xG$ and $H_{\overline{\restri{x}}}$ is well defined, the first two points of Lem.~\ref{lem:compatible subgraphs} imply $t.z\in \I_{A_xG}$ and $e\in \E_{A_xG}$.\\
    \textbf{$ \dangling{A_x G}\subseteq \dangling{G}$.} Consider $e\in \dangling{A_x G}$. This edge links a vertex $t.z\in \I_{A_x G}$ to a vertex $u\in \B_{A_x G}$. Since $\B_{A_xG}=\B_G\subseteq \I_H$, we have $e\in \E_{H_{\overline{\restri{x}}}}$, and by position preservation (Lem.~\ref{lem:easy name preservation}) we have $z\in \sp{\I_G}$. As the square cup between $G$ and $H_{\overline{\restri{x}}}$ is well defined, the first two points of Lem.~\ref{lem:compatible subgraphs} imply $t.z\in \I_G$ and $e\in \E_G$.
\end{proof}

\begin{remark}{Only names of interior vertices are modified}{boundaries are stable}
    For any graph $G$ such that $\exists H\in\sshift,\ \V_G\subseteq \I_H,\ H_{\restri{x}}=G$, we know by Lem.~\ref{lem:border edges are preserved} that $A_x$ preserves boundary positions---i.e. $\forall u\in \I_G,\sp{u}\in\restri{x}\setminus \restri{x}^- \implies u\in \I_{A_xG}$. 
    By border and position preservation (Lem.~\ref{lem:easy name preservation}) this implies that all new names are at positions of $\restri{x}^-$---i.e. ${\sp{\I_{A_x}\setminus \I_G} }\subseteq \restri{x}^-$.
\end{remark}}

\reduced{
\begin{lemma}{Position preservation}{easy name preservation}
  Let $A_x:{\sshift}\to {\sshift}$ be a local rule, then $\sp{\I_{A_x G}} \subseteq \sp{\I_{G}}$ and all new names are at positions of $\restri{x}^-$---i.e. ${\sp{\I_{A_x}\setminus \I_G} }\subseteq \restri{x}^-$.
\end{lemma}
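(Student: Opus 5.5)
The statement has two parts, and I will treat them separately. The first is $\sp{\I_{A_xG}}\subseteq\sp{\I_G}$. The second is ${\sp{\I_{A_xG}\setminus \I_G}}\subseteq \restri{x}^-$.

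For the first inclusion I argue by contradiction, with a freshness construction. If $G\notin\graphvalid{\restri{}}{x}$ then $A_xG=G$ and there is nothing to prove, so assume $G\in\graphvalid{\restri{}}{x}$. Suppose some $y\in\sp{\I_{A_xG}}\setminus\sp{\I_G}$ exists. Since $A_xG=(A_xG_{\restri{x}})\sqcup G_{\crestri{x}}$, such a $y$ must come from $A_xG_{\restri{x}}$. Border preservation $\B_{A_xG_{\restri{x}}}=\B_{G_{\restri{x}}}$ then gives $y\notin\sp{\V_G}$. Next I build a supergraph that carries a vertex at position $y$ but is disconnected from $G$. Pick a renaming $R$ sending $\sp{\V_G}$ to fresh positions, one of which is $y$. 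Closure under \hyperref[S : renaming]{renaming} gives $R(G)\in\sshift$, and closure under \hyperref[S : disjoint-union]{disjoint union} gives $G'=G\sqcup R(G)\in\sshift$. By \hyperref[neighbourhood : stability]{stability}, $G'\in\graphvalid{\restri{}}{x}$. Accessibility in the \hyperref[neighbourhood : cone]{cone} condition prevents the neighbourhood from reaching the disconnected copy, so $G'_{\restri{x}}\sqsubseteq G$. \hyperref[neighbourhood : extensivity]{Extensivity} then yields $G'_{\restri{x}}=G_{\restri{x}}$. It follows that $A_xG'=(A_xG_{\restri{x}})\sqcup G'_{\crestri{x}}$. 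This union requires two distinct vertices at position $y$, violating \hyperref[graph : unicity of positions]{unicity of positions}. Hence the union is undefined, contradicting that $A_x$ is a total map into $\sshift$.

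For the second inclusion, let $u=t.y\in\I_{A_xG}\setminus\I_G$. By the first part, $y\in\sp{\I_G}$. If $y\notin\restri{x}$, then $u$ lies in $G_{\crestri{x}}$, which $A_xG$ leaves untouched, so $u\in\I_G$, a contradiction. Thus $y\in\restri{x}$. It remains to exclude $y\in\restri{x}\setminus\restri{x}^-$. Such a position carries a vertex $t'.y$ with a border edge $e$ to some vertex outside $\restri{x}$; that outside vertex lies in $\I_{G_{\crestri{x}}}$, or is a border vertex of $G$. In the first case, $e$ is a border edge of $G_{\crestri{x}}$ pointing to an internal vertex of $A_xG_{\restri{x}}$. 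Condition~2 of Lem.~\ref{lem:compatible subgraphs} then forces $t'.y\in\I_{A_xG_{\restri{x}}}$. Unicity of positions gives $t=t'$, so $u\in\I_G$, a contradiction.

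The main obstacle is the second case, where the neighbour of $t'.y$ is merely a border vertex of $G$. Lem.~\ref{lem:compatible subgraphs} does not apply directly there. I would handle it as in Lem.~\ref{lem:border edges are preserved}. First, embed $G$ in a larger $H\in\sshift$ with $\V_G\subseteq\I_H$ and $H_{\restri{x}}=G_{\restri{x}}$, obtained by extension and justified via stability and extensivity. Second, rerun the argument of the first case on $H$ and transfer the conclusion back to $G$ by locality. If such an $H$ cannot be guaranteed in general, the claim should be restricted to the ``large enough'' graphs considered in Lem.~\ref{lem:border edges are preserved}.
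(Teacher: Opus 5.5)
Your first inclusion follows the paper's own argument: a fresh renamed copy, disjoint union, stability and extensivity to get $G'_{\restri{x}}=G_{\restri{x}}$, and then a clash of positions in $A_xG'$. You should choose the time offset of $R$ so that the copy's vertex at $y$ differs from the vertex created by $A_x$. Reducing the second inclusion to boundary positions is also fine.

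Your first boundary case does not need Lem.~\ref{lem:compatible subgraphs}. The edge $e$ lies in $G_{\crestri{x}}$, hence in $A_xG$. So $t'.y$ and $t.y$ are both vertices of the graph $A_xG$, and unicity of positions gives $t=t'$. As you wrote it, invoking condition~2 to force $t'.y\in\I_{A_xG_{\restri{x}}}$ is circular: that condition only applies once $t'.y$ is already known to be internal there.

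The genuine gap is your second boundary case. You leave it open and propose to settle it by weakening the statement, so the proposal does not prove the lemma as stated. The missing ingredient is the context-preservation clause of the definition of local rule that this version of the lemma accompanies. There, the action of $A_x$ on the disk must preserve its border vertices and also its border edges: $e\in\dangling{G_{\restri{x}}}\implies e\in\E_{A_xG_{\restri{x}}}$. The paper's proof of the second part rests on exactly this. It goes as follows.
\begin{itemize}
\item Any $t'.y\in\I_G$ with $y\in\restri{x}\setminus\restri{x}^-$ has a neighbour at a position outside $\restri{x}$.
\item That neighbour is a border vertex of the disk, whatever its status in $G$, so the connecting edge is a border edge of $G_{\restri{x}}$ and survives in $A_xG_{\restri{x}}$.
\item Hence $t'.y\in\V_{A_xG_{\restri{x}}}$, and it is internal there because $\B_{A_xG_{\restri{x}}}=\B_{G_{\restri{x}}}$.
\item Unicity of positions against $t.y\in\I_{A_xG_{\restri{x}}}$ gives $t=t'$, so $u\in\I_G$.
\end{itemize}
This handles both of your cases at once, without inspecting $G_{\crestri{x}}$ and without any extension $H$.

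If you only assume preservation of border vertices, your hesitation is justified. Nothing then prevents the local action from dropping that edge and renaming $t'.y$ when its outside neighbour is a border vertex of $G$. This is why the border-vertex-only formulation states this part only for graphs admitting such an $H$ (Lem.~\ref{lem:border edges are preserved}). Your embedding argument is essentially that proof, but it yields the conditional statement rather than the unconditional one you were asked to prove.
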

\begin{proof}
    We proceed by contradiction. Consider $G\in  \graphvalid{\restri{}}{x}$, and $y\in \sp{\I_{A_x G}}\setminus\sp{\I_{G}}$. 
    Construct a graph $G'\in \sshift$ such that $ \I_G\cup \{t.y\} \subseteq \I_{G'}$ and $G\sqsubseteq G'$. Such a graph can exist since $y\notin\sp{\V_{G}}$, but we must make sure that $G\in \sshift$. This can be done by first constructing $R(G)$ a copy of $G$ which contains only fresh names, including $t.y$. This $R(G)\in \sshift$ since $\sshift$ is closed under \hyperref[S : renaming]{renaming}. Second, since $\sshift$ is closed under \hyperref[S : disjoint-union]{disjoint union}, we can take $G'= G\sqcup R(G)$. Note that by \hyperref[neighbourhood : stability]{stability} of $\restri{}$ we have $G'\in \graphvalid{\restri{}}{x}$. Moreover we have $G'_{\restri{x}}\sqsubseteq G$ since the neighbourhood cannot select some unconnected nodes in $R(G)$, so by extensivity, $G'_\restri{x}=G_\restri{x}$, from which it follows that $A_xG'=(A_x G_\restri{x}) \sqcup G'_\crestri{x}$. But this union is undefined as $y\in \sp{\I_{A_x G_\restri{x}}}$ and $t.y\in G'_\crestri{x}$ violates the \hyperref[graph : unicity of positions]{unicity of positions} condition.\\
    We have proven $\sp{\I_{A_xG}} \subseteq \sp{\I_{G}}$. Then by Rmk.~\ref{rem:boundaries are stable}, locality and \hyperref[graph : unicity of positions]{unicity of positions} implies ${\sp{\I_{A_x}\setminus \I_G} }\subseteq \restri{x}^-$.
\end{proof}
%

}{}

Summarising, a local rule cannot create new positions: it can only update the time tags of vertices in the interior of the graph (Def.~\ref{def : boundary}). It must also preserve border vertices and border edges, and it must maintain the global acyclicity of the graph. 
We conclude this section with Fig.~\ref{fig : Local operator}, which presents a more generic example of a local rule that illustrates all of these limitations.

\begin{figure*}[t]
\begin{minipage}[c]{.52\textwidth}
\begin{subfigure}{0.48\textwidth} 
    \captionsetup{justification=centering}
    \resizebox{0.7\textwidth}{!}{\input{figs/Local_operator_1.tex}}
    \caption{$G$}
    \label{fig : Local operator 1}
\end{subfigure}
\hfil
\begin{subfigure}{0.48\textwidth}
    \captionsetup{justification=centering}
    \resizebox{0.7\textwidth}{!}{\input{figs/Local_operator_2.tex}}
    \caption{$A_{x_1}G$}
    \label{fig : Local operator 2}
\end{subfigure}
\end{minipage}
\hfil
\begin{minipage}[c]{.47\textwidth}
\caption{{\em Action of a local rule $A_{(-)}$} centered on a vertex $u_1 = t_1.x_1$. It affects the vertices at positions $\restr{x}{G}$ that are circled dark blue \& cyan. Dark blue vertices (e.g. $u_1,u_4$) can be almost arbitrarily modified whereas cyan vertices must have their names and external edges preserved. Internal states $\Sigma = \{0,1\}$ are represented by white and black.
One can set $u_1'$, $u_2'$, and $u_7'$ to $3.u_1$, $1.u_2$, and $4.u_7$ respectively for instance. } 
\label{fig : Local operator}
\end{minipage}
\end{figure*}

\section{Sequential dynamics and expressivity}

Having defined the action of a local rule for a single step, we now turn to sequential dynamics. Starting from an initial graph, successive local updates produce a computation, recorded by a valid sequence of positions. We introduce space-time diagrams to collect the possible evolutions generated by successive local updates.

Before studying the theoretical properties of sequential composition, we illustrate the generality of the dynamics captured by the model. We first show how the local rule developed throughout the previous section can be generalised to simulate any one-dimensional cellular automaton. We then present the time-dilation example, in which different regions advance at different rates under the same local rule, illustrating dynamics without a natural global-clock interpretation.

\subsection{Valid sequences and space-time diagrams}\label{sec: sequences and ST diag}

Having defined local rules, we now turn to the kinds of ``temporal'' evolutions that can emerge from them. Given an initial graph $G$, the rule $A_{(-)}$ can first evolve any past vertex $t_0.x_0$ such that $G \in \graphvalid{\restri{}}{x_0}$. It can then evolve any vertex $t_1.x_1$ such that $A_x G \in \graphvalid{\restri{}}{x'}$, and so on.\\
We call a sequence of such nondeterministic update choices $\omega = x_n \dots x_1 x_0$ a \emph{valid sequence}. This is a classical object in rewriting theory, sometimes referred to as a rewriting strategy or schedule.

\begin{definition}[Valid sequences]
    From now on $A_{yx}$ will stand for $A_y A_x$.
    We say that $\omega \in \mathcal{X}^*$ is a valid sequence of $G$ iff, for all $\omega_1,\omega_2\in \mathcal{X}^*$ such that $\omega = \omega_2 x \omega_1$, we have $A_{\omega_1}G\in \graphvalidexplicit{\restri{}}{x}$. 
    We denote $\valid{G}{A} \subseteq \mathcal{X}^*$ 
    the set of valid sequences of $G$.
\end{definition}

\begin{example}{Particle system---Sequences}{}
    In the graph $G$ below, $53$ is a valid sequence: $A_{53}G$ is obtained by first applying $A_3$ and then $A_5$. Note that $453$ is also a valid sequence, since $A_{53}G \in \graphvalid{\restri{}}{4}$. However, the sequence $653$ is not valid. Although $6 \in \Past(A_{53}G)$, we do not have $A_{53}G \in \graphvalid{\restri{}}{6}$ because $2.7 \in \B_G$ (see $\graphvalid{\restri{}}{x}$ defined in Ex.~\ref{ex:PS---NS}).\\
    \begin{tabular}{ccc}
        \begin{minipage}{0.31\linewidth}
            \centering
            \includegraphics[width=\linewidth]{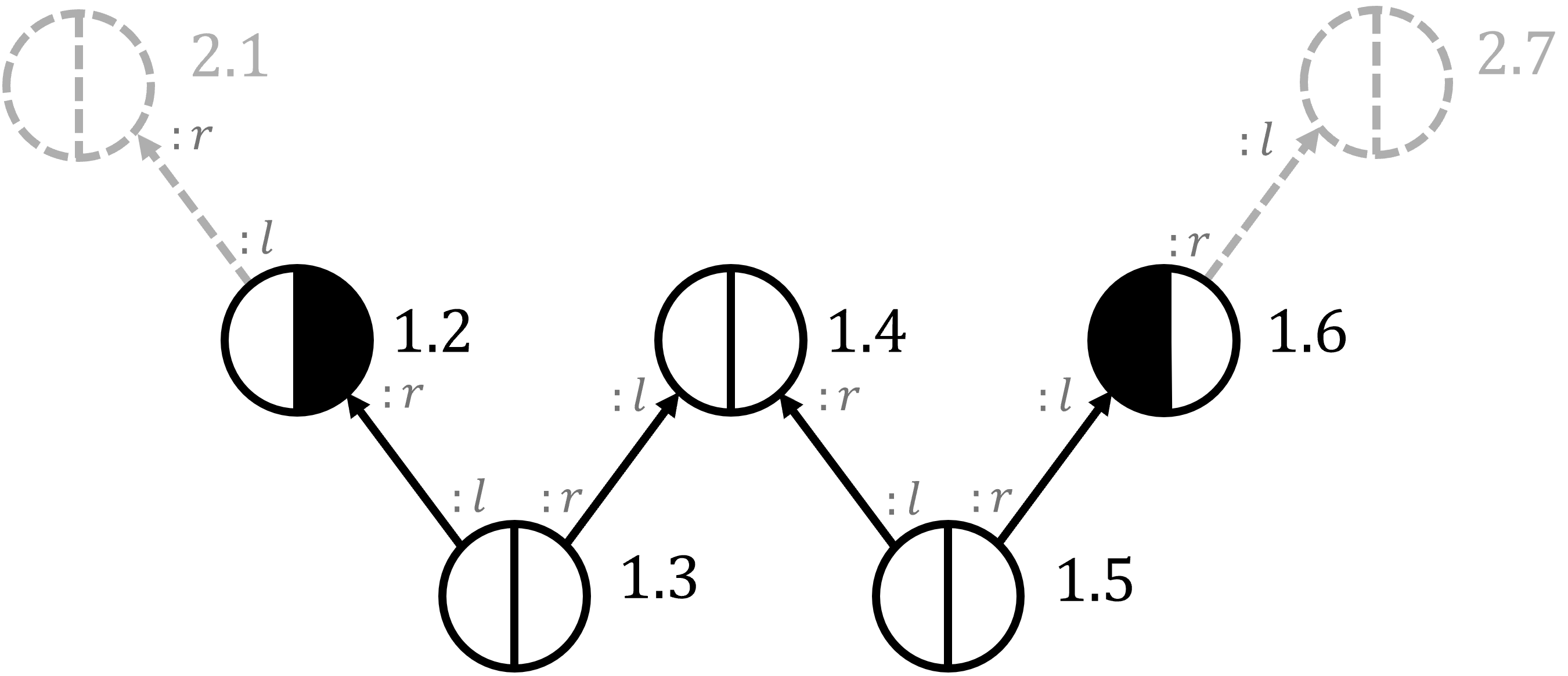}
            \captionof*{figure}{$G$.}
        \end{minipage}
         &
         \begin{minipage}{0.31\linewidth}
            \centering
            \includegraphics[width=\linewidth]{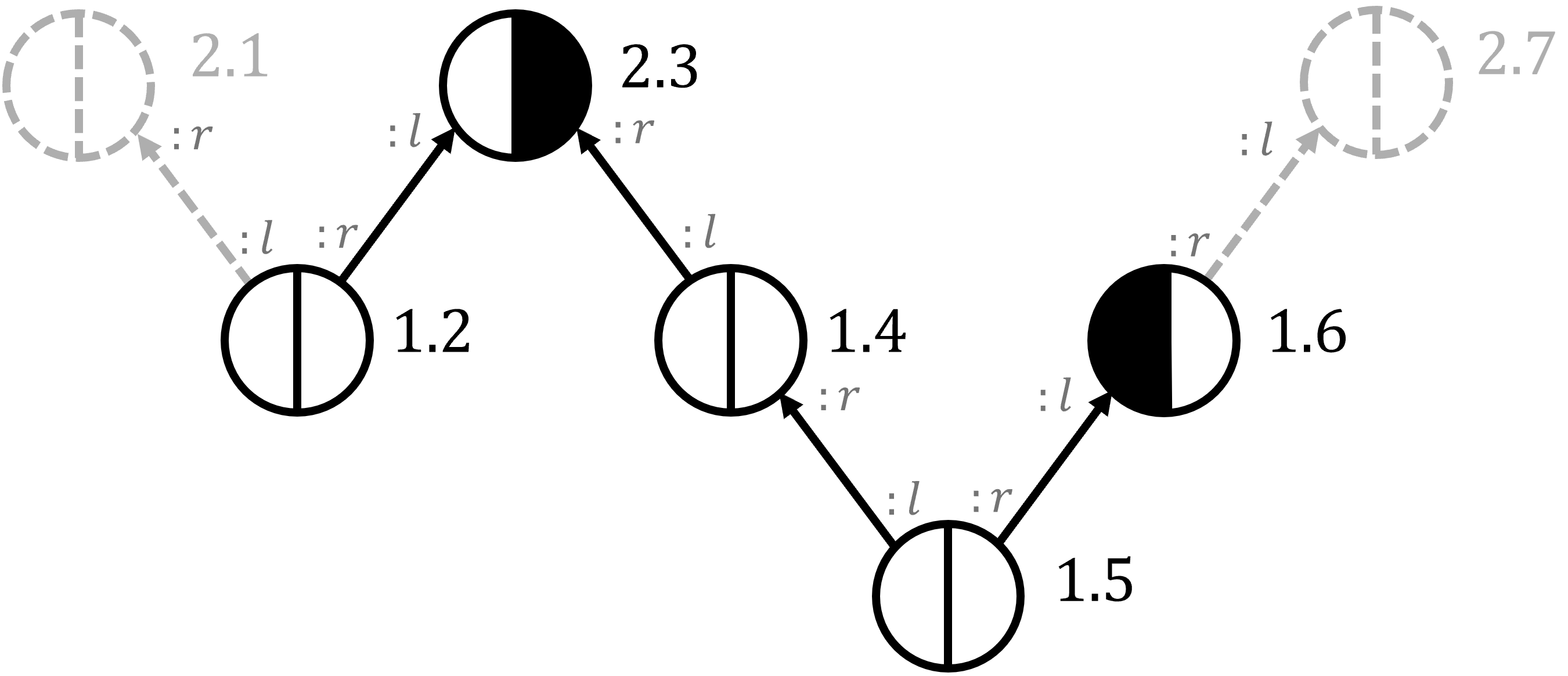}
            \captionof*{figure}{$A_3 G$.} 
        \end{minipage}
         &
         \begin{minipage}{0.31\linewidth}
            \centering
            \includegraphics[width=\linewidth]{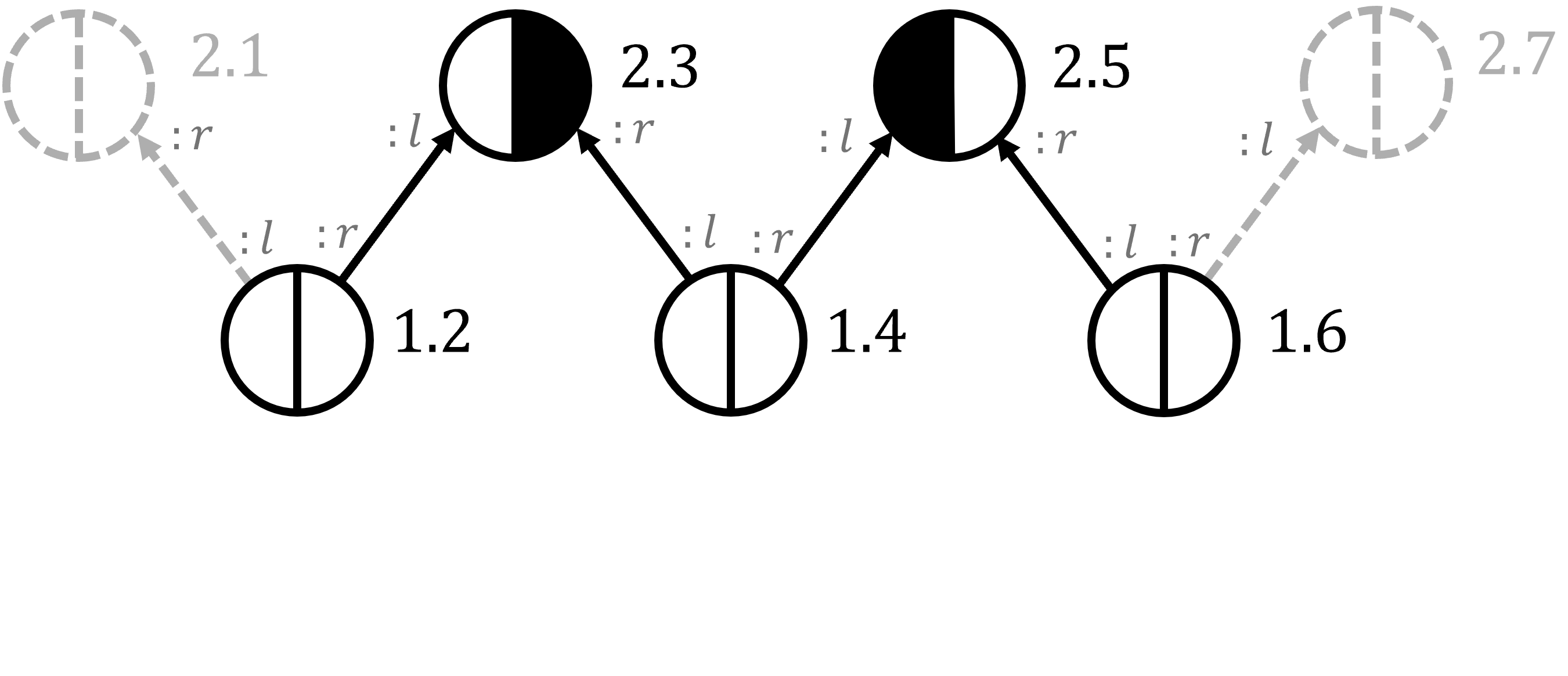}
            \captionof*{figure}{$A_{53} G$.} 
        \end{minipage}
    \end{tabular}
\end{example}



We now define the \emph{space-time diagram} associated to an initial graph and a local rule, as the collection of all valid evolutions of this graph.

\begin{definition}[Space-time diagram]
    Given a graph $G$ and a local rule $A_{(-)}$, the \emph{space-time diagram} $\mathcal{M}(G) := \{\, A_{\omega} G\mid \omega \in \valid{G}{A} \,\}$ is the set of all generated graphs.
    We sometimes omit $G$.
\end{definition}
To visualise how the graphs in $\mathcal{M}$ share common vertices and edges, a \emph{space-time background} is depicted (Fig.~\ref{fig:spacetimediagram}).
These space-time backgrounds are ``pseudo-graphs'', i.e. graphs in the usual sense, without the port constraints or the non-overlapping-position condition of Def.~\ref{def : graphs}. Each pseudo-graph $M$ is defined by:
\begin{align*}
    \V_M &=\bigcup_{G\in \mathcal{M}} \V_G & \E_M &=\bigcup_{G\in \mathcal{M}} \E_G.
\end{align*}

\begin{figure*}[h]
    \centering
    \resizebox{0.65\textwidth}{!}{\begin{tikzpicture}


\foreach \i in {1,2,...,6}{
    \foreach \j in {1,2,3}{
    	\FPeval{\posx}{clip(\i*8)}
        \FPeval{\posy}{clip(\j*8)}
        \ifnum \j=2
            {\vertex{\posx}{\posy}{border}{0}{\j}{\i\j}}
        \else
            {\vertex{\posx}{\posy}{myblack}{0}{\j}{\i\j}}
        \fi
    }
}

\foreach \i in {1,2,...,5}{
    \foreach \j in {1,2,3}{
    	\FPeval{\posx}{clip(\i*8+4)}
        \FPeval{\posy}{clip(\j*8+4)}
        \ifnum \j=2
            {\ifnum \i=3
                {\vertex{\posx}{\posy}{border}{0}{u}{S\i\j}}
            \else
                {\vertex{\posx}{\posy}{border}{0}{\j}{S\i\j}}
            \fi
            }
        \else
            {\vertex{\posx}{\posy}{myblack}{0}{\j}{S\i\j}}
        \fi
    }
}


\foreach \i in {1,2,...,5}{
    \FPeval{\k}{clip(\i+1)}
    \edge{\i1}{S\i1}{myblack}
    \edge{\k1}{S\i1}{myblack}
    \edge{S\i1}{\i2}{border}
    \edge{S\i1}{\k2}{border}
    \edge{\i2}{S\i2}{border}
    \edge{\k2}{S\i2}{border}
    \edge{S\i2}{\i3}{border}
    \edge{S\i2}{\k3}{border}
    \edge{\i3}{S\i3}{myblack}
    \edge{\k3}{S\i3}{myblack}
}

\rightmoover{20}{12}{myblack}{first}
\rightmoover{36}{28}{myblack}{second}
\leftmoover{36}{12}{myblack}{third}
\leftmoover{20}{28}{myblack}{fourth}

\draw (7.5,29) node[right]{{\Huge \scalefont{2.5} $G'$}};
\draw (7.5,12) node[right]{{\Huge \scalefont{2.5} $G$}};

\draw (20.8,10.1) node[color=port]{\Huge $:r$};
\draw (23.4,9.6) node[color=port]{\Huge $:l$};
\draw (24.75,9.6) node[color=port]{\Huge $:r$};
\draw (27.2,10.1) node[color=port]{\Huge $:l$};
\draw (28.8,10.1) node[color=port]{\Huge $:r$};
\draw (31.4,9.6) node[color=port]{\Huge $:l$};
\draw (32.75,9.6) node[color=port]{\Huge $:r$};
\draw (35.2,10.1) node[color=port]{\Huge $:l$};
    
\end{tikzpicture}}
    \caption{{\em Particle system example---Space-time diagram.} In black we highlight just the graphs $G$ and $G'$ belonging to the space-time diagram $\mathcal{M}(G)$. The local rule here moves the left-side particle towards the right and the right-side particle towards the left. 
    }
    \label{fig:spacetimediagram}
\end{figure*}

At this stage, the space-time diagram is simply a collection of graphs generated by all possible schedulings. A natural question is whether the outcome of an event in space-time depends on the particular scheduling chosen, or whether the space-time is \emph{well-determined} in the sense that any two schedulings agree on the state of every event they both describe. We call this property \emph{space-time determinism}; it is formalised in~\cite{ArrighiSpaceTimeDetWRLA}, and completing the remaining proofs---including local sufficient conditions---is left to future work.

\subsection{Simulating synchronous cellular automata}\label{sec: Simulation}

Beyond the above particle system example, we now show that we can simulate any one-dimensional cellular automaton (CA for short) in spite of its synchronicity. 
The construction borrows from the well-studied `marching soldiers' scheme \cite{WeakConsistencyGacs}, as best formalised by \cite{MarchingNehaniv}
, but relies on the DAG of dependency rather than extra states as its mechanism for local, relative synchronisation.
Without loss of generality \cite{IbarraJiang}, we simulate radius half CA.

\begin{figure*}[h]
\centering
\begin{subfigure}{.33\textwidth}
    \captionsetup{justification=centering}
    \resizebox{\textwidth}{!}{\tikzset{every picture/.style={line width=0.75pt}} 

\begin{tikzpicture}[x=0.75pt,y=0.75pt,yscale=-1,xscale=1]

\draw [color={rgb, 255:red, 0; green, 0; blue, 0 }  ,draw opacity=1 ][line width=1.5]    (92.03,147.46) -- (212.35,147.46) ;
\draw [color={rgb, 255:red, 0; green, 0; blue, 0 }  ,draw opacity=1 ][line width=1.5]    (405,146.73) -- (212.35,147.46) ;
\draw  [color={rgb, 255:red, 0; green, 0; blue, 0 }  ,draw opacity=1 ][fill={rgb, 255:red, 255; green, 255; blue, 255 }  ,fill opacity=1 ][dash pattern={on 5.63pt off 4.5pt}][line width=1.5]  (80,147.46) .. controls (80,133.9) and (90.99,122.9) .. (104.55,122.9) .. controls (118.12,122.9) and (129.11,133.9) .. (129.11,147.46) .. controls (129.11,161.02) and (118.12,172.01) .. (104.55,172.01) .. controls (90.99,172.01) and (80,161.02) .. (80,147.46) -- cycle ;
\draw  [color={rgb, 255:red, 0; green, 0; blue, 0 }  ,draw opacity=1 ][fill={rgb, 255:red, 255; green, 255; blue, 255 }  ,fill opacity=1 ][dash pattern={on 5.63pt off 4.5pt}][line width=1.5]  (380,145) .. controls (380,131.19) and (391.19,120) .. (405,120) .. controls (418.81,120) and (430,131.19) .. (430,145) .. controls (430,158.81) and (418.81,170) .. (405,170) .. controls (391.19,170) and (380,158.81) .. (380,145) -- cycle ;
\draw [color={rgb, 255:red, 0; green, 0; blue, 0 }  ,draw opacity=1 ][line width=1.5]    (87.83,449.46) -- (208.15,449.46) ;
\draw [color={rgb, 255:red, 0; green, 0; blue, 0 }  ,draw opacity=1 ][line width=1.5]    (328.47,449.46) -- (208.15,449.46) ;
\draw [color={rgb, 255:red, 0; green, 0; blue, 0 }  ,draw opacity=1 ][line width=1.5]    (328.47,449.46) -- (420,450) ;
\draw  [color={rgb, 255:red, 0; green, 0; blue, 0 }  ,draw opacity=1 ][fill={rgb, 255:red, 255; green, 255; blue, 255 }  ,fill opacity=1 ][dash pattern={on 5.63pt off 4.5pt}][line width=1.5]  (380,445.41) .. controls (380,431.83) and (391.19,420.82) .. (405,420.82) .. controls (418.81,420.82) and (430,431.83) .. (430,445.41) .. controls (430,458.99) and (418.81,470) .. (405,470) .. controls (391.19,470) and (380,458.99) .. (380,445.41) -- cycle ;
\draw  [draw opacity=0][fill={rgb, 255:red, 208; green, 2; blue, 27 }  ,fill opacity=0.23 ] (54.72,448.39) -- (220.89,145.85) -- (271.18,145.85) -- (277.35,448.39) -- cycle ;
\draw  [color={rgb, 255:red, 208; green, 2; blue, 27 }  ,draw opacity=1 ][fill={rgb, 255:red, 208; green, 2; blue, 27 }  ,fill opacity=0.2 ] (160,301.19) .. controls (160,285.28) and (182.39,272.38) .. (210,272.38) .. controls (237.61,272.38) and (260,285.28) .. (260,301.19) .. controls (260,317.1) and (237.61,330) .. (210,330) .. controls (182.39,330) and (160,317.1) .. (160,301.19) -- cycle ;

\draw  [color={rgb, 255:red, 0; green, 0; blue, 0 }  ,draw opacity=1 ][fill={rgb, 255:red, 255; green, 255; blue, 255 }  ,fill opacity=1 ][line width=1.5]  (220.89,145.85) .. controls (220.89,132.51) and (231.88,121.7) .. (245.45,121.7) .. controls (259.01,121.7) and (270,132.51) .. (270,145.85) .. controls (270,159.19) and (259.01,170) .. (245.45,170) .. controls (231.88,170) and (220.89,159.19) .. (220.89,145.85) -- cycle ;
\draw  [color={rgb, 255:red, 0; green, 0; blue, 0 }  ,draw opacity=1 ][fill={rgb, 255:red, 255; green, 255; blue, 255 }  ,fill opacity=1 ][line width=1.5]  (75.8,449.46) .. controls (75.8,435.9) and (86.79,424.9) .. (100.35,424.9) .. controls (113.91,424.9) and (124.91,435.9) .. (124.91,449.46) .. controls (124.91,463.02) and (113.91,474.01) .. (100.35,474.01) .. controls (86.79,474.01) and (75.8,463.02) .. (75.8,449.46) -- cycle ;
\draw  [color={rgb, 255:red, 0; green, 0; blue, 0 }  ,draw opacity=1 ][fill={rgb, 255:red, 255; green, 255; blue, 255 }  ,fill opacity=1 ][line width=1.5]  (221.6,445.85) .. controls (221.6,432.51) and (232.6,421.7) .. (246.16,421.7) .. controls (259.72,421.7) and (270.71,432.51) .. (270.71,445.85) .. controls (270.71,459.19) and (259.72,470) .. (246.16,470) .. controls (232.6,470) and (221.6,459.19) .. (221.6,445.85) -- cycle ;

\draw (234,127.4) node [anchor=north west][inner sep=0.75pt]  [font=\sizehalfCA]  {$\sigma_{1}^{1}$};
\draw (87.6,432.4) node [anchor=north west][inner sep=0.75pt]  [font=\sizehalfCA]  {$\sigma_{0}^{0}$};
\draw (232.6,430.4) node [anchor=north west][inner sep=0.75pt]  [font=\sizehalfCA]  {$\sigma_{1}^{0}$};
\draw (198.14,279.61) node [anchor=north west][inner sep=0.75pt]  [font=\sizehalfCAb,color={rgb, 255:red, 208; green, 2; blue, 27 }  ,opacity=1 ]  {$f$};

\end{tikzpicture}}
    \caption{Radius one half locality \dots}
    \label{fig :one_half_radius_CA_1}
\end{subfigure}
\hfill
\begin{subfigure}{.495\textwidth}
    \captionsetup{justification=centering}
    \resizebox{\textwidth}{!}{\input{figs/one_half_radius_CA_2}}
    \caption{\dots results in the above global evolution.}
    \label{fig:one_half_radius_CA_2}
\end{subfigure}
\caption{{\em Radius-one-half local rule (a):} A radius-one-half local rule $f$ takes two states as input $(\sigma_x^t,\sigma_{x+1}^t)\in \Sigma_{CA}^2$ and outputs $\sigma_{x+1}^{t+1} = f(\sigma_x^t,\sigma_{x+1}^t)$. {\em Global evolution (b):} The initial configuration is an infinite one-dimensional array $(\sigma^0_x)_{x\in \mathbb{Z}}$. We consider the cellular automaton that applies $f$ homogeneously in space. The successive configurations computed by this automaton are depicted in black ($\sigma^t_x$ is just short for $f(\sigma^{t-1}_{x-1},\sigma^{t-1}_{x})$). Applications of the local rule $f$ are shown in red. We name each application to facilitate the introduction of the simulation scheme in Fig.~\ref{fig:Simulation 2}. Red dashed arrows depict the dependencies between applications of $f$; for example, since $0.f_1$ needs the output $\sigma_1^1$ of $0.f_0$, we draw a red dashed arrow from $0.f_0$ to $0.f_1$.
}
\label{fig: one_half_radius_CA}
\end{figure*}

First let us recall that a radius one half cellular automaton acting on an alphabet $\Sigma_{CA}$ is a global function $F:{\Sigma_{CA}}^{\mathbb{Z}}\to {\Sigma_{CA}}^{\mathbb{Z}}$ defined by the synchronous application of a local function $f: {\Sigma_{CA}}^2\to {\Sigma_{CA}}$ everywhere at once, cf. Fig.~\ref{fig :one_half_radius_CA_1}.
The dependencies between the different applications of $f$ are shown in Fig.~\ref{fig:one_half_radius_CA_2}, notice the similarity with Fig.~\ref{fig:spacetimediagram}. This suggests that the natural way to encode this cellular automaton in our model is to use vertices (a.k.a events) to represent each application of $f$.

Second we define $\Sigma = (\Sigma_{CA} \cup \epsilon)^2$ and pick the same neighbourhood scheme and ports as in the previous example. We define the local rule as depicted in Ex.~\ref{ex:CAsim---LR}. Its action can be understood as follows: 1/ applied to the vertex named $0.f_0$, it computes $f(\sigma_0^0,\sigma_1^0)$ 2/ and stores two copies of the result $\sigma_1^1$, one in the right part of the vertex $0.f_{-1}$, and the other in the left part of the vertex $0.f_1$. 3/ Finally it creates the vertex $1.f_0$, which is awaiting for the output of the $0.f_{-1}$ and $0.f_1$ applications, as represented by the two incoming edges.

\begin{example}{CA simulation---Local rule}{CAsim---LR}
    \begin{minipage}{0.49\linewidth}
        $A_{f_0}$ acts exactly as $A_x$ in Ex.~\ref{ex:PS---LR}, except for the internal states. If $\sigma(0.f_0) = (\sigma_0^0,\sigma_1^0)$ does not contain $\epsilon$, it computes $\sigma_1^1 = f(\sigma_0^0,\sigma_1^0)$ and stores the result in both neighbours.
    \end{minipage}
    \begin{minipage}{0.49\linewidth}
        \begin{tabular}{cc}
            \begin{minipage}{0.49\linewidth}
                \centering
                \captionsetup{justification=centering}
                \resizebox{0.65\textwidth}{!}{\tikzset{every picture/.style={line width=0.75pt}} 

\begin{tikzpicture}[x=0.75pt,y=0.75pt,yscale=-1,xscale=1]

\draw [color={rgb, 255:red, 0; green, 0; blue, 0 }  ,draw opacity=1 ][line width=2.25]    (226,274) -- (189.91,202) ;
\draw [shift={(187.67,197.53)}, rotate = 63.38] [fill={rgb, 255:red, 0; green, 0; blue, 0 }  ,fill opacity=1 ][line width=0.08]  [draw opacity=0] (14.29,-6.86) -- (0,0) -- (14.29,6.86) -- cycle    ;
\draw [line width=2.25]    (226,274) -- (264.27,204.09) ;
\draw [shift={(266.67,199.7)}, rotate = 118.69] [fill={rgb, 255:red, 0; green, 0; blue, 0 }  ][line width=0.08]  [draw opacity=0] (14.29,-6.86) -- (0,0) -- (14.29,6.86) -- cycle    ;
\draw  [color={rgb, 255:red, 38; green, 105; blue, 185 }  ,draw opacity=1 ][fill={rgb, 255:red, 255; green, 255; blue, 255 }  ,fill opacity=1 ][line width=3]  (200,274) .. controls (200,259.64) and (211.64,248) .. (226,248) .. controls (240.36,248) and (252,259.64) .. (252,274) .. controls (252,288.36) and (240.36,300) .. (226,300) .. controls (211.64,300) and (200,288.36) .. (200,274) -- cycle ;
\draw  [color={rgb, 255:red, 80; green, 227; blue, 194 }  ,draw opacity=1 ][fill={rgb, 255:red, 255; green, 255; blue, 255 }  ,fill opacity=1 ][line width=3]  (152,176) .. controls (152,162.75) and (162.75,152) .. (176,152) .. controls (189.25,152) and (200,162.75) .. (200,176) .. controls (200,189.25) and (189.25,200) .. (176,200) .. controls (162.75,200) and (152,189.25) .. (152,176) -- cycle ;
\draw  [color={rgb, 255:red, 80; green, 227; blue, 194 }  ,draw opacity=1 ][fill={rgb, 255:red, 255; green, 255; blue, 255 }  ,fill opacity=1 ][line width=3]  (252,176) .. controls (252,162.75) and (262.75,152) .. (276,152) .. controls (289.25,152) and (300,162.75) .. (300,176) .. controls (300,189.25) and (289.25,200) .. (276,200) .. controls (262.75,200) and (252,189.25) .. (252,176) -- cycle ;
\draw [color={rgb, 255:red, 80; green, 227; blue, 194 }  ,draw opacity=1 ][line width=3]    (176,152) -- (176,200) ;
\draw [color={rgb, 255:red, 80; green, 227; blue, 194 }  ,draw opacity=1 ][line width=3]    (276,152) -- (276,200) ;
\draw [color={rgb, 255:red, 74; green, 107; blue, 226 }  ,draw opacity=1 ][line width=3]    (226,250) -- (226,298) ;

\draw (258,282.4) node [anchor=north west][inner sep=0.75pt]  [font=\sizesimub]  {${0.f_{0}}$};
\draw (200,201.4) node [anchor=north west][inner sep=0.75pt]  [font=\sizesimup,color={rgb, 255:red, 155; green, 155; blue, 155 }  ,opacity=1 ]  {$:r$};
\draw (230,200.4) node [anchor=north west][inner sep=0.75pt]  [font=\sizesimup,color={rgb, 255:red, 155; green, 155; blue, 155 }  ,opacity=1 ]  {$:l$};
\draw (246,232.4) node [anchor=north west][inner sep=0.75pt]  [font=\sizesimup,color={rgb, 255:red, 155; green, 155; blue, 155 }  ,opacity=1 ]  {$:r$};
\draw (184,232.4) node [anchor=north west][inner sep=0.75pt]  [font=\sizesimup,color={rgb, 255:red, 155; green, 155; blue, 155 }  ,opacity=1 ]  {$:l$};
\draw (204.83,262.4) node [anchor=north west][inner sep=0.75pt]    {\sizesimu $\sigma_{0}^{0}$};
\draw (229.83,264) node [anchor=north west][inner sep=0.75pt]    {\sizesimu $\sigma_{1}^{0}$};
\draw (261,167.4) node [anchor=north west][inner sep=0.75pt]    {\sizesimu $\epsilon $};
\draw (182,166.4) node [anchor=north west][inner sep=0.75pt]    {\sizesimu $\epsilon $};

\end{tikzpicture}}
                \captionof*{figure}{$G_{\restri{f_0}}.$}
                \label{fig :Simulation 1 a}         
            \end{minipage}
            &
            \begin{minipage}{0.49\linewidth}
                \centering
                \captionsetup{justification=centering}
                \resizebox{0.65\textwidth}{!}{\tikzset{every picture/.style={line width=0.75pt}} 

\begin{tikzpicture}[x=0.75pt,y=0.75pt,yscale=-1,xscale=1]

\draw [color={rgb, 255:red, 0; green, 0; blue, 0 }  ,draw opacity=1 ][line width=2.25]    (276,176) -- (239.95,105.88) ;
\draw [shift={(237.67,101.43)}, rotate = 62.79] [fill={rgb, 255:red, 0; green, 0; blue, 0 }  ,fill opacity=1 ][line width=0.08]  [draw opacity=0] (14.29,-6.86) -- (0,0) -- (14.29,6.86) -- cycle    ;
\draw [line width=2.25]    (176,176) -- (212.32,107.85) ;
\draw [shift={(214.67,103.43)}, rotate = 118.05] [fill={rgb, 255:red, 0; green, 0; blue, 0 }  ][line width=0.08]  [draw opacity=0] (14.29,-6.86) -- (0,0) -- (14.29,6.86) -- cycle    ;
\draw  [color={rgb, 255:red, 38; green, 105; blue, 185 }  ,draw opacity=1 ][fill={rgb, 255:red, 255; green, 255; blue, 255 }  ,fill opacity=1 ][line width=3]  (200,76) .. controls (200,61.64) and (211.64,50) .. (226,50) .. controls (240.36,50) and (252,61.64) .. (252,76) .. controls (252,90.36) and (240.36,102) .. (226,102) .. controls (211.64,102) and (200,90.36) .. (200,76) -- cycle ;
\draw  [color={rgb, 255:red, 80; green, 227; blue, 194 }  ,draw opacity=1 ][fill={rgb, 255:red, 255; green, 255; blue, 255 }  ,fill opacity=1 ][line width=3]  (152,176) .. controls (152,162.75) and (162.75,152) .. (176,152) .. controls (189.25,152) and (200,162.75) .. (200,176) .. controls (200,189.25) and (189.25,200) .. (176,200) .. controls (162.75,200) and (152,189.25) .. (152,176) -- cycle ;
\draw  [color={rgb, 255:red, 80; green, 227; blue, 194 }  ,draw opacity=1 ][fill={rgb, 255:red, 255; green, 255; blue, 255 }  ,fill opacity=1 ][line width=3]  (252,176) .. controls (252,162.75) and (262.75,152) .. (276,152) .. controls (289.25,152) and (300,162.75) .. (300,176) .. controls (300,189.25) and (289.25,200) .. (276,200) .. controls (262.75,200) and (252,189.25) .. (252,176) -- cycle ;
\draw [color={rgb, 255:red, 80; green, 227; blue, 194 }  ,draw opacity=1 ][line width=3]    (176,152) -- (176,200) ;
\draw [color={rgb, 255:red, 80; green, 227; blue, 194 }  ,draw opacity=1 ][line width=3]    (276,152) -- (276,200) ;
\draw [color={rgb, 255:red, 74; green, 107; blue, 226 }  ,draw opacity=1 ][line width=3]    (226,50) -- (226,98) ;

\draw (251,112.4) node [anchor=north west][inner sep=0.75pt]  [font=\sizesimup,color={rgb, 255:red, 155; green, 155; blue, 155 }  ,opacity=1 ]  {$:r$};
\draw (179,101.4) node [anchor=north west][inner sep=0.75pt]  [font=\sizesimup,color={rgb, 255:red, 155; green, 155; blue, 155 }  ,opacity=1 ]  {$:l$};
\draw (195,141.4) node [anchor=north west][inner sep=0.75pt]  [font=\sizesimup,color={rgb, 255:red, 155; green, 155; blue, 155 }  ,opacity=1 ]  {$:r$};
\draw (236,141.4) node [anchor=north west][inner sep=0.75pt]  [font=\sizesimup,color={rgb, 255:red, 155; green, 155; blue, 155 }  ,opacity=1 ]  {$:l$};
\draw (210,67.4) node [anchor=north west][inner sep=0.75pt]    {\sizesimu $\epsilon $};
\draw (233,67.4) node [anchor=north west][inner sep=0.75pt]    {\sizesimu $\epsilon $};
\draw (177.83,164) node [anchor=north west][inner sep=0.75pt]    {\sizesimu $\sigma_{1}^{1}$};
\draw (255.83,164) node [anchor=north west][inner sep=0.75pt]    {\sizesimu $\sigma_{1}^{1}$};
\draw (261,72.4) node [anchor=north west][inner sep=0.75pt]  [font=\sizesimub]  {$1{.f_{0}}$};

\end{tikzpicture}}
                \captionof*{figure}{$A_{f_0}G_{\restri{f_0}}.$}
                \label{fig :Simulation 1 b}
            \end{minipage}
        \end{tabular}
    \end{minipage}
\end{example}

\begin{figure*}[h!]
    \centering
    \captionsetup{justification=centering}   
    \resizebox{0.8\textwidth}{!}{\begin{tikzpicture}


\foreach \i in {1,2,...,6}{
    \foreach \j in {1,2}{
    	\FPeval{\posx}{clip(\i*8)}
        \FPeval{\posy}{clip(\j*8)}
        \ifnum \j=2
            {\ifnum \i=1
                {}
            \else
                \ifnum \i=6
                    {}
                \else
                    {\vertex{\posx}{\posy}{myblack}{0}{}{\i\j}}
                \fi
            \fi
            }
        \else
            {\ifnum \i<3
                {}
            \else
                {\vertex{\posx}{\posy}{myblack}{0}{}{\i\j}}
            \fi
            }
        \fi
    }
}

\foreach \i in {1,2,...,5}{
    \foreach \j in {1,2}{
    	\FPeval{\posx}{clip(\i*8+4)}
        \FPeval{\posy}{clip(\j*8+4)}
        \ifnum \j=1
            {\ifnum \i=1
                {}
            \else
                {\vertex{\posx}{\posy}{myblack}{0}{}{S\i\j}}
            \fi
            }
        \else
            \ifnum \i=5
                {}
            \else
                {\vertex{\posx}{\posy}{myblack}{0}{}{S\i\j}}
            \fi
        \fi
    }
}


\foreach \i in {1,2,...,5}{
    \FPeval{\k}{clip(\i+1)}
    \ifnum \i=1
        {\edge{\k2}{S\i2}{myblack}}
    \else
        \ifnum \i=2
            {\edge{\k1}{S\i1}{myblack}
            \edge{S\i1}{\i2}{border}
            \edge{S\i1}{\k2}{border}
            \edge{\i2}{S\i2}{myblack}
            \edge{\k2}{S\i2}{myblack}}
        \else
            \ifnum \i=5
                {\edge{\i1}{S\i1}{myblack}
                \edge{\k1}{S\i1}{myblack}
                \edge{S\i1}{\i2}{border}}
            \else
                \edge{\i1}{S\i1}{myblack}
                \edge{\k1}{S\i1}{myblack}
                \edge{S\i1}{\i2}{border}
                \edge{S\i1}{\k2}{border}
                \edge{\i2}{S\i2}{myblack}
                \edge{\k2}{S\i2}{myblack}
            \fi
        \fi
    \fi
}


\draw (26.3,8) node{{\sizebigsimub $0.f_0$}};
\draw (34.3,8) node{{\sizebigsimub $0.f_2$}};
\draw (42.3,8) node{{\sizebigsimub $0.f_4$}};
\draw (50.3,8) node{{\sizebigsimub $0.f_6$}};
\draw (22.3,12) node{{\sizebigsimub $0.f_{-1}$}};
\draw (30.3,12) node{{\sizebigsimub $0.f_{1}$}};
\draw (38.3,12) node{{\sizebigsimub $0.f_{3}$}};
\draw (46.3,12) node{{\sizebigsimub $0.f_{5}$}};
\draw (18.3,16) node{{\sizebigsimub $1.f_{-2}$}};
\draw (26.3,16) node{{\sizebigsimub $1.f_{0}$}};
\draw (34.3,16) node{{\sizebigsimub $1.f_{2}$}};
\draw (42.3,16) node{{\sizebigsimub $1.f_{4}$}};
\draw (14.3,20) node{{\sizebigsimub $1.f_{-3}$}};
\draw (22.3,20) node{{\sizebigsimub $1.f_{-1}$}};
\draw (30.3,20) node{{\sizebigsimub $1.f_{1}$}};
\draw (38.3,20) node{{\sizebigsimub $1.f_{3}$}};


\draw (24.4,8) node{{\sizebigsimu $\sigma_1^0$}};
\draw (32.4,8) node{{\sizebigsimu $\sigma_2^0$}};
\draw (40.4,8) node{{\sizebigsimu $\sigma_3^0$}};
\draw (48.4,8) node{{\sizebigsimu $\sigma_4^0$}};
\draw (20.4,12) node{{\sizebigsimu $\epsilon$}};
\draw (28.4,12) node{{\sizebigsimu $\epsilon$}};
\draw (36.4,12) node{{\sizebigsimu $\epsilon$}};
\draw (44.4,12) node{{\sizebigsimu $\epsilon$}};
\draw (16.4,16) node{{\sizebigsimu $\sigma_1^2$}};
\draw (24.4,16) node{{\sizebigsimu $\sigma_2^2$}};
\draw (32.4,16) node{{\sizebigsimu $\sigma_3^2$}};
\draw (40.4,16) node{{\sizebigsimu $\sigma_4^2$}};
\draw (12.4,20) node{{\sizebigsimu $\epsilon$}};
\draw (20.4,20) node{{\sizebigsimu $\epsilon$}};
\draw (28.4,20) node{{\sizebigsimu $\epsilon$}};
\draw (36.4,20) node{{\sizebigsimu $\epsilon$}};

\draw (23.5,8) node{{\sizebigsimu $\sigma_0^0$}};
\draw (31.5,8) node{{\sizebigsimu $\sigma_1^0$}};
\draw (39.5,8) node{{\sizebigsimu $\sigma_2^0$}};
\draw (47.5,8) node{{\sizebigsimu $\sigma_3^0$}};
\draw (19.5,12) node{{\sizebigsimu $\epsilon$}};
\draw (27.5,12) node{{\sizebigsimu $\epsilon$}};
\draw (35.5,12) node{{\sizebigsimu $\epsilon$}};
\draw (43.5,12) node{{\sizebigsimu $\epsilon$}};
\draw (15.5,16) node{{\sizebigsimu $\sigma_0^2$}};
\draw (23.5,16) node{{\sizebigsimu $\sigma_1^2$}};
\draw (31.5,16) node{{\sizebigsimu $\sigma_2^2$}};
\draw (39.5,16) node{{\sizebigsimu $\sigma_3^2$}};
\draw (11.5,20) node{{\sizebigsimu $\epsilon$}};
\draw (19.5,20) node{{\sizebigsimu $\epsilon$}};
\draw (27.5,20) node{{\sizebigsimu $\epsilon$}};
\draw (35.5,20) node{{\sizebigsimu $\epsilon$}};

\draw (10.5,17) node[right]{{\Huge \scalefont{2.5} $G'$}};
\draw (18.5,9) node[right]{{\Huge \scalefont{2.5} $G$}};
    
\end{tikzpicture}}
    \caption{{\em Space-time diagram of the simulation} of the cellular automaton in Fig.~\ref{fig:one_half_radius_CA_2}. The highlighted $G$ encodes the initial cellular automaton configuration $\sigma^0$ of the CA, whilst $G'$ is locally equal to $\sigma^2$ the configuration obtained after two time steps of the cellular automaton.
    }
    \label{fig:Simulation 2}
\end{figure*}

We encode the initial configuration $\sigma^0$ by a graph $G$ (see Fig.~\ref{fig:Simulation 2}) containing two copies of each state of $\sigma^0$. The simulation is local: we do not require a graph containing an entire configuration $\sigma^t$, which would involve infinitely many updates. Instead, each state $\sigma_x^t$ is recovered as soon as the corresponding vertex has received the required information. The following theorem makes this correspondence precise.

The space-time diagram of the simulation is similar to that of Fig.~\ref{fig:spacetimediagram}. The construction can easily be generalised to $d$-dimensional CA, 
and is likely to work for any local synchronous evolution in 
a broad sense ; the idea being to use vertices to encode 
applications of the local rule, and edges to represent the 
causal relationships of these applications.

\begin{Theorem}{Simulation of synchronous cellular automata}{Simulation of synchronous cellular automata}
    Let $F$ be a radius-one-half cellular automaton with initial configuration $\sigma^0$, and let $G$ be its encoding defined above. For every $t\in\mathbb{N}$, $x\in\mathbb{Z}$, and $H\in\mathcal{M}(G)$:
    \begin{enumerate}
        \item if $u=\lfloor t/2\rfloor.f_{2x-t}$ belongs to $H$ and has no incoming edge through its left port, then $\sigma_H^l(u)=\sigma_x^t$;
        \item if $v=\lfloor t/2\rfloor.f_{2(x-1)-t}$ belongs to $H$ and has no incoming edge through its right port, then $\sigma_H^r(v)=\sigma_x^t$.
    \end{enumerate}
    Moreover, for every $t$ and $x$, there exists a graph $H\in\mathcal{M}(G)$ in which the hypotheses of both items hold.
\end{Theorem}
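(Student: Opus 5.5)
We will argue by induction along valid sequences, with an invariant on the ports of every vertex. For $H=A_\omega G$ with $\omega\in\valid{G}{A}$, we aim to prove that every internal vertex $u=s.f_y$ of $H$ satisfies two conditions:
\begin{itemize}
\item[(a)] if $u$ has no incoming edge on its left port, then its left component is $\sigma^{t}_{x}$ for the unique $(t,x)$ with $s=\lfloor t/2\rfloor$ and $y=2x-t$; otherwise its left component is $\epsilon$;
\item[(b)] the symmetric statement holds on the right port, with $y=2(x-1)-t$.
\end{itemize}
The map $(t,x)\mapsto(\lfloor t/2\rfloor, 2x-t)$ is injective once the parity constraint $y\equiv t \pmod 2$ is imposed: $t$ is recovered as $2s$ or $2s+1$ according to the parity of $y$, and then $x=(y+t)/2$. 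So the invariant is well posed. It is exactly items 1 and 2, together with the extra clause ``waiting ports carry $\epsilon$'', which keeps the induction going.

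For the base case we check the invariant on the encoding $G$ of Fig.~\ref{fig:Simulation 2}. The vertices $0.f_{2k}$ carry $(\sigma^0_k,\sigma^0_{k+1})$ and have no incoming edges. The vertices $0.f_{2k-1}$ carry $(\epsilon,\epsilon)$ and have two incoming edges, one on each port. With $t=0$ we get $y=2x$ on the left and $y=2(x-1)$ on the right. Hence $0.f_{2k}$ must hold $\sigma^0_k$ on the left and $\sigma^0_{k+1}$ on the right, which matches.

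For the inductive step, consider one application $A_z$ on $H$ with $H\in\graphvalid{\restri{}}{z}$. The fired vertex is then $u=s.f_z\in\Past(H)$, and by the invariant its state is $(\sigma^t_x,\sigma^t_{x+1})$, read off from the indices. Its two out-neighbours are $s'.f_{z\pm1}$, and they await $u$ on their right and left ports respectively; we identify their names from the edge pattern of the rule. The rule writes $\sigma^{t+1}_{x+1}=f(\sigma^t_x,\sigma^t_{x+1})$ into the right port of $f_{z-1}$ and into the left port of $f_{z+1}$, and removes those edges. Substituting into the index formulas, it remains to check that $(t+1,x+1)$ is precisely the index the invariant assigns to these freed ports. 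The new vertex $(s+1).f_z$ is created with state $(\epsilon,\epsilon)$ and two incoming edges, so it satisfies the ``otherwise'' clause. All other vertices and edges are untouched, since $A_zH=(A_zH_{\restri{z}})\sqcup H_{\crestri{z}}$ and boundary vertices keep their names and border edges (Lem.~\ref{lem:border edges are preserved}).

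I expect the main obstacle to be the name bookkeeping. We must show that the out-neighbours of $s.f_z$ are indeed $s.f_{z+1}$ and either $s.f_{z-1}$ or $(s-1).f_{z-1}$, in a consistent way. Concretely, the time tags satisfy a ``staircase'' shape: along the line, the tags of adjacent positions differ by at most one, and the edge direction is determined by parity and tag. This structural fact has to be carried as a second invariant, shown to hold in $G$ and preserved by $A_z$ using the edge flips of Ex.~\ref{ex:PS---LR}. Everything else is then a direct parity computation.

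For existence, given $(t,x)$ we exhibit an explicit valid sequence: fire every $f_y$ in the backward light cone of the target vertex, layer by layer in increasing time tag. This is finite, and it is valid because each fired vertex has both neighbours internal, which follows from the infinite line structure of $\sshift$. After it, both target vertices exist with the relevant ports freed.
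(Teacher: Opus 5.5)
Your overall architecture is sound, and it is more careful than the paper's own proof. The paper inducts on $t$ and simply asserts that firing $\lfloor t/2\rfloor.f_{2(x-1)-t}$ writes into $\lfloor (t+1)/2\rfloor.f_{2x-(t+1)}$ and $\lfloor (t+1)/2\rfloor.f_{2(x-1)-(t+1)}$. Your induction on the length of valid sequences, with a global state invariant including the $\epsilon$ clause, is the right way to cover every $H\in\mathcal{M}(G)$.

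\textbf{The gap.} The one concrete structural claim you make at the crux is false, and it is exactly where the index check happens. A past vertex $s.f_z$ does \emph{not} have out-neighbours $s.f_{z+1}$ together with $s.f_{z-1}$ or $(s-1).f_{z-1}$. For even $z$ they are $s.f_{z\pm1}$; for odd $z$ they are $(s+1).f_{z\pm1}$. For instance, once $0.f_0$ and $0.f_2$ have fired, $0.f_1$ is past and points to $1.f_0$ and $1.f_2$. Your candidates would actually break the computation. The freed ports must lie on vertices with tag $\lfloor (t+1)/2\rfloor$, which is $s$ for even $z$ and $s+1$ for odd $z$. But $(s-1).f_{z-1}$, and for odd $z$ also $s.f_{z+1}$, is assigned an earlier CA time by your invariant, so $\sigma^{t+1}_{x+1}$ would not match the required value.

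\textbf{The invariant that does work.} Let $\tau(z)$ be the tag of the unique vertex at position $f_z$, and put $T(z):=2\tau(z)+(z \bmod 2)$. This is exactly the CA time your clauses (a) and (b) attach to that vertex. Carry the following second invariant:
\begin{itemize}
\item $|T(z+1)-T(z)|=1$ for all $z$;
\item the edge between the $:r$ port of the vertex at $f_z$ and the $:l$ port of the vertex at $f_{z+1}$ points from the smaller to the larger value of $T$.
\end{itemize}
In $G$ one has $T(2k)=0$ and $T(2k\pm1)=1$. Firing at $z$ requires both incident edges to be outgoing, i.e.\ $T(z\pm1)=T(z)+1$. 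It raises $T(z)$ by $2$ and reverses both edges (they are reversed, not removed), so the invariant is preserved.

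Consequently, the out-neighbours of a past vertex with $T(z)=t$ have $T=t+1$, i.e.\ tag $\lfloor (t+1)/2\rfloor$. With that, your check that the freed ports receive index $(t+1,x+1)$ goes through.

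\textbf{Existence part.} The same correction is needed there. Order the firings of the backward cone by increasing $T$, not by increasing tag. Within a single tag, the even positions must fire before the odd ones, otherwise the odd vertex is not yet past.
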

\begin{proof}
    We proceed by induction on $t$. For $t=0$, both statements follow directly from the encoding of $\sigma^0$: the state $\sigma_x^0$ is stored in the left component of $0.f_{2x}$ and in the right component of $0.f_{2(x-1)}$.

    Suppose that both statements hold at time $t$. The update producing $\sigma_x^{t+1}$ is applied to the vertex
    \[
        w=\lfloor t/2\rfloor.f_{2(x-1)-t},
    \]
    whose left and right components contain, by the induction hypothesis, $\sigma_{x-1}^t$ and $\sigma_x^t$, respectively. By definition of the simulating rule, this update computes
    \[
        f(\sigma_{x-1}^t,\sigma_x^t)=\sigma_x^{t+1}
    \]
    and stores the result in both neighbouring vertices: in the left component of $\lfloor (t+1)/2\rfloor.f_{2x-(t+1)}$ and in the right component of $\lfloor (t+1)/2\rfloor.f_{2(x-1)-(t+1)}$. These components have no incoming edge on the corresponding side precisely once this information has been received. The claim follows for $t+1$.

    Finally, the dependency cone required to compute any fixed $\sigma_x^t$ is finite, thus there exists a finite valid sequence whose application produces both completed copies described above, proving the reachability statement.
\end{proof}

\subsection{Beyond synchronous simulation}\label{sec:exV2}

The previous examples are simulations of synchronous systems. We now consider dynamics that are not naturally interpreted through the lens of a global clock.
A paradigmatic, physics-inspired example is the following time-dilation dynamics.
The time-dilation example extends the internal state space $\Sigma$ used in Ex.~\ref{ex:PS---sshift} with two ``dilation'' states: the \textrm{green} and \textrm{red} states. It also extends the port set with the additional port \(l'\), so that \(\pi=\{l,l',r\}\). The same local rule is extended so that if one such green/red particle is found at position $x$, it will stay there, oscillating between both colours and altering the very texture of space-time, cf. Ex.~\ref{ex:TD---LR}. This results in time dilation as can be seen from Fig.~\ref{fig : Space time diagram example 2}.

\begin{example}{Time dilation---Local rule}{TD---LR}
        We define here the behaviour of $A_{x}$ when $u=t.x$ is a dilation vertex---i.e. $\sigma(u)= green $ or $ \sigma(u) =red$. In both cases particles get destroyed when reaching the dilation vertex and we flip the colour in $x$. On a green vertex (see $H_{\restri{x}}$), $A_x$ behaves as we are used to. On a red vertex (see $G_{\restri{x}}$) $A_x$ creates an anomaly. The edge between $(t+1).x$ and $w$ is reversed, thus we will be forced to apply $A_x$ again before updating $w$.\\
    \begin{tabular}{cccc}
        \begin{minipage}{0.22\linewidth}
            \captionsetup{justification=centering}
            \resizebox{0.7\textwidth}{!}{\begin{tikzpicture}


\redcellinternal{5}{5}{u};
\emptycell{1}{9}{setBorder}{v};
\emptycell{9}{13}{setBorder}{w};


\edge{u}{v}{myblack};
\edge{u}{w}{myblack};


\draw (6,4) node[right]{{\sizeFsix $t.x$}};
\draw (2.25,9) node[right]{{\sizeFsix $v$}};
\draw (6.5,13.5) node[right]{{\sizeFsix $w$}};

\draw (1.5,9) node {{\sizeFsixs $i$}};
\draw (8.5,13) node {{\sizeFsixs $j$}};

\draw (3.25,5.5) node[color = darkport] {{\sizeFsixp $:l$}};
\draw (6.75,6.5) node[color = darkport]  {{\sizeFsixp $:r$}};
\draw (3,8) node[color = darkport]  {{\sizeFsixp $:r$}};
\draw (7,11) node[color = darkport] {{\sizeFsixp $:l$}};

\end{tikzpicture}}
            \captionof*{figure}{$G_{\restri{x}}$.}
            \label{fig : Dynamic example 2; G} 
        \end{minipage}
         &
        \begin{minipage}{0.22\linewidth}
            \captionsetup{justification=centering}
            \resizebox{0.7\textwidth}{!}{\begin{tikzpicture}


\greencellinternal{5}{13}{u};
\emptycell{1}{9}{setBorder}{v};
\emptycell{9}{13}{setBorder}{w};


\edge{v}{u}{myblack};
\edge{u}{w}{myblack};


\draw (1,15) node[right]{{\sizeFsix $(t+1).x$}};
\draw (1.75,8) node[right]{{\sizeFsix $v$}};
\draw (8.5,11) node[right]{{\sizeFsix $w$}};


\draw (3,10.25) node[color = darkport] {{\sizeFsixp $:l$}};
\draw (6.5,13.5) node[color = darkport]  {{\sizeFsixp $:r$}};
\draw (3,12) node[color = darkport]  {{\sizeFsixp $:r$}};
\draw (7.5,12.25) node[color = darkport] {{\sizeFsixp $:l'$}};

\end{tikzpicture}}
            \captionof*{figure}{$A_{x}(G_{\restri{x}})$.}
            \label{fig : Dynamic example 2; A_x G}
        \end{minipage}
        &
        \begin{minipage}{0.22\linewidth}
            \captionsetup{justification=centering}
            \resizebox{0.7\textwidth}{!}{\begin{tikzpicture}


\greencellinternal{5}{5}{u};
\emptycell{1}{9}{setBorder}{v};
\emptycell{9}{9}{setBorder}{w};


\edge{u}{v}{myblack};
\edge{u}{w}{myblack};


\draw (6,4) node[right]{{\sizeFsix $u=t.x$}};
\draw (2.25,9) node[right]{{\sizeFsix $v$}};
\draw (6.25,9) node[right]{{\sizeFsix $w$}};

\draw (1.5,9) node {{\sizeFsixs $i$}};
\draw (8.5,9) node {{\sizeFsixs $j$}};

\draw (3.25,5.5) node[color = darkport] {{\sizeFsixp $:l$}};
\draw (6.5,5.5) node[color = darkport]  {{\sizeFsixp $:r$}};
\draw (3,8) node[color = darkport]  {{\sizeFsixp $:r$}};
\draw (6.75,8) node[color = darkport] {{\sizeFsixp $:l$}};

\end{tikzpicture}}
            \captionof*{figure}{$H_{\restri{x}}$.}
            \label{fig : Dynamic example 2; G green}     
        \end{minipage}
        &
        \begin{minipage}{0.22\linewidth}
            \centering
            \captionsetup{justification=centering}
            \resizebox{0.7\textwidth}{!}{\begin{tikzpicture}


\redcellinternal{5}{13}{u};
\emptycell{1}{9}{setBorder}{v};
\emptycell{9}{9}{setBorder}{w};


\edge{v}{u}{myblack};
\edge{w}{u}{myblack};


\draw (6,13.5) node[right]{{\sizeFsix $(t+1).x$}};
\draw (2.25,9) node[right]{{\sizeFsix $v$}};
\draw (6.25,9) node[right]{{\sizeFsix $w$}};

\draw (1.5,9) node {{\sizeFsixs $i$}};
\draw (8.5,9) node {{\sizeFsixs $j$}};

\draw (3,10.25) node[color = darkport] {{\sizeFsixp $:r$}};
\draw (6.75,10.25) node[color = darkport]  {{\sizeFsixp $:l$}};
\draw (3,12) node[color = darkport]  {{\sizeFsixp $:l$}};
\draw (6.75,12) node[color = darkport] {{\sizeFsixp $:r$}};

\end{tikzpicture}}
            \captionof*{figure}{$A_{x}(H_{\restri{x}})$.}
            \label{fig : Dynamic example 2; A_x G green}
        \end{minipage}
    \end{tabular}
\end{example}

For instance, if two identically made clocks were modelled out of a signal oscillating from left to right and right to left between two neighbouring nodes, the clock lying on the right-hand side of the dilation vertex would tick half as fast as the one lying on the left-hand side. Yet, the very same local rule is being applied left and right of the green/red particle. Such a phenomenon is reminiscent of general relativity, e.g. time flows slightly slower on Earth than it does in the stratosphere, as measured by identically made atomic clocks. Yet, the same laws of Physics apply in the stratosphere and on Earth. How did we get there? 

We argue that, to some extent, the construction of this model of computation mimics some of the key steps of the derivation of general relativity theory from physical symmetries---as found in standard textbooks \cite{Gravitation}. The following analogy can safely be skipped by the reader with lesser interest in Physics. Indeed, let us remind the reader that the textbook derivation of GR proceeds by: 1/ Assuming the existence of a well-determined space-time. 2/ Requiring covariance, i.e. invariance under changes of coordinates, which implies a form of asynchronism as one can choose coordinates whereby one region of a space-like cut will evolve (large time lapse), but not the other (small time lapse). 3/ Concluding that in order to obtain covariance, one needs to provide extra causality structure at each point, namely the metric field. 4/ Assuming background-invariance, namely enabling the possibility that space-time be curved by the presence of this newly allowed metric. 5/ Providing a dynamic upon the metric itself. \\
Here, in the discrete case, we: 1/ represent possible evolutions through a space-time diagram, 2/ generated by an asynchronous evaluation strategy, 3/ structured by an explicit causal DAG of dependencies. 4/ We then allow this DAG itself to vary, and 5/ consider rules that manipulate it.

\begin{figure}[h]
\centering
\resizebox{0.75\textwidth}{!}{\begin{tikzpicture}



\foreach \i in {1,2,3}{
    \foreach \j in {1,2,3}{
    	\FPeval{\posx}{clip(\i*8+4)}
        \FPeval{\posy}{clip(\j*8)}
        \ifnum \j=2
            {\vertex{\posx}{\posy}{border}{0}{\j}{\i\j}}
        \else
            {\vertex{\posx}{\posy}{myblack}{0}{\j}{\i\j}}
        \fi
    }
}

\foreach \i in {1,2,3,4}{
    \foreach \j in {1,2,3}{
    	\FPeval{\posx}{clip(\i*8)}
        \FPeval{\posy}{clip(\j*8+4)}
        \ifnum \j=2
            {\ifnum \i=4
            {\vertex{\posx}{\posy}{border}{0}{}{S\i\j}}
            \else
            \vertex{\posx}{\posy}{border}{0}{\j}{S\i\j}
            \fi}
        \else
            {\vertex{\posx}{\posy}{myblack}{0}{\j}{S\i\j}}
        \fi
    }
}


\foreach \i in {1,2,3}{
    \foreach \j in {1,2}{
    	\FPeval{\posx}{clip(\i*8+4*8-4+1.5)}
        \FPeval{\posy}{clip(\j*16-12)}
        {\vertex{\posx}{\posy}{myblack}{0}{\j}{Big\i\j}}
    }
}

\foreach \i in {1,2,3}{
    \foreach \j in {1,2}{
    	\FPeval{\posx}{clip(\i*8+4*8+1.5)}
        \FPeval{\posy}{clip(\j*16-4)}
        {\vertex{\posx}{\posy}{myblack}{0}{\j}{BigS\i\j}}
    }
}


\foreach \i in {1,2,3}{
    \FPeval{\k}{clip(\i+1)}
    \edge{\i1}{S\i1}{myblack}
    \edge{\i1}{S\k1}{myblack}
    
    \edge{S\i1}{\i2}{border}
    \edge{S\k1}{\i2}{border}
    \edge{\i2}{S\i2}{border}
    \edge{\i2}{S\k2}{border}
    
    \edge{S\i2}{\i3}{border}
    \edge{S\k2}{\i3}{border}
    \edge{\i3}{S\i3}{myblack}
    \edge{\i3}{S\k3}{myblack}
}

\foreach \i in {2,3}{
    \FPeval{\k}{clip(\i-1)}
    \edge{Big\i1}{BigS\i1}{myblack}
    \edge{Big\i1}{BigS\k1}{myblack}
    
    \edge{BigS\i1}{Big\i2}{border}
    \edge{BigS\k1}{Big\i2}{border}
    \edge{Big\i2}{BigS\i2}{myblack}
    \edge{Big\i2}{BigS\k2}{myblack}
}
\edge{Big11}{BigS11}{myblack}
\edge{Big11}{S41}{myblack}

\edge{BigS11}{Big12}{border}
\edge{S41}{Big12}{border}
\edge{S42}{Big12}{border}
\edge{Big12}{BigS12}{myblack}
\edge{Big12}{S43}{myblack}

\redcell{32}{12}{first}
\greencellgray{32}{20}{second}
\redcell{32}{28}{third}
\rightmoover{8}{12}{myblack}{P1A}
\rightmoover{24}{28}{myblack}{P1B}
\rightmoover{41.5}{12}{myblack}{P2A}
\rightmoover{49.5}{28}{myblack}{P2B}

\draw (7.5,24.5) node[right]{{\Huge \scalefont{2.5} $H'$}};
\draw (7.5,8.5) node[right]{{\Huge \scalefont{2.5} $H$}};

\draw (33.7,20.6) node[color=port]{\Huge $:r$};
\draw (35.7,20.6) node[color=port]{\Huge $:l'$};
\draw (33.7,13) node[color=port]{\Huge $:r$};
\draw (35.4,18.6) node[color=port]{\Huge $:l$};
\draw (39.4,18.6) node[color=port]{\Huge $:r$};
\draw (40,13) node[color=port]{\Huge $:l$};
\draw (39.1,21.6) node[color=port]{\Huge $:r$};
\draw (35.4,21.6) node[color=port]{\Huge $:l$};
\draw (33.7,27) node[color=port]{\Huge $:r$};
\draw (40,27) node[color=port]{\Huge $:l$};

\end{tikzpicture}}
\caption{{\em Time-dilation example.} In black we highlight two graphs $H$ and $H'$ belonging to the space-time diagram. We start with two particles, one on the left and the other on the right of the dilation vertex. Notice that, although the same local rule is applied everywhere, time flows twice as fast for the particle on the left.
\label{fig : Space time diagram example 2}}
\end{figure}

\section{Locality under sequential composition}

Having introduced valid sequences and illustrated the dynamics they generate, we now study whether locality is preserved under sequential composition. 
In other words: if each individual step is local, does the entire computation remain local?

We show that this is indeed the case, provided a suitable notion of locality for sequences. This leads to the notion of $*$-locality, which captures the behaviour of arbitrary finite sequences of rule applications.

This result has important consequences. In Sec.~\ref{sec:local rule}, we established structural properties of local rules, such as position preservation (Lem.~\ref{lem:easy name preservation}) and boundary preservation (Lem.~\ref{lem:border edges are preserved}) that emerge directly from locality. The present section shows that these properties are still true under sequential composition, because of $*$-locality.
Beyond this article, such a property is needed in any proof that reasons about sequences. A typical example is the space-time determinism theorem of~\cite{ArrighiSpaceTimeDetWRLA}: it relies on a privacy condition between disjoint sequences, requiring that the neighbourhoods $\restr{\omega}{G}$ and $\restr{\omega'}{G}$ not intersect except at their boundaries. Privacy is vacuous unless the action of $A_{\omega}$ remains inside $G_{\restri{\omega}}$: without $*$-locality, disjoint neighbourhoods would not imply disjoint supports.


\subsection{Locality for sequences}\label{sec: locality for sequences}

We now extend the notion of neighbourhood from a single position to a sequence of valid positions, in order to study how locality behaves under sequential rule applications: 
\begin{definition}[Neighbourhood of a sequence]\label{def:N_of_omega}
Given a neighbourhood scheme $\restri{}$, $G\in\sshift$ and $\omega\in \valid{G}{}$ , we define 
$$\restr{\omega}{G} := \bigcup_{ \beta x\alpha = \omega} \restr{x}{A_\alpha G}.$$
\RX{Notice that $\restr{\omega}{G}$ may contain new nodes, made out of those of $G$.}{Notice that $\restr{\omega}{G}$ may contain new nodes, made out of those of $G$.}
\todo[inline,color=cyan]{Monotony-1 is used sometimes used on $A_\alpha G$ Lem 4.11-12, so cannot project upon $G$. I'll happen as we induce.}
\end{definition}

\begin{example}{Particle system---Neighbourhood of a sequence}{}
    Consider Ex.~\ref{ex:PS---LR} and the graph $G\in \sshift$ pictured below. For the sequence $\omega = 6745$, we have $\restri{\omega} = \{3,4,5,6,7,8\}$ since it includes the neighbourhoods around positions already valid (such as $5$ and $7$), as well as those that become valid later (such as $4$).\\
    \begin{tabular}{ccc}
        \begin{minipage}{0.49\linewidth}
            \centering
            \includegraphics[width=\linewidth]{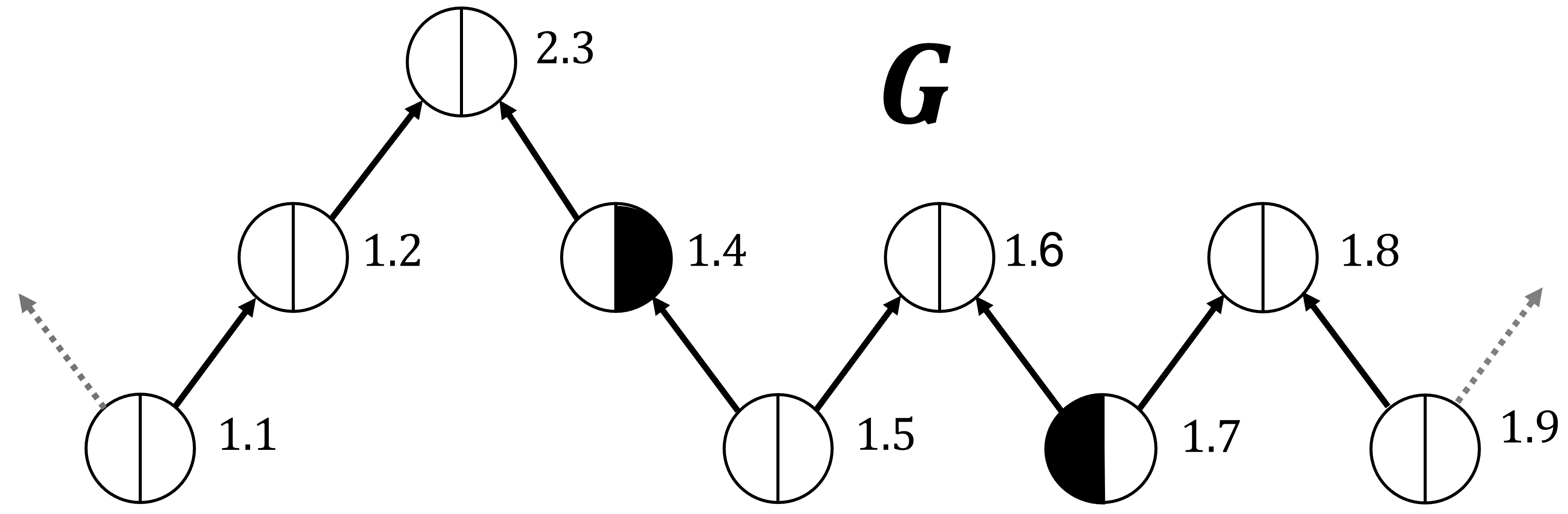}
        \end{minipage}
         &
         \begin{minipage}{0.49\linewidth}
            \centering
            \includegraphics[width=\linewidth]{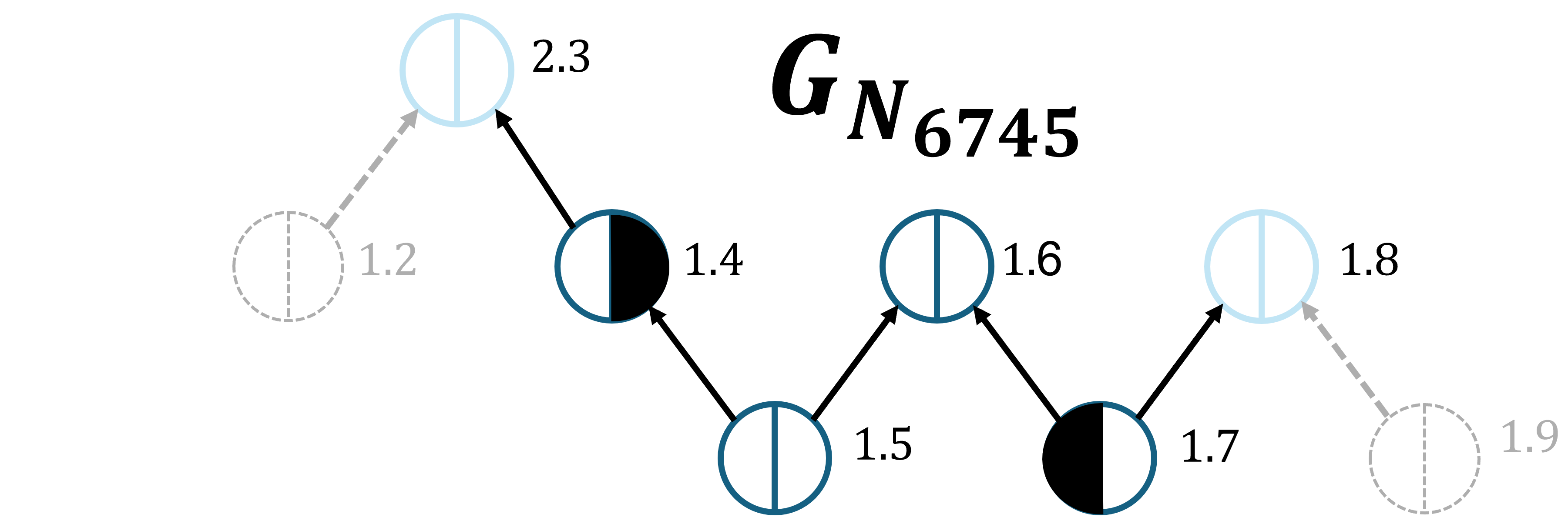}
        \end{minipage}
    \end{tabular}
\end{example}

We first verify that $\restr{\omega}{G}$ is monotone with respect to subsequences:
%
%
\begin{lemma}{Monotony}{monotony}
    All neighbourhood schemes are monotonous---i.e. for every sequence $\omega = \gamma  \beta  \alpha \in \valid{G}{A}$ we have 
    $\restr{\beta}{A_\alpha G} \subseteq \restr{\omega}{G}.$
\todo[inline,color=cyan]{Checked.}
\end{lemma}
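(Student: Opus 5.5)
This is essentially a re-indexing of the union defining $\restr{\omega}{G}$. I would first record that $\beta\in\valid{A_\alpha G}{A}$, so that $\restr{\beta}{A_\alpha G}$ is defined. Take any factorisation $\beta=\beta_2 x\beta_1$. Then $\omega=(\gamma\beta_2)\,x\,(\beta_1\alpha)$, and validity of $\omega$ for $G$ gives $A_{\beta_1\alpha}G\in\graphvalidexplicit{\restri{}}{x}$. Since $A_{\beta_1\alpha}G=A_{\beta_1}(A_\alpha G)$ (the notation $A_{yx}=A_yA_x$ being a composition), this says exactly that $\beta$ is a valid sequence of $A_\alpha G$.

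Next I would unfold the definition:
\[
\restr{\beta}{A_\alpha G}=\bigcup_{\beta_2 x\beta_1=\beta}\restr{x}{A_{\beta_1}A_\alpha G}.
\]
Each factorisation $\beta=\beta_2x\beta_1$ yields the factorisation $\omega=\beta'x\alpha'$ with $\beta'=\gamma\beta_2$ and $\alpha'=\beta_1\alpha$. For this one, $A_{\alpha'}G=A_{\beta_1}A_\alpha G$. Hence each term $\restr{x}{A_{\beta_1}A_\alpha G}$ of the first union is literally the term $\restr{x}{A_{\alpha'}G}$ of the union defining $\restr{\omega}{G}$.

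Because the map sending factorisations of $\beta$ to factorisations of $\omega$ is injective (positions within the word are preserved), every term on the left appears on the right. The inclusion follows, so no property of the neighbourhood scheme beyond its definition is needed.

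The only point needing care, and what I would treat as the main obstacle, is bookkeeping. One must keep track of occurrences of $x$ as positions in the word, not just as letters, and rely on associativity of the composition $A_{\beta_1}A_\alpha=A_{\beta_1\alpha}$ as operators on $\sshift$. Once these are checked, the lemma is immediate.
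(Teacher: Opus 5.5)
Your proof is correct and follows essentially the same route as the paper: re-index the union, so that each factorisation $\beta=\beta_2x\beta_1$ yields the factorisation $\omega=(\gamma\beta_2)\,x\,(\beta_1\alpha)$ with the same term. The paper passes through $\beta\alpha$ before $\gamma\beta\alpha$, whereas you go directly. Your explicit check that $\beta\in\Omega_{A_\alpha G}$ fills in a point the paper leaves implicit. The injectivity remark is harmless but unnecessary, since the inclusion only needs each left-hand term to appear among the right-hand terms.
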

\begin{proof}
$$\restr{\beta}{A_\alpha G} = \bigcup_{\omega_2 x \omega_1 = \beta} \restr{x}{A_{\omega_1}A_\alpha G}
         \subseteq \bigcup_{\omega_2 x \omega_1 = \beta\alpha} \restr{x}{A_{\omega_1} G}
         \subseteq \bigcup_{\omega_2 x \omega_1 = \gamma\beta\alpha} \restr{x}{A_{\omega_1} G}
         =\restr{\omega}{G}$$
\end{proof}
Note that this implies $\restr{\beta}{A_\alpha G} \subseteq \restr{\beta  \alpha}{G}$ and  $\restr{\alpha}{G} \subseteq \restr{\omega}{G}$.

We now state the main goal of this section. For size $1$ valid sequences, we have properties ensuring locality: \hyperref[neighbourhood : extensivity]{extensivity} of the neighbourhood scheme and \hyperref[def : locality]{$\restri{}$-locality} of the local rule. Our definition of neighbourhood scheme of a sequence will allow us to extend both properties to sequences of arbitrary size. We first formalise what it means for sequences to satisfy these properties.

\begin{definition}[$*$-locality, $*$-extensivity]
    Consider a neighbourhood scheme $\restri{}$. An operator $A_{(-)}$ is said to be $\omega$-local iff, for all $G$ for which $\omega\in \valid{G}{A}$ we have
    $$A_\omega G = A_\omega (G_\restri{\omega}) \sqcup G_\crestri{\omega}.$$
    An operator is said to be local iff this holds for any $\omega\in \valid{G}{A}$ such that $|\omega|=1$, in which case we call it a local rule. It is said to be $n$-local if this holds for any $\omega$ of length at most $n$, and $*$-local if it holds for any $\omega\in \valid{G}{A}$. \\
    In a similar manner, given a local rule $A_{(-)}$, if the formula $G_{\restri{\omega}} \sqsubseteq H\sqsubseteq G \implies \omega\in \valid{H}{A}\land G_{\restri{\omega}} = H_{\restri{\omega}}$ holds for any $\omega$ of length at most $n$ such that $\omega$ is valid on $G$, we call it $n$-extensive. If it holds for any $\omega$ we call it $*$-extensive.
\end{definition}

Then we prove the properties of $*$-extensivity and $*$-locality by induction. But because they are interdependent, each induction step breaks down into two. First we advance one leg and show $(n+1)$-extensivity. Second we advance the other leg and show $(n+1)$-locality. Combining these yields the properties.

 \begin{lemma}{$n$-locality implies $(n+1)$-extensivity}{*-extensivity}
     Let $n\geq 0$. Consider a neighbourhood scheme $\restri{}$, an $n$-local rule $A_{(-)}$, a graph $G$ and $\omega \in \valid{G}{A}$ of length at most $n+1$. Then $G_{\restri{\omega}} \sqsubseteq H\sqsubseteq G \text{ implies $\omega\in \valid{H}{A}$ and } G_{\restri{\omega}} = H_{\restri{\omega}}$.
     \todo[inline,color=cyan]{Checked.}
 \end{lemma}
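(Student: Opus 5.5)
We argue by induction on the length of $\omega$, writing $\omega = \beta x$ with $x\in\calX$ the first position applied and $|\beta|\leq n$. The case $|\omega|=0$ is trivial, since then $\restr{\omega}{G}=\emptyset$ and nothing is required. For $|\omega|=1$ the claim is exactly \hyperref[neighbourhood : stability]{stability} together with \hyperref[neighbourhood : extensivity]{extensivity}. Indeed, $G_{\restri{x}}\sqsubseteq G_{\restri{\omega}}\sqsubseteq H$ gives $H\in\graphvalid{\restri{}}{x}$, and $G_{\restri{x}}\sqsubseteq H\sqsubseteq G$ gives $H_{\restri{x}}=G_{\restri{x}}$. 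To see the first inclusion we use $\restr{x}{G}\subseteq\restr{\omega}{G}$ (Lem.~\ref{lem:monotony}) and the fact that inducing on a smaller set of positions yields a subgraph.

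For the inductive step, take $\omega=\beta x$ and $H$ with $G_{\restri{\omega}}\sqsubseteq H\sqsubseteq G$. We first handle the first step exactly as above, which gives $H\in\graphvalid{\restri{}}{x}$ and $H_{\restri{x}}=G_{\restri{x}}$. We then transport the sandwich through $A_x$ and show
\[
(A_xG)_{\restri{\beta}} \sqsubseteq A_xH \sqsubseteq A_xG,
\]
where $\restri{\beta}$ is computed on $A_xG$.

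The inclusion $A_xH\sqsubseteq A_xG$ comes from the decompositions $A_xH=(A_xG_{\restri{x}})\sqcup H_{\crestri{x}}$ and $A_xG=(A_xG_{\restri{x}})\sqcup G_{\crestri{x}}$. Since $H=G_U$ for some $U$, we have $H_{\crestri{x}}=G_{\crestri{x}\cap U}$. Hence $A_xH=(A_xG)_{U'}$ with $U'=\I_{A_xG_{\restri{x}}}\cup (U\setminus\restr{x}{G})$. Here Lem.~\ref{lem:easy name preservation} ensures that the new names lie at positions of $\restr{x}{G}$, so the two parts do not interfere.

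For the other inclusion, Lem.~\ref{lem:monotony} gives $\restr{\beta}{A_xG}\subseteq\restr{\omega}{G}$. The vertices of $A_xG$ at positions of $\restr{\beta}{A_xG}$ split into two kinds. Those at positions of $\restr{x}{G}$ lie in $A_xG_{\restri{x}}$, which is contained in $A_xH$. Those at other positions are vertices of $G_{\crestri{x}}$ at positions of $\restr{\omega}{G}$. These are internal vertices of $G_{\restri{\omega}}\sqsubseteq H$, hence of $H_{\crestri{x}}$. Together this shows $(A_xG)_{\restri{\beta}}\sqsubseteq A_xH$.

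Since $\beta$ is valid on $A_xG$ with $|\beta|\leq n$, the induction hypothesis applies, because $n$-locality implies $(n-1)$-locality and hence $n$-extensivity. It yields $\beta\in\valid{A_xH}{A}$ and $(A_xH)_{\restri{\beta}}=(A_xG)_{\restri{\beta}}$. More precisely, for each suffix $\alpha$ of $\beta$, it yields $\restr{y}{A_\alpha A_xH}=\restr{y}{A_\alpha A_xG}$ along the way, applying the hypothesis to prefixes of $\beta$. Then $\omega\in\valid{H}{A}$, and $\restr{\omega}{H}=\restr{x}{H}\cup\restr{\beta}{A_xH}=\restr{\omega}{G}$. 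Finally, $H_{\restri{\omega}}=G_{\restri{\omega}}$ follows from $G_{\restri{\omega}}\sqsubseteq H\sqsubseteq G$ via Rmk.~\ref{rem:induction on internal vertices}.

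\textbf{Main obstacle.} The delicate point is the middle step: the edge and border bookkeeping for $A_xH\sqsubseteq A_xG$ and $(A_xG)_{\restri{\beta}}\sqsubseteq A_xH$. In particular, one must check that the union defining $A_xH$ is defined and coincides with an induced subgraph of $A_xG$. For this we would argue via Lem.~\ref{lem:compatible subgraphs}, using $\I_{A_xG_{\restri{x}}}\cap \I_{G_{\crestri{x}}}=\emptyset$ and the fact that $\sqcup$ computes unions (Rmk.~\ref{rem:composition of graphs}).

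The $n$-locality hypothesis enters only to justify the identity $\restr{\omega}{H}=\restr{x}{H}\cup\restr{\beta}{A_xH}$ when the induction is phrased with $\omega=\beta x$ peeled from the left instead. If the right-peeling argument above runs without it, that is harmless. Otherwise we would peel the last letter, $\omega=y\gamma$, and use $\gamma$-locality to write $A_\gamma H$ and $A_\gamma G$ as $A_\gamma(G_{\restri{\gamma}})\sqcup(\cdot)_{\crestri{\gamma}}$. From that decomposition we would then deduce the sandwich for $\restri{y}$ directly.
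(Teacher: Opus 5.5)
Your proof is correct, but it runs the induction from the opposite end to the paper's.

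\textbf{The paper's route.} The paper writes the sequence as $y\omega$, with $y$ the \emph{last} position applied. It gets $\omega\in\valid{H}{A}$ and $\restr{\omega}{H}=\restr{\omega}{G}$ from the hypothesis on the same graph $G$. It then uses $n$-locality of $\omega$ to decompose $A_\omega G$ and $A_\omega H$ via $\sqcup$, reads off $(A_\omega G)_{\restr{y}{A_\omega G}}\sqsubseteq A_\omega H\sqsubseteq A_\omega G$, and closes with stability and extensivity at $y$.

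\textbf{Your route.} You peel off the \emph{first} position $x$ and settle it by stability and extensivity. You then transport the sandwich one step forward, to $(A_xG)_{\restri{\beta}}\sqsubseteq A_xH\sqsubseteq A_xG$, using only one-step locality and position preservation (Lem.~\ref{lem:easy name preservation}). The hypothesis is then applied to $\beta$ on the new graph $A_xG$. This is legitimate because the statement quantifies over all graphs, and $A_xG,A_xH\in\sshift$.

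\textbf{The bookkeeping you flag goes through.} First, $A_xH$ is defined, since $A_x$ is total on $\sshift$ and $H\in\graphvalid{\restri{}}{x}$. Second, $A_xH\sqsubseteq A_xG$ follows from your explicit $U'$: $\I_{A_xG_{\restri{x}}}$ sits at positions of $\restr{x}{G}$, and condition~2 of Lem.~\ref{lem:compatible subgraphs} applies to $A_xG=A_xG_{\restri{x}}\sqcup G_{\crestri{x}}$. More simply, $\sqcup$ is a least upper bound for $\sqsubseteq$, and $A_xG$ contains both $A_xG_{\restri{x}}$ and $H_{\crestri{x}}$ as induced subgraphs. Third, your split of the internal vertices at positions of $\restr{\beta}{A_xG}$ gives the other inclusion.

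\textbf{What each route buys.} Your route never uses multi-step locality, so it proves directly that every local rule is $*$-extensive. The interleaved bootstrapping behind Th.~\ref{th: star ext and star loc} then decouples: $*$-extensivity first, then Lem.~\ref{lem:extloc} for $*$-locality. The paper's route keeps the base graph fixed and gets the sandwich at $y$ from the $\sqcup$ decomposition with almost no vertex-level bookkeeping. The price is that the two inductions are coupled.

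\textbf{Two things to drop.} You can drop the detour ``$n$-locality implies $(n-1)$-locality''; a plain induction on $|\omega|$ suffices. You can also drop the hedging in your last paragraph, which misplaces the role of $n$-locality. The identity $\restr{\omega}{H}=\restr{x}{H}\cup\restr{\beta}{A_xH}$ is just the definition. In the last-letter route, what locality actually supplies is $A_\omega H\sqsubseteq A_\omega G$ and the sandwich at $y$.
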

 \begin{proof}
    We prove $(n+1)$-extensivity by finite induction. The basis case is $1$-extensivity, which comes immediately from extensivity of $\restri{}$. Our induction hypothesis is $k$-extensivity, with $1<k<n+1$.
    
    Consider $y\omega\in \valid{G}{A}$ with $|\omega|=k$. 
    Suppose that $G_{\restri{y\omega}} \sqsubseteq H\sqsubseteq G$. Notice the following decomposition :
    $$\restr{y\omega}{G} = \bigcup_{\alpha x\beta = y \omega}\restr{x}{A_\beta G} = \restr{y}{A_\omega G}\cup \bigcup_{\substack{\alpha x\beta = y \omega\\ |\alpha |>0}}\restr{x}{A_\beta G} = \restr{y}{A_\omega G}\cup \restr{\omega}{G}$$
    As long as $y\omega$ is valid on $H$, we can derive the same way $\restr{y\omega}{H} =\restr{y}{A_\omega H}\cup \restr{\omega}{H}$.
    
    Thus to prove $y\omega\in\valid{H}{A}$ and $\restr{y\omega}{G} = \restr{y\omega}{H}$, we only have to prove, on the one hand $\omega\in\valid{H}{A}$ and $\restr{\omega}{G} = \restr{\omega}{H}$, and on the other hand $A_\omega H\in \graphvalid{\restri{}}{y}$ and $\restr{y}{A_\omega G} = \restr{y}{A_\omega H}$.

    Since by monotony (Lem.~\ref{lem:monotony}) we have $G_{\restri{\omega}}\sqsubseteq G_{\restri{y\omega}} \sqsubseteq H\sqsubseteq G$
    the $k$-extensivity induction hypothesis readily gives $\omega\in\valid{H}{A}$ and $\restr{\omega}{G} = \restr{\omega}{H}$. 

    Then we prove $\restr{y}{A_\omega G} = \restr{y}{A_\omega H}$. Notice that by \hyperref[neighbourhood : stability]{stability}\ 
    and \hyperref[neighbourhood : extensivity]{extensivity} of $\restri{}$ it is enough to prove $$(A_\omega G)_{\restr{y}{A_\omega G}}\sqsubseteq A_\omega H \sqsubseteq A_\omega G$$
    We start with the second inclusion :
    \begin{align*}
        A_\omega G &= A_\omega G_{\restr{\omega}{G}} \sqcup G_{\crestr{\omega}{G}} \tag{$n$-locality}\\
        &= A_\omega H_{\restr{\omega}{G}} \sqcup G_{\crestr{\omega}{G}} \tag{$G_{\restr{\omega}{G}} \sqsubseteq H$}\\
        &= A_\omega H_{\restr{\omega}{H}} \sqcup G_{\crestr{\omega}{H}} \tag{$\restr{\omega}{G} = \restr{\omega}{H}$}\\
        &\sqsupseteq A_\omega H_{\restr{\omega}{H}} \sqcup H_{\crestr{\omega}{H}} \tag{$H\sqsubseteq G$}\\
        &= A_\omega H
    \end{align*}

    Finally we prove the last inclusion. First we notice that by monotony (Lem.~\ref{lem:monotony}) we have $G_{\restr{y}{A_\omega G}}\sqsubseteq G_{\restr{y\omega}{G}}\sqsubseteq H$. This notably implies that $(G_X)_{\restr{y}{A_\omega G}}\sqsubseteq (H_X)_{\restr{y}{A_\omega G}}$ for any set of positions $X$. Then we derive :
    \begin{align*}
        (A_\omega G)_{\restr{y}{A_\omega G}}&= (A_\omega H_{\restr{\omega}{H}} \sqcup G_{\crestr{\omega}{H}})_{\restr{y}{A_\omega G}}\\
        &= (A_\omega H_{\restr{\omega}{H}})_{\restr{y}{A_\omega G}} \sqcup (G_{\crestr{\omega}{H}})_{\restr{y}{A_\omega G}}\\
        &\sqsubseteq (A_\omega H_{\restr{\omega}{H}})_{\restr{y}{A_\omega G}} \sqcup (H_{\crestr{\omega}{H}})_{\restr{y}{A_\omega G}}\\
        &\sqsubseteq A_\omega H
    \end{align*}
 \end{proof}

Before we proceed to the other leg of the induction, we need a lemma showing that if one is local to a neighbourhood, one is also local to a bigger region. 
\begin{lemma}{$\restri{\omega}\subseteq X$-locality implies $X$-locality}{petit local implique grand local version générale}
    Let $n\geq 0$. 
    Consider an $n$-extensive neighbourhood scheme $\restri{}$, a graph $G$ and $\omega \in \mathcal{X}^*$ of length at most $n$, and any set $X\subseteq \mathcal{X}$ such that $\restr{\omega}{G}\subseteq X$. If $A_{(-)}$ is $\omega$-local then we have:
    $$A_\omega G = (A_\omega G_{X})\sqcup G_{\overline{X}}$$
\todo[inline,color=cyan]{Checked.}
\end{lemma}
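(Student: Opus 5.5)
The plan is to apply $\omega$-locality not to $G$ but to the induced subgraph $G_X$. Then we use $n$-extensivity to see that $G_X$ has the same neighbourhood of $\omega$ as $G$. Finally we recompose the pieces using associativity-like properties of $\sqcup$ on disjoint induced subgraphs.

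\textbf{Step 1: $\omega$ acts on $G_X$ exactly as on $G$.}
Since $\restr{\omega}{G}\subseteq X$, we have $G_{\restri{\omega}}\sqsubseteq G_X\sqsubseteq G$. Indeed, $G_{\restr{\omega}{G}}=(G_X)_{\restr{\omega}{G}}$, since taking induced subgraphs composes by intersecting the sets of positions.

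By $n$-extensivity (the hypothesis on $\restri{}$, for sequences of length at most $n$, valid on $G$), we get two facts:
\begin{itemize}
    \item $\omega\in\valid{G_X}{A}$;
    \item $\restr{\omega}{G_X}=\restr{\omega}{G}$.
\end{itemize}
Hence $(G_X)_{\restri{\omega}}=G_{\restri{\omega}}$.

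\textbf{Step 2: apply $\omega$-locality twice.}
Applying $\omega$-locality to $G_X$ (legitimate since $\omega$ is valid on $G_X$) yields
\[
A_\omega G_X=A_\omega(G_{\restri{\omega}})\sqcup (G_X)_{\crestri{\omega}} .
\]
Applying it to $G$ yields $A_\omega G=A_\omega(G_{\restri{\omega}})\sqcup G_{\crestri{\omega}}$.

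\textbf{Step 3: recompose.}
We split the context as $G_{\crestri{\omega}}=(G_X)_{\crestri{\omega}}\sqcup G_{\overline{X}}$. This holds because $\overline{\restr{\omega}{G}}=(X\setminus\restr{\omega}{G})\cup\overline{X}$, and because splitting a graph over a partition of positions recovers it (Ex.~\ref{ex:splitting a graph}, applied inside $G_{\crestri{\omega}}$). The components have pairwise disjoint internal vertices, and all the unions are unions of vertex and edge sets (Rmk.~\ref{rem:composition of graphs}). Therefore
\[
A_\omega G=\bigl(A_\omega(G_{\restri{\omega}})\sqcup (G_X)_{\crestri{\omega}}\bigr)\sqcup G_{\overline{X}}=(A_\omega G_X)\sqcup G_{\overline{X}} .
\]

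\textbf{Main obstacle.}
The delicate point is the regrouping of $\sqcup$ in Step 3. Composition is only partially defined, so associativity must be justified, not assumed. I would argue it as follows.
\begin{itemize}
    \item The triple union on the left is already known to be a graph, namely $A_\omega G$.
    \item $A_\omega G_X$ is defined by the previous step.
    \item Each of $A_\omega G_X$ and $G_{\overline{X}}$ is an induced subgraph of $A_\omega G$. For $A_\omega G_X$, I would check via Lem.~\ref{lem:compatible subgraphs} that its border edges towards $\overline{X}$ are exactly those of $G_X$, which appear in $A_\omega G$ through $(G_X)_{\crestri{\omega}}$.
\end{itemize}
Given this, minimality in the definition of $\sqcup$ together with the union description forces equality.
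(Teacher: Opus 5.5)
Your proof is correct and follows the paper's argument: apply $\omega$-locality to $G_X$, use $n$-extensivity to replace $\restr{\omega}{G_X}$ by $\restr{\omega}{G}$, and recombine the context using $\overline{X}\subseteq\crestr{\omega}{G}$. You also make explicit two points that the paper's chain of equalities leaves implicit: that $\omega$ is valid on $G_X$, and that regrouping the partial operation $\sqcup$ is justified because both $A_\omega G_X$ and $G_{\overline{X}}$ are induced subgraphs of $A_\omega G$.
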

\begin{proof}
In order to lighten the notations of this proof we temporarily write $\restri{}$ instead of $\restri{\omega}$.
By $n$-extensivity of $\restri{}$, $G_{\restr{}{G}}\sqsubseteq G_{X} \sqsubseteq G$ implies $\restr{}{G}=\restr{}{G_{X}}$.
\begin{align*}
    A_\omega G_{X}&= A_\omega G_{X{}\restr{}{G_{X}}}\sqcup G_{X{}\crestr{}{G_{X}}}\\
    &= A_\omega G_{X{}\restr{}{G}}\sqcup G_{X\crestr{}{G}}\textrm{ by extensivity}\\
    &= A_\omega G_{\restr{}{G}}\sqcup G_{X\cap\crestr{}{G}}\textrm{ since }\restr{}{G}\subseteq X\\
    A_\omega G_{X}\sqcup G_{\overline{X}}&=A_\omega G_{\restr{}{G}}\sqcup G_{X\cap\crestr{}{G}} \sqcup G_{\overline{X}}\\
    &=A_\omega G_{\restr{}{G}}\sqcup G_{X\cap\crestr{}{G}} \sqcup G_{\overline{X}\cap\crestr{}{G}} \textrm{ since } \overline{X}\subseteq\crestr{}{G}\\
    &=A_\omega G_{\restri{}(G)}\sqcup G_{\crestri{}(G)}\\
    &= A_\omega G
\end{align*}
\end{proof}

Now we prove that any local operator for a $(n+1)$-extensive neighbourhood scheme is also $(n+1)$-local.
\begin{lemma}{$(n+1)$-extensivity implies $(n+1)$-locality}{extloc}
    Let $n\geq 0$. 
    If an operator $A_{(-)}$ is local and $(n+1)$-extensive, then it is also $(n+1)$-local.
\todo[inline,color=cyan]{Checked.}
\end{lemma}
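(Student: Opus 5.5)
We argue by finite induction on the length of $\omega$, peeling off the last letter in the same way as in the proof of Lem.~\ref{lem:*-extensivity}. The base case $|\omega|\le 1$ is locality of $A_{(-)}$ itself. For the inductive step, take $y\omega\in\valid{G}{A}$ with $|\omega|=k\le n$ and assume $\omega$-locality. We want
\[
A_{y\omega}G = A_{y\omega}(G_{\restri{y\omega}})\sqcup G_{\crestri{y\omega}}.
\]
Throughout, write $X:=\restr{y\omega}{G}$. By the decomposition $\restr{y\omega}{G}=\restr{y}{A_\omega G}\cup\restr{\omega}{G}$ we have $\restr{\omega}{G}\subseteq X$ and $\restr{y}{A_\omega G}\subseteq X$.

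\textbf{Step 1: rewrite $A_\omega G$ relative to $X$.} Since $A_{(-)}$ is $(n+1)$-extensive, it is in particular $k$-extensive. Hence Lem.~\ref{lem:petit local implique grand local version générale} applies with the set $X\supseteq\restr{\omega}{G}$, giving
\[
A_\omega G=(A_\omega G_X)\sqcup G_{\overline X}.
\]

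\textbf{Step 2: apply $A_y$ using single-step locality and Lem.~\ref{lem:petit local implique grand local version générale} with $n=1$.} Applying $A_y$ to $A_\omega G$ and using $\restr{y}{A_\omega G}\subseteq X$ yields
\[
A_{y\omega}G=(A_y(A_\omega G)_{X'})\sqcup (A_\omega G)_{\overline{X'}}
\]
for any $X'\supseteq\restr{y}{A_\omega G}$; we take $X'=X\cup\restr{\omega}{G}=X$.

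\textbf{Step 3: identify the two pieces.} For the context, $(A_\omega G)_{\overline X}=G_{\overline X}$: this should follow from the decomposition of Step 1, once we know that $A_\omega G_X$ has no internal vertices at positions in $\overline X$. That fact comes from position preservation (Lem.~\ref{lem:easy name preservation}) iterated along $\omega$. For the local piece, we need $(A_\omega G)_X=A_\omega(G_X)$. To get it, we use $(n+1)$-extensivity: since $G_X\sqsubseteq G_X\sqsubseteq G$, we have $y\omega\in\valid{G_X}{A}$ and $\restr{y\omega}{G_X}=X$. Applying Step 1 to $G_X$ in place of $G$ then gives $A_\omega G_X=(A_\omega G_X)\sqcup (G_X)_{\overline X}$, and $(G_X)_{\overline X}$ has no internal vertices. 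Together these identifications give
\[
A_{y\omega}G=A_y A_\omega(G_X)\sqcup G_{\overline X},
\]
which is the desired statement.

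\textbf{Main obstacle.} The delicate point is Step 3: showing that restricting to $X$ commutes with the composition $\sqcup$, i.e.\ $((A_\omega G_X)\sqcup G_{\overline X})_X=A_\omega G_X$. The border vertices of $A_\omega G_X$ lie outside $X$, but we must check that no edge of $G_{\overline X}$ between two internal vertices of $A_\omega G_X$ is added. I would handle this with Rmk.~\ref{rem:composition of graphs} and the fact that the edges of $G_{\overline X}$ each touch a vertex at a position of $\overline X$. Alternatively, I would reduce Step 3 to validity of $y$ on $A_\omega G_X$ with the same neighbourhood, again via $(n+1)$-extensivity applied to the sequence $y\omega$.
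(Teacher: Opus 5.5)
Your overall route is the paper's. The paper splits $\omega=\beta\alpha$ into two non-empty parts and applies the lemma that $\restri{\omega}\subseteq X$-locality implies $X$-locality twice with $X=\restr{\omega}{G}$: once to $\alpha$ on $G$ and once to $\beta$ on $A_\alpha G$, with both inclusions coming from monotony. It then identifies the two pieces by iterated position preservation. You do the same with $\beta=y$ a single letter, so the second application needs only plain locality and extensivity. Steps 1 and 2, and the context half of Step 3, are correct.

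The argument you give for the local half of Step 3 does not work, however. Applying Step 1 to $G_X$ gives $A_\omega G_X=(A_\omega G_X)\sqcup (G_X)_{\overline X}$. Since $(G_X)_{\overline X}$ is the empty graph, this is a tautology in which $A_\omega G$ never appears, so it cannot yield $(A_\omega G)_X=A_\omega(G_X)$; the appeal to $(n+1)$-extensivity at that point is also unnecessary.

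The right argument mirrors your context argument:
\begin{itemize}
\item Step 1 and the definition of $\sqcup$ give $A_\omega G_X\sqsubseteq A_\omega G$.
\item By Rmk.~\ref{rem:composition of graphs}, $\I_{A_\omega G}=\I_{A_\omega G_X}\cup \I_{G_{\overline X}}$, and iterating Lem.~\ref{lem:easy name preservation} gives $\sp{\I_{A_\omega G_X}}\subseteq X$. Hence the internal vertices of $A_\omega G$ at positions of $X$ are exactly $\I_{A_\omega G_X}$.
\item An induced subgraph depends only on the internal vertices it keeps, so $(A_\omega G)_X=(A_\omega G)_{\I_{A_\omega G_X}}=A_\omega G_X$.
\end{itemize}

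This also settles your \emph{main obstacle}, which is slightly misdiagnosed. No edge of $G_{\overline X}$ joins two internal vertices of $A_\omega G_X$, since each such edge touches $\I_{G_{\overline X}}$. The edges that restriction to $X$ could add are those between $\I_{A_\omega G_X}$ and $\I_{G_{\overline X}}$. These already lie in $A_\omega G_X$ because $A_\omega G_X\sqsubseteq A_\omega G$ (equivalently, by condition~2 of Lem.~\ref{lem:compatible subgraphs}).

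With this replacement your proof goes through and coincides with the paper's.
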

\begin{proof}
    By recurrence on the size of $\omega$. The base case $|\omega|=1$ is locality. The induction hypothesis is $n$-locality. Consider $|\omega|=n+1$ and let $\omega = \beta \alpha$ with $\beta,\alpha$ non-empty. By monotony (Lem.~\ref{lem:monotony}) we have that for all $G$, $\restr{\alpha}{G}\subseteq \restr{\omega}{G}$. We can thus use Lem.~\ref{lem:petit local implique grand local version générale} to get:  
\begin{align*}
A_\alpha G &= (A_\alpha G_{\restri{\omega}})\sqcup G_{\crestri{\omega}} \\
A_\beta A_\alpha G &= A_\beta ((A_\alpha G_{\restri{\omega}})\sqcup G_{\crestri{\omega}})
\end{align*}
By monotony we also have that $\restr{\beta}{A_\alpha G} \subseteq \restr{\omega}{G}$. We can thus use Lem.~\ref{lem:petit local implique grand local version générale} again with $X = \restr{\omega}{G}$: 
\begin{align*}
A_\beta A_\alpha G &= \left(A_\beta \left(((A_\alpha G_{\restri{\omega}}) \sqcup G_{\crestri{\omega}})_{ \restr{\omega}{G}}\right)\right) \sqcup ((A_\alpha G_{\restri{\omega}})\sqcup G_{\crestri{\omega}})_{\overline{ \restri{\omega}}(G)} \\
&= A_\beta (A_\alpha G_{\restri{\omega}} 
)_{ \restr{\omega}{G}} \sqcup ( 
G_{\crestri{\omega}})_{\overline{ \restri{\omega}}(G)} \\
&=(A_\beta A_\alpha G_{\restri{\omega}}) \sqcup  G_{\crestri{\omega}}
\end{align*}
The second equality holds because repeated applications of position preservation (Lem.~\ref{lem:easy name preservation}) imply $\sp{\I_{A_\alpha G_{\restri{\omega}}}}\subseteq \restri{\omega}$.
\end{proof}

Using Lem~\ref{lem:extloc} and Lem~\ref{lem:*-extensivity} we can then immediately deduce by induction both $*$-extensivity and $*$-locality for any local rule. We illustrate in Ex.~\ref{ex:PS---*L} how $*$-locality holds for the particle system example.

\begin{Theorem}{$*$-locality and $*$-extensivity}{ star ext and star loc}
    If an operator $A_{(-)}$ is local, then it is also $*$-local and $*$-extensive.
\end{Theorem}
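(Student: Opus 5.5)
The plan is a simultaneous induction on $n$, proving that every local operator $A_{(-)}$ is both $n$-local and $n$-extensive. Since $*$-locality and $*$-extensivity quantify over valid sequences of arbitrary but finite length, this gives the theorem. The two lemmas just established are exactly the two legs of the induction step. Lem.~\ref{lem:*-extensivity} turns $n$-locality into $(n+1)$-extensivity. Lem.~\ref{lem:extloc} turns $(n+1)$-extensivity, together with plain locality, into $(n+1)$-locality.

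\emph{Base case.} I would start at $n=1$. Here $1$-locality is the definition of a local rule (Def.~\ref{def : locality}), since $\restr{x}{G}$ coincides with the neighbourhood of the length-one sequence $x$. For $1$-extensivity, suppose $G_{\restri{x}}\sqsubseteq H\sqsubseteq G$. Stability of $\restri{}$ gives $H\in\graphvalid{\restri{}}{x}$, so $x\in\valid{H}{A}$. Extensivity of $\restri{}$ then gives $G_{\restri{x}}=H_{\restri{x}}$. The empty sequence is handled trivially: $A_\varepsilon$ is the identity and $\restr{\varepsilon}{G}=\emptyset$.

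\emph{Inductive step.} Assume $A_{(-)}$ is $n$-local and $n$-extensive. Applying Lem.~\ref{lem:*-extensivity} to the $n$-local rule yields $(n+1)$-extensivity. Since $A_{(-)}$ is local, Lem.~\ref{lem:extloc} then yields $(n+1)$-locality. This closes the induction.

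\emph{Main obstacle.} The delicate point is to verify that the hypotheses of each lemma are met exactly at the level where they are invoked, so that the argument is not circular. Lem.~\ref{lem:extloc} relies internally on Lem.~\ref{lem:petit local implique grand local version générale}, applied to subsequences $\alpha$ and $\beta$ of length at most $n$. That requires $n$-extensivity together with $\alpha$-locality and $\beta$-locality, which the induction hypothesis supplies. I would make this explicit by phrasing the hypothesis of the induction as the conjunction ``$n$-local and $n$-extensive''. I would also note that Lem.~\ref{lem:extloc}'s own internal recurrence only needs lengths at most $n$, which is already available.

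Finally, given any valid $\omega$, take $n=|\omega|$ to conclude both $\omega$-locality and the extensivity statement for $\omega$.
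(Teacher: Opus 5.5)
Your proposal is correct and follows the paper's own argument. The paper also proceeds by an induction on $n$ starting from locality, alternating the lemma ``$n$-locality implies $(n+1)$-extensivity'' with the lemma ``$(n+1)$-extensivity implies $(n+1)$-locality''. Your explicit base case (stability and extensivity of $\mathcal{N}$) and your check that each lemma's hypotheses are available at the right level spell out what the paper calls ``immediate''.
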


\begin{example}{Particle system---$*$-locality}{PS---*L}
    The action of $A_{6745}$ on $G$ can be computed using only the information contained in $G_{\restri{6745}}$. In other words, we have $*$-locality, that is $A_{6745}G = A_{6745}G_{\restri{6745}}\sqcup G_{\overline{\restri{6745}}}$.\\
    \begin{tabular}{ccc}
        \begin{minipage}{0.49\linewidth}
            \centering
            \includegraphics[width=0.9\linewidth]{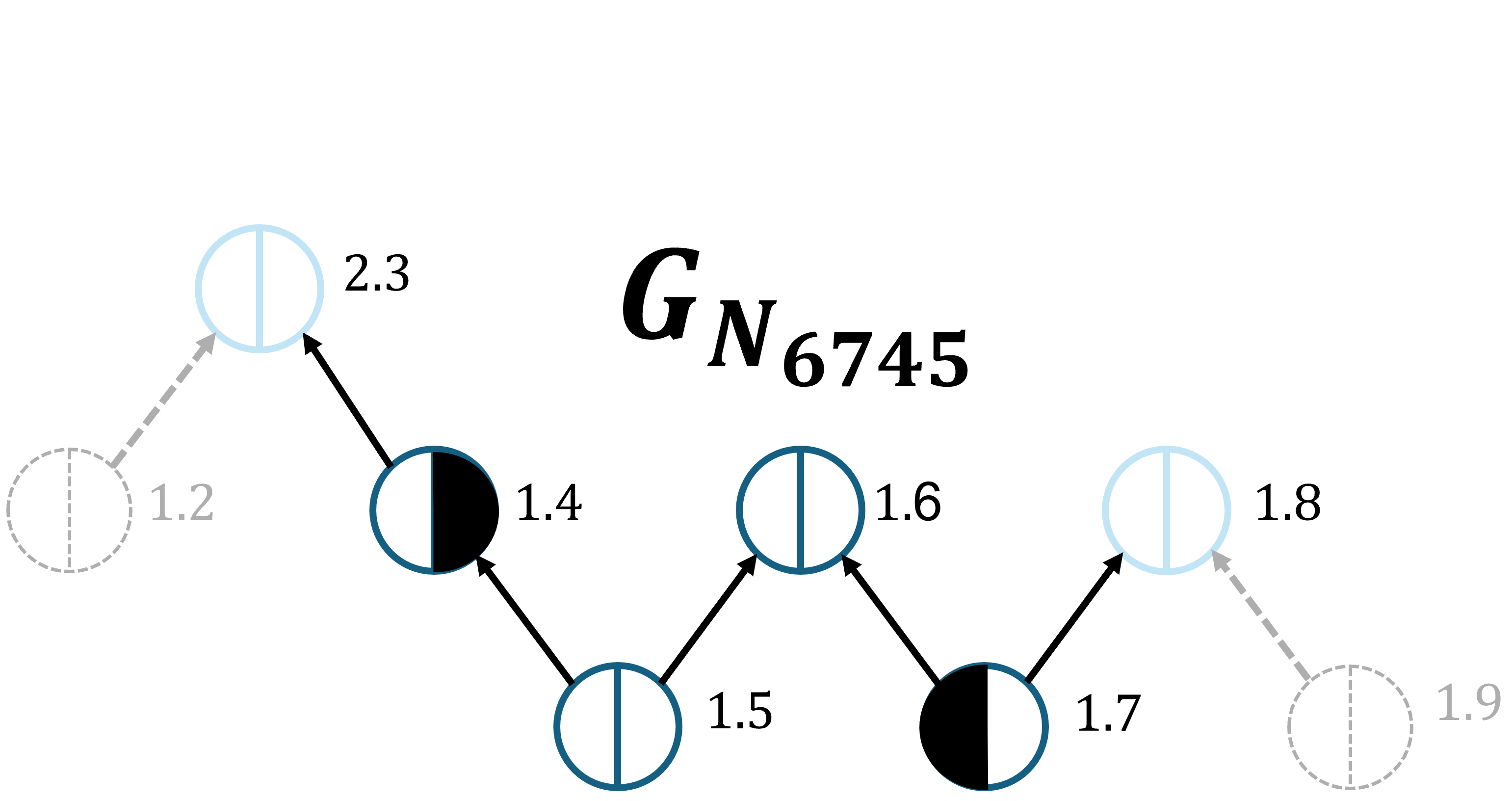}
        \end{minipage}
         &
         \begin{minipage}{0.49\linewidth}
            \centering
            \includegraphics[width=0.9\linewidth]{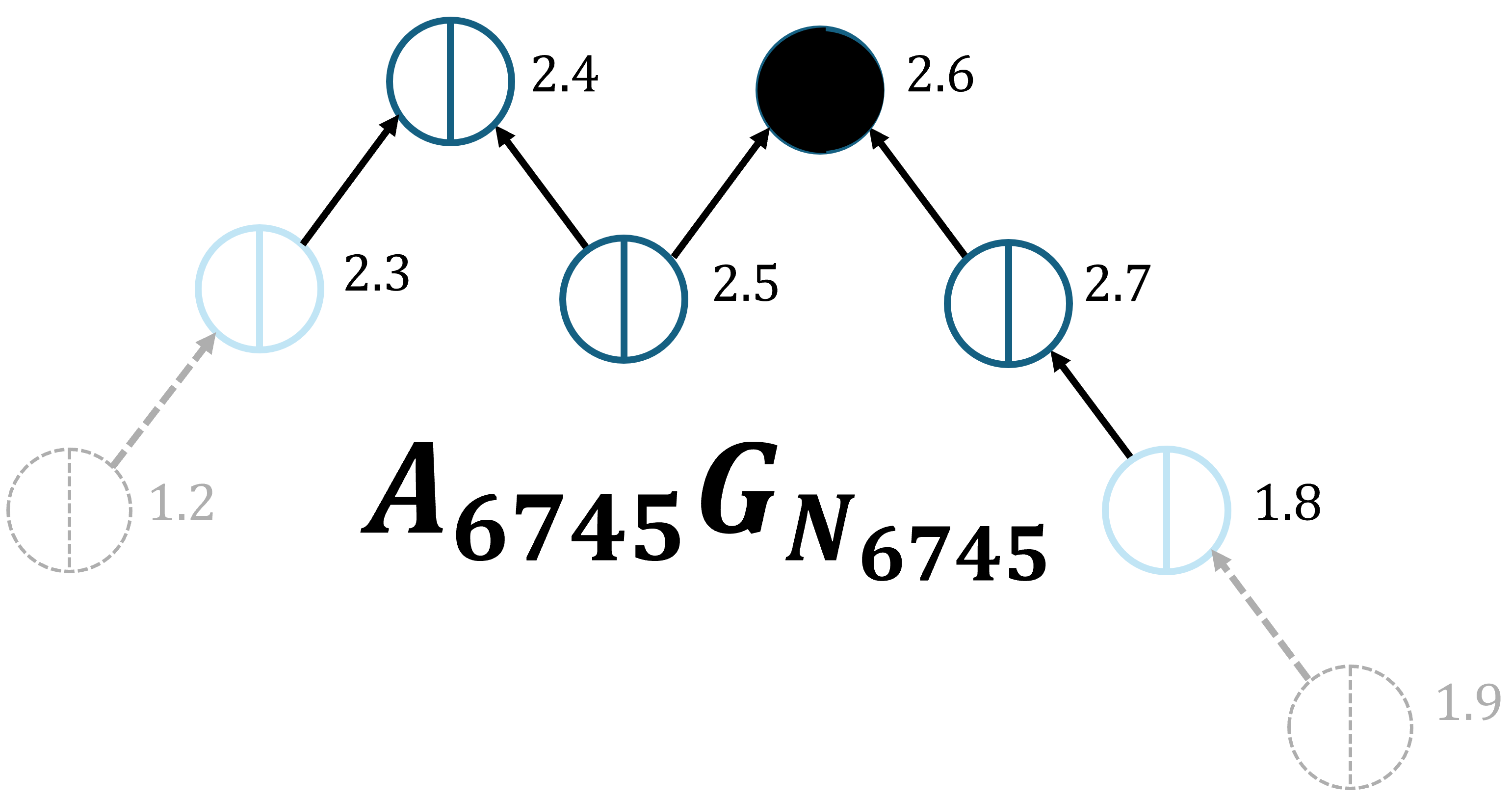}
        \end{minipage}
    \end{tabular}
\end{example}

Having proved $*$-locality, we can now derive in a straightforward way the same interface properties as before, from the well-definedness of the same $\sqcup$ decomposition.
Namely, we recover position preservation (Lem.~\ref{lem:easy name preservation}) and boundary preservation (Lem.~\ref{lem:border edges are preserved}), which in turn imply that only names of interior vertices are modified (Rem.~\ref{rem:boundaries are stable}).


\begin{corollary}{$*$-Position and Boundary preservation}{* easy name preservation}
  Let $A_{(-)}$ be a local rule, let $H\in\sshift$ and $\omega\in\valid{H}{A}$, and set $G=H_{\restri{\omega}}$. If $\V_G\subseteq \I_H$, then
  \begin{align*}
      \sp{\I_{A_\omega G}}&\subseteq\sp{\I_G},\\
      \forall u\in \I_G,\ \sp{u}\in\restri{\omega}\setminus\restri{\omega}^-&\implies u\in \I_{A_\omega G},\\
      \dangling{G}&=\dangling{A_\omega G}.
  \end{align*}
  Consequently, all new names are at positions of $\restri{\omega}^-$, i.e. $\sp{\I_{A_\omega G}\setminus \I_G}\subseteq\restri{\omega}^-$, where the interior $\restri{\omega}^-$ is computed in $H$.
\end{corollary}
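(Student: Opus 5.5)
We mirror the single-step proofs of Lem.~\ref{lem:easy name preservation} and Lem.~\ref{lem:border edges are preserved}, replacing one-step locality by $*$-locality (Th.~\ref{th: star ext and star loc}), so that the decomposition
\[
A_\omega H=(A_\omega G)\sqcup H_{\overline{\restri{\omega}}}
\]
holds with $G=H_{\restri{\omega}}$. By $*$-extensivity, $\omega\in\valid{G}{A}$ and $\restr{\omega}{G}=\restr{\omega}{H}$. Hence $G_{\restri{\omega}}=G$, and $A_\omega G$ is the object appearing in the decomposition. Since $\I_{A_\omega G}$ and $\I_{H_{\overline{\restri{\omega}}}}$ are disjoint (this is checked in the first step), Lem.~\ref{lem:compatible subgraphs} applies to the well-defined union. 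Conditions~1 and~2 are exactly the interface constraints we need.

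\textbf{Position preservation.} We do not redo the renaming/disjoint-union construction. Instead we induct on $|\omega|$ using Lem.~\ref{lem:easy name preservation} at each step: $\sp{\I_{A_{x}A_\alpha G}}\subseteq\sp{\I_{A_\alpha G}}\subseteq\dots\subseteq\sp{\I_G}$. This gives the first claim. It also shows $\sp{\I_{A_\omega G}}\subseteq\restri{\omega}$, which is disjoint from the positions of $\I_{H_{\overline{\restri{\omega}}}}$, so the disjointness hypothesis of Lem.~\ref{lem:compatible subgraphs} holds.

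\textbf{Boundary vertices and border edges.} Take $t.y\in\I_G$ with $y\in\restri{\omega}\setminus\restri{\omega}^-$. There is a border edge $e$ of $G$ joining $t.y$ to some $v\in\B_G\subseteq\I_H$, so $v\in\I_{H_{\overline{\restri{\omega}}}}$ and $e\in\E_{H_{\overline{\restri{\omega}}}}$. The union $A_\omega H$ contains $e$, and the only internal vertex at position $y$ lives in $A_\omega G$. Unicity of positions, together with $e$ being in $H_{\overline{\restri{\omega}}}$ with extremity $t.y$ on the $\restri{\omega}$ side, forces $t.y\in\V_{A_\omega G}$, hence $t.y\in\I_{A_\omega G}$ by condition~2 of Lem.~\ref{lem:compatible subgraphs}. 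The inclusion $\dangling{G}\subseteq\dangling{A_\omega G}$ follows identically. For the converse, take an edge of $A_\omega G$ to a border vertex. We need $\B_{A_\omega G}=\B_G$; this follows either by induction from border preservation in Def.~\ref{def : locality} applied stepwise, or directly from the fact that border vertices of $A_\omega G$ must be internal to $H_{\overline{\restri{\omega}}}$. The edge then lies in $\E_{H_{\overline{\restri{\omega}}}}$, and compatibility of $G$ with $H_{\overline{\restri{\omega}}}$ puts it in $\E_G$. Finally, new names lie at positions of $\restri{\omega}$ (by position preservation) but not at boundary positions (which keep their names), and so they lie in $\restri{\omega}^-$, exactly as in Rmk.~\ref{rem:boundaries are stable}.

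\textbf{Main obstacle.} The delicate point is that $\restri{\omega}$ is a union of neighbourhoods computed on intermediate graphs $A_\alpha H$. Its boundary and interior must therefore be taken in $H$, and border preservation is only stated per step for disks of the intermediate graphs, not for $G$ itself. I would avoid tracking intermediate borders. Everything should be derived solely from the well-definedness of the single $*$-local decomposition above plus stepwise position preservation. Where $\B_{A_\omega G}=\B_G$ is needed, I would first try to obtain it from Lem.~\ref{lem:petit local implique grand local version générale} with $X=\restri{\omega}$ at each step, which keeps every intermediate border inside $\I_{H_{\overline{\restri{\omega}}}}$.
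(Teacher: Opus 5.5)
Your route is the paper's: you obtain $A_\omega H=(A_\omega G)\sqcup H_{\crestri{\omega}}$ from $*$-locality (with $*$-extensivity giving $\omega\in\valid{G}{A}$ and $G_{\restri{\omega}}=G$) and replay the single-step interface arguments. Iterating Lem.~\ref{lem:easy name preservation} for the first inclusion, rather than redoing the renaming construction, is a legitimate shortcut.

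The step that fails is the claim that unicity of positions ``forces $t.y\in\V_{A_\omega G}$''. Unicity only forbids $A_\omega G$ from containing some $t'.y$ with $t'\neq t$. It does not forbid $A_\omega G$ from having no vertex at position $y$ at all. In that case $t.y$ simply stays in $A_\omega H$ as a border vertex hanging on $e$, and condition~2 of Lem.~\ref{lem:compatible subgraphs} is silent, since $t.y$ is no longer internal to $A_\omega G$. Your inclusion $\dangling{G}\subseteq\dangling{A_\omega G}$ and your ``direct'' argument for $\B_{A_\omega G}=\B_G$ both rest on this step. You are not missing an idea the paper supplies: its proof defers to Lem.~\ref{lem:border edges are preserved}, whose proof makes the same jump (``$e\in\E_{A_xH}$, hence $t.y\in\I_{A_xG}$''). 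That jump cannot be justified from Def.~\ref{def : locality} as stated.

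Here is a counterexample. Take $\sshift=\calG$. Let $\restr{x}{G}$ consist of $x$ and the positions of the out-neighbours of the vertex at $x$, defined on graphs where that vertex is past and all its out-neighbours are internal. Let $D$ be the disk with internal vertices $t.x,t.y,t.y'$, border vertex $v$, and edges $t.x\to t.y$, $t.x\to t.y'$, $t.y\to v$, $t.y'\to v$ (on distinct ports of $v$). Let $A_x$ delete $t.y$ and its two edges on disks of this shape, and act as the identity on all other disks.
\begin{itemize}
\item This rule is renaming-invariant and border-preserving.
\item $A_xD\sqcup H_{\crestri{x}}$ is a graph for every $H$ with $H_{\restri{x}}=D$, since it only loses a vertex and edges of $H$; at worst $t.y$ survives as a border vertex attached to $v$.
\item With $H$ equal to $D$ but with $v$ internal, the hypotheses hold for $\omega=x$. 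Yet the boundary vertex $t.y$ is not in $\I_{A_xD}$, and the dangling edge $t.y\to v$ is lost.
\item For longer $\omega$, the same deletion turns internal vertices into new border vertices, which also breaks the stepwise induction for $\B_{A_\omega G}=\B_G$.
\end{itemize}

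So the second and third claims need an extra assumption, e.g.\ that the local action keeps the dangling edges of the disk. Under that assumption, an induction along $\omega$ completes your argument.

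The first claim stands as you prove it. The final claim can be obtained without the second one. Suppose $t'.y\in\I_{A_\omega G}\setminus\I_G$ with $y\notin\restri{\omega}^-$. The vertex $t.y\in\I_G$ at that position has an $H$-neighbour $w\in\V_G\subseteq\I_H$ outside $\restri{\omega}$, so $t.y\in\V_{H_{\crestri{\omega}}}$. Then $A_\omega H$ would contain both $t.y$ and $t'.y$, contradicting condition~1 of Lem.~\ref{lem:compatible subgraphs}.
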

\begin{proof}
    Renaming invariance is preserved by composition. Moreover, each step replaces a disk while preserving its border and leaving its context unchanged; hence, by induction on $|\omega|$, $A_\omega$ preserves the border of the disk $G_{\restri{\omega}}$. Together with $*$-locality (Th.~\ref{th: star ext and star loc}), these are precisely the properties used in the proofs of position preservation (Lem.~\ref{lem:easy name preservation}) and boundary preservation (Lem.~\ref{lem:border edges are preserved}). Applying the same arguments with $x$ replaced by $\omega$ gives the three stated properties. The final inclusion then follows as in Rem.~\ref{rem:boundaries are stable}.
\end{proof}

\section{Conclusion}\label{sec:conclusion}

\paragraph*{Summary of contributions.}

We introduced graphs that can be thought of as space-like cuts of space-time diagrams. The vertices have names of the form $u=t.x$ and can be thought of as events; the edges encode dependencies between events, i.e. if $u=t.x$ points to $v=t'.y$, then $v$ is ahead in time of $u$, awaiting information from $u$. The action of a local rule $A_x$ on $u$ is non-trivial only if $u$ is minimal: it disposes of it by communicating its information to its dependants, and creates a fresh vertex $u'=(t+1).x$ in some provisional state. We also characterised the structural effect of such an update: local rules preserve positions (Lem.~\ref{lem:easy name preservation}) and, under the extendability hypothesis, preserve boundary vertices and dangling edges (Lem.~\ref{lem:border edges are preserved}). Consequently, names can be modified only in the interior of the selected neighbourhood (Rem.~\ref{rem:boundaries are stable}).

We developed the particle-system and time-dilation examples introduced in~\cite{ArrighiSpaceTimeDetWRLA}, giving precise local rules and space-time diagrams. We also formalised the cellular-automaton simulation sketched there: Th.~\ref{th:Simulation of synchronous cellular automata} proves that any one-dimensional cellular automaton can be simulated via a marching-soldiers encoding, in which the DAG of dependencies tracks the relative advancement of computations across space. The time-dilation rule (Sec.~\ref{sec:exV2}), in turn, illustrates fundamentally asynchronous behaviour without a natural global-clock interpretation.

Our main technical result is that locality is preserved under sequential composition of rule applications: any local rule is automatically $*$-local and $*$-extensive (Th.~\ref{th: star ext and star loc}), meaning that the neighbourhood and locality properties that hold for a single rule application extend canonically to arbitrary valid sequences. The proof proceeds by an inductive bootstrapping between $n$-extensivity and $n$-locality (Lem.~\ref{lem:*-extensivity}, \ref{lem:extloc}), relying on the monotony of neighbourhood schemes (Lem.~\ref{lem:monotony}) and yielding as corollaries the preservation of boundary and position also along sequences (Cor.~\ref{cor:* easy name preservation}).


\paragraph*{Open questions.}

We proved that causal graph rewriting can simulate any one-dimensional cellular automaton. A natural question is whether the simulation preserves the global space-time structure, not just the local behaviour: do the space-times of local rules that simulate cellular automata correspond to expansive graph subshifts~\cite{ArrighiSubshifts}, in the same way that space-times of cellular automata correspond to expansive tilings? One difficulty is that the naming conventions we adopt for vertices systematically prevent space-times from being cyclic, even when the dynamics described is periodic---whereas the corresponding graph subshift will in fact be cyclic.

A more foundational question concerns our notion of validity. We require that each updated position be past at the time of its update. One may ask whether a more permissive notion---allowing updates to positions that are not yet past but whose neighbourhood is already determined---could yield a richer class of dynamics whilst preserving $*$-locality.

\paragraph*{Perspectives.}

\noindent{\em Lightweight synchronisation for parallel computation.} An immediate practical motivation for asynchronous models is that clock synchronisation is an expensive overhead in parallel simulation of dynamical systems and in distributed computation. The causal graph rewriting framework provides a principled alternative: by encoding computational dependencies directly in the DAG structure, it allows different regions to advance at their own pace without global coordination, while the dependency edges ensure that no event is processed before its inputs are ready.

\noindent{\em Space-time determinism.} A deeper theoretical question is whether the non-determinism introduced by asynchrony is entirely harmless: do any two schedulings agree on the state of every event they both describe? This property is formalised as the main result of the short companion paper~\cite{ArrighiSpaceTimeDetWRLA}. The $*$-locality and $*$-extensivity results of the present article are natural prerequisites for that development: they reduce the comparison of two schedulings $\omega$ and $\omega'$ to a purely local question---two schedulings agree on the state of a vertex $v$ whenever their respective neighbourhoods $\restr{\omega}{G}$ and $\restr{\omega'}{G}$ overlap at $v$ in a compatible way---so that space-time determinism follows from a local condition on the rule ensuring that such overlaps are always compatible. A separate article will unravel the additional properties of the model needed for space-time determinism and provide a self-contained presentation of the proof material underlying~\cite{ArrighiSpaceTimeDetWRLA}.

\noindent{\em Reversibility and quantum extension.} Both cellular automata and causal graph dynamics have followed the trajectory from classical to reversible to quantum~\cite{KariBlock,ArrighiRCGD,ArrighiOverview,ArrighiQNT}. Causal graph rewriting is set to follow the same path: the reversible subfamily is the natural next step, and constitutes a prerequisite for the quantum extension.

\paragraph*{Acknowledgements}~
This project was partially funded by the European Union through the MSCA SE project QCOMICAL.
It was also funded by the French National Research Agency (ANR): project TaQC ANR-22-CE47-0012 and within the framework of `Plan France 2030', under the research projects EPIQ ANR-22-PETQ-0007, OQULUS ANR-23-PETQ-0013, HQI-Acquisition ANR-22-PNCQ-0001 and HQI-R\&D ANR-22-PNCQ-0002. 
The project was also supported by the WOST, WithOut SpaceTime project (\url{https://withoutspacetime.org}), grant ID\# 63683 from the John Templeton Foundation (JTF). The opinions expressed in this work are those of the author(s) and do not necessarily reflect the views of the John Templeton Foundation.
Finally, this work was supported by the F.R.S.-FNRS under project CHEQS within the Excellence of Science (EOS) program.

\bibliography{biblio}

\end{document}